\DeclareSymbolFont{usualmathcal}{OMS}{cmsy}{m}{n}
\DeclareSymbolFontAlphabet{\mathcal}{usualmathcal}
\newtheorem{thrm}{Theorem}[section]
\newtheorem{prop}[thrm]{Proposition}
\newtheorem{crl}[thrm]{Corollary}
\newtheorem{lemma}[thrm]{Lemma}
\theoremstyle{remark}
\newtheorem{rmk}[thrm]{Remark}
\newtheorem{exam}[thrm]{Example}
\newcommand{\nc}{\newcommand}
\nc{\al}{\alpha}
\nc{\be}{\beta}
\nc{\eps}{\epsilon}
\nc{\veps}{\varepsilon}
\nc{\ga}{\gamma}
\nc{\Ga}{\Gamma}
\nc{\ka}{\kappa}
\nc{\vka}{\varkappa}
\nc{\la}{\lambda}
\nc{\La}{\Lambda}
\nc{\del}{\delta}
\nc{\om}{\omega}
\nc{\si}{\sigma}
\nc{\Si}{\Sigma}
\nc{\vsi}{\varsigma}
\nc{\Ups}{\upsilon}
\nc{\vphi}{\varphi}
\nc{\tl}{\tilde}
\nc{\ta}{{\tilde{a}}}
\nc{\at}{{\tilde{a}}}
\nc{\tu}{{\tilde{u}}}
\nc{\tv}{{\tilde{v}}}
\nc{\tw}{{\tilde{w}}}
\nc{\mfg}{\mathfrak{g}}
\nc{\mfh}{\mathfrak{h}}
\nc{\mfso}{\mathfrak{so}}
\nc{\mfsp}{\mathfrak{sp}}
\nc{\mfgl}{\mathfrak{gl}}
\nc{\mfsl}{\mathfrak{sl}}
\nc{\mfL}{\mathfrak{L}}
\nc{\mcL}{\mathcal{L}}
\nc{\End}{\mathrm{End}}
\nc{\Ext}{\mathrm{Ext}}
\nc{\Hom}{\mathrm{Hom}}
\nc{\Ima}{\mathrm{Image}}
\nc{\Ind}{\mathrm{Ind}}
\nc{\Ker}{\mathrm{Ker}}
\nc{\RHom}{\mathrm{RHom}}
\nc{\Sym}{\mathrm{Sym}}
\nc{\ddeg}{\mathtt{deg}}
\nc{\dimm}{\mathtt{dim}}
\nc{\mf}{\mathfrak}
\nc{\mc}{\mathcal}
\nc{\ms}{\mathsf}
\nc{\bb}{\mathbb}
\nc{\mr}{\mathscr}
\nc\mt[1]{{\tt #1}}
\nc{\wh}{\widehat}
\nc{\wt}{\widetilde}
\nc{\mcA}{\mc{A}}
\nc{\mcB}{\mc{B}}
\nc{\mcU}{\mc{U}}
\nc{\mfU}{\mf{U}}
\nc{\mcP}{\mc{P}}
\nc{\mcQ}{\mc{Q}}
\nc{\mcX}{\mc{X}}
\nc{\mcZ}{\mc{Z}}
\nc{\mcT}{\mc{T}}
\nc{\mcG}{\mc{G}}
\nc{\msS}{\ms{S}}
\nc{\mss}{\ms{s}}
\nc{\msc}{\ms{c}}
\nc{\msd}{\ms{d}}
\nc{\msv}{\ms{v}}
\nc{\msq}{\ms{q}}
\nc{\msw}{\ms{w}}
\nc{\mcS}{\mc{S}}
\nc{\mcI}{\mc{I}}
\nc{\R}{\mathbb{R}}
\nc{\Q}{\mathbb{Q}}
\nc{\C}{\mathbb{C}}
\nc{\N}{\mathbb{N}}
\nc{\Z}{\mathbb{Z}}
\nc{\ot}{\otimes}
\nc{\op}{\oplus}
\nc{\ol}{\overline}
\DeclareMathOperator{\id}{id}
\nc{\Res}[1]{\underset{\;{#1}\;}{\rm Res}}
\nc{\equ}[1]{\begin{equation}#1\end{equation}}
\nc{\eqa}[1]{\begin{equation}\begin{alignedat}{50}#1\end{alignedat}\end{equation}}
\nc{\eqn}[1]{\begin{equation*}\begin{alignedat}{50}#1\end{alignedat}\end{equation*}}
\nc{\eqg}[1]{\begin{equation}\begin{gathered}#1\end{gathered}\end{equation}}
\nc{\ali}[1]{\begin{alignat}{50}#1\end{alignat}}
\nc{\als}[1]{\begin{subequations}\begin{alignat}{50}#1\end{alignat}\end{subequations}}
\nc{\aln}[1]{\begin{alignat*}{50}#1\end{alignat*}}
\nc{\gat}[1]{\begin{gather}#1\end{gather}}
\nc{\gas}[1]{\begin{subequations}\begin{gather}#1\end{gather}\end{subequations}}
\nc{\gan}[1]{\begin{gather*}#1\end{gather*}}
\nc\el{\nonumber\\}
\nc\nn{\nonumber}
\DeclareMathOperator{\tr}{tr}
\nc{\qu}{\quad}
\nc{\qq}{\qquad}
\nc{\da}{\dot{a}}
\nc{\dda}{\ddot{a}}
\nc{\key}{{\mathscr{k}}}
\nc{\ley}{{\mathscr{l}}}
\nc{\hey}{{\mathscr{h}}}
\nc{\emm}{{\mathscr{m}}}
\nc{\sca}{\mathscr{a}}
\nc{\scb}{\mathscr{b}}
\nc{\scc}{\mathscr{c}}
\nc{\scd}{\mathscr{d}}
\numberwithin{equation}{section}
\begin{document}

\begin{center}{\Large \textbf{
Bethe vectors and recurrence relations\\ for twisted Yangian based models\\
}}\end{center}

\begin{center}
V. Regelskis\textsuperscript{12$\star$}
\end{center}

\begin{center}
{\bf 1} Department of Physics, Astronomy and Mathematics, University of Hertfordshire,\\ Hatfield AL10 9AB, UK, and
\\
{\bf 2} Institute of Theoretical Physics and Astronomy, Vilnius University,\\ Saul\.etekio av.~3, Vilnius 10257, Lithuania
\\
{\textsuperscript{$\star$}\small\sf vidas.regelskis@gmail.com}
\end{center}

\begin{center}
\today
\end{center}


\section*{Abstract}
{\bf
We study Olshanski twisted Yangian based models, known as one-dimensional ``soliton non-preserving'' open spin chains, by means of algebraic Bethe ansatz. The even case, when the bulk symmetry is $\mfgl_{2n}$ and the boundary symmetry is $\mfsp_{2n}$ or $\mfso_{2n}$, was studied in \cite{GMR19}. In the present work, we focus on the odd case, when the bulk symmetry is $\mfgl_{2n+1}$ and the boundary symmetry is $\mfso_{2n+1}$. We explicitly construct Bethe vectors and present a more symmetric form of the trace formula. We use the composite model approach and $Y(\mfgl_n)$-type recurrence relations to obtain recurrence relations for twisted Yangian based Bethe vectors, for both even and odd cases.
}

\vspace{10pt}
\noindent\rule{\textwidth}{1pt}
\tableofcontents\thispagestyle{fancy}
\noindent\rule{\textwidth}{1pt}
\vspace{10pt}

\section{Introduction}

Twisted Yangian based models, known as one-dimensional ``soliton non-preserving'' open spin chains, were first investigated by means of analytic Bethe ansatz techniques \cite{Doi00,AA05,AC06a,AC06b} and more recently in \cite{ADK15}. Such models are known to play a role in Yang-Mills theories, where twisted Yangians emerge in the context of integrable boundary overlaps \cite{dLG19,Go24} and open fishchains \cite{GJP21}. 

A crucial step in understanding twisted Yangian based models is finding explicit expressions of Bethe vectors. In the case when the bulk symmetry is $\mfgl_{2n}$ and the boundary symmetry is $\mfsp_{2n}$ or $\mfso_{2n}$, this was achieved in \cite{GMR19} using algebraic Bethe anstaz techniques put forward in \cite{Rsh85,DVK87}. These techniques apply to the cases, when the $R$-matrix intertwining  monodromy matrices of the model can be written in a six-vertex block-form. The monodromy matrix is then also written in a block-form, in terms of matrix operators $A$, $B$, $C$, and $D$, that are matrix analogues of the conventional creation, annihilation and diagonal operators. Exchange relations between these matrix operators turn out to be reminiscent of those of the standard six-vertex model. Such techniques have been used to study $\mfso_{2n}$- and $\mfsp_{2n}$-symmetric spin chains in \cite{Rsh91,GP16,GR20a,GR20b,Reg22}. A more general framework of such techniques has recently been proposed in \cite{Ge24}.

In the present paper we extend the results of \cite{GMR19} to the odd case, when the bulk symmetry is $\mfgl_{2n+1}$ and the boundary symmetry is $\mfso_{2n+1}$. This extension is based on a simple observation that the generating matrix of the odd twisted Yangian $Y^+(\mfgl_{2n+1})$ can be decomposed into four overlapping $(n{+}1)\times(n{+}1)$-dimensional matrix operators satisfying the same exchange relations as those of $Y^+(\mfgl_{2n+2})$ thus allowing us to employ the same algebraic Bethe ansatz approach.
However, the overlapping introduces a new challenge since the middle entry of the generating matrix is now included in both $A$ and $B$ matrix operators leading to an uncertainty in the $AB$ exchange relation. This issue is resolved in the technical Lemma \ref{L:snn} stating action of the middle entry on Bethe vectors. Computing this action requires knowledge of recurrence relations for Bethe vectors. We use the composite model techniques together with the $Y(\mfgl_n)$-type recurrence relations found in \cite{HL17b} to obtain the $Y^\pm(\mfgl_{2n})$- and $Y^+(\mfgl_{2n+1})$-type recurrence relations.
The main results of this paper are presented in Theorem \ref{T:eig} and Propositions \ref{P:RR-even}~and~\ref{P:RR-odd}. 

The first main result, Theorem \ref{T:eig}, states that Bethe vectors, defined by formula \eqref{BV}, are eigenvectors of the transfer matrix, defined by formula \eqref{tm}, provided Bethe equations \eqref{BE1} and \eqref{BE2} hold. 
This Theorem is an extension of Theorems 4.3 and 4.4 in \cite{GMR19} to the odd case. Commutativity of transfer matrices is shown in Appendix~\ref{app:tm}. We~also found a more symmetric form of the trace formula for Bethe vectors derived in \cite{GMR19}. The new formula is presented in Proposition~\ref{P:tf}. Its main ingredient is the so-called ``master'' creation operator, defined by formula \eqref{wf}. Low rank examples of the ``master'' creation operator are presented in Example \ref{B-exam}.

\pagebreak

The second main result, Propositions \ref{P:RR-even}~and~\ref{P:RR-odd}, present recurrence relation for $Y^\pm(\mfgl_{2n})$- and $Y^+(\mfgl_{2n+1})$-based Bethe vectors, respectively. Schematically, they are of the form
\ali{
\Psi^{(m_{1},\, \ldots,\, m_{n})} &= \sum_{1\le i \le n} s_{i,2n-i+1} \Psi^{(m_{1},\, \ldots,\, m_{i-1},\, m_i-2,\, \ldots,\, m_{n-1}-2,\, m_{n}-1)}  \nn \\
& \qu + \sum_{1\le i < j \le n} \!(s_{i,2n-j+1} + s_{j,2n-i+1})\, \Psi^{(m_{1},\, \ldots,\, m_{i-1},\, m_i-1,\, \ldots,\, m_{j-1}-1,\, m_{j}-2,\, \ldots,\, m_{n-1}-2,\, m_{n}-1)} \label{RR-even-0} \\[-1.5em]
\intertext{in the even case and}
\Psi^{(m_{1},\, \ldots,\, m_{n})} &= \sum_{1\le i \le n} s_{i,n+1} \Psi^{(m_{1},\, \ldots,\, m_{i-1},\, m_i-1,\, \ldots,\, m_{n-1}-1,\, m_{n}-1)}  \nn \\
& \qu + \sum_{1\le i < n} (s_{i,n+2} + s_{n,n+i+2}) \, \Psi^{(m_{1},\, \ldots,\, m_{i-1},\, m_i-1, \ldots,\, m_{n-1}-1,\, m_{n}-2)}  \nn \\
& \qu + \sum_{1\le i \le n} s_{i,2n-i+2}\, \Psi^{(m_{1},\, \ldots,\, m_{i-1},\, m_i-2,\, \ldots,\, m_{n-1}-2,\, m_{n}-2)} \nn\\ 
& \qu + \sum_{1\le i < j < n} (s_{i,2n-j+2} + s_{j,2n-i+2}) \, \Psi^{(m_{1},\, \ldots,\, m_{i-1},\, m_i-1,\, \ldots,\, m_{j-1}-1,\, m_{j}-2, \ldots,\, m_{n-1}-2,\, m_{n}-2)}  \label{RR-odd-0}
}
in the odd case. Here $m_i$'s indicate excitation numbers associated with the $i$-th simple root of the boundary symmetry algebra, $s_{ij}$'s represent generating series of the twisted Yangian, and all scalar factors and spectral parameter dependencies are omitted. These relations are compatible with the weight grading of twisted Yangian (see Appendix \ref{app:wt}). Repeated application of relations \eqref{RR-even-0} and \eqref{RR-odd-0} allows us to express Bethe vectors $\Psi^{(m_{1},\, \ldots,\, m_{n})}$ in terms of those with no level-$n$ excitations, i.e.~with $m_n=0$. The latter Bethe vectors obey $Y(\mfgl_n)$-type recurrence relations of the form \cite{HL17b}
\equ{
\Psi^{(m_{1}, \ldots, m_{n-1},0)} = \sum_{1\le i<n} s_{i,n} \Psi^{(m_{1}, \ldots, m_{i-1}, m_i-1, \ldots, m_{n-1}-1,0)}
}
the explicit form of which is recalled in Appendix \ref{app:gln}. This feature is explained in Remark~\ref{R:gln}.
Recurrence relations \eqref{RR-even-0} and \eqref{RR-odd-0} are rather complex, especially in the odd case. However, low rank cases, explicitly stated in Examples \ref{ex:rec-even} and \ref{ex:rec-odd}, are manageable for practical computations. Moreover, the known results of $Y(\mfgl_n)$-based models \cite{HL17a,HL17b,HL18a,HL20} can be employed after the first step of nesting.

The paper is organised as follows. In Section \ref{sec:dp} we introduce notation used throughout the paper and recall the necessary algebraic properties of twisted Yangians.
In Section~\ref{sec:ba} we present the algebraic Bethe anstaz: Bethe vectors, their eigenvalues and the corresponding Bethe equations. We consider both even and odd cases simultaneously giving a coherent framework needed for obtaining recurrence relations. 
In~Section \ref{sec:rr} we obtain recurrence relations and present a proof of the technical Lemma~\ref{L:snn}. 
In~Appendix~\ref{sec:app} we recall weight grading of $Y^\pm(\mfgl_N)$, a recurrence relation for $Y(\mfgl_n)$-based Bethe vectors, and provide a proof of commutativity of transfer matrices.


\section{Definitions and preliminaries} \label{sec:dp}

Throughout the manuscript the middle alphabet letters $i,j,k,...$ will be used to denote integer numbers, letters $u, v, w, ...$ will denote either complex numbers or formal parameters, and letters $a$ and $b$ (often decorated with additional indices) will be used to label vector spaces.


\subsection{Lie algebras} \label{sec:lie}

Choose $N \ge 2$. Let $\mfgl_N$ denote the general linear Lie algebra and let $e_{ij}$ with $1\le i,j \le N$ be the standard basis elements of $\mfgl_N$ satisfying
\equ{
[e_{ij}, e_{kl}] = \del_{jk} e_{il} - \del_{il} e_{kj}. \label{[e,e]}
}
The orthogonal Lie algebra $\mfso_N$ and the symplectic Lie algebra $\mfsp_N$ can be regarded as subalgebras of $\mfgl_N$ as follows. For any $1 \le i,j\le N$ set $\theta_{ij}:=\theta_i\theta_j$ with $\theta_i:=1$ in the orthogonal case and $\theta_i:=\del_{i>N/2}-\del_{i\le N/2}$ in the symplectic case.
Introduce elements $f_{ij} := e_{ij} - \theta_{ij} e_{\ol{\jmath}\,\ol{\imath}}$ with $\ol{\imath} := N-i+1$ and $\ol{\jmath} := N-j+1$. These elements satisfy the relations 
\gat{ 
\label{[f,f]}
[f_{ij},f_{kl}] = \del_{jk} f_{il} - \del_{il} f_{kj} + \theta_{ij} (\del_{j \bar l} f_{k \ol{\imath}} - \del_{i \bar k} f_{\ol{\jmath}\, l}) ,
\\
\label{f+f=0}
f_{ij} + \theta_{ij} f_{\ol{\jmath}\,\ol{\imath}}=0,
}
which in fact are the defining relations of $\mfso_N$ and $\mfsp_N$. It will be convenient to denote both algebras~by~$\mfg_N$.
Write $N=2n$ or $N=2n+1$.
In this work we will focus on the following chain of Lie algebras 
\[
\mfgl_{N} \supset \mfg_{N} \supset \mfgl_{n} \supset \mfgl_{n-1} \supset \cdots \supset \mfgl_{2} ,
\]
where  $\mfgl_{n}$, $\mfgl_{n-1}$, \ldots, $\mfgl_{2}$ are subalgebras of $\mfg_N$ generated by $f_{ij}$ with $1\le i,j \le k$ and $k=n,n-1,\ldots,2$, respectively.


\subsection{Matrix operators} \label{sec:matrix}

For any $k \in \N$ let $E^{(k)}_{ij}\in\End(\C^{k})$ with $1\le i,j\le k$ denote the standard matrix units with entries in $\C$ and let $E^{(k)}_i\in\C^k$ with $1\le i\le k$ denote the standard basis vectors of $\C^{k}$ so that $E^{(k)}_{ij} E^{(k)}_l = \del_{jl} E^{(k)}_i$. We will frequently use the barred index notation
\equ{
E^{(k)}_{\ol{\imath}\,\ol{\jmath}} := E^{(k)}_{k-i+1,k-j+1} , \qu E^{(k)}_{\ol{\imath}} := E^{(k)}_{k-i+1}.
\label{E-bar}
}
Introduce matrix operators
\equ{
I^{(k,k)} := \sum_{i,j} E^{(k)}_{ii} \ot E^{(k)}_{jj}, \qq 
P^{(k,k)} := \sum_{i,j} E^{(k)}_{ij} \ot E^{(k)}_{ji}, \qq 
Q^{(k,k)} := \sum_{i,j} E^{(k)}_{ij} \ot E^{(k)}_{\ol{\imath}\,\ol{\jmath}} ,
\label{IPQ}
}
where the tensor product is defined over $\C$. We will always assume that the summation is over all admissible values, if not stated otherwise.
Note that the operator $Q^{(k,k)}$ is an idempotent operator, $(Q^{(k,k)})^2 = k\,Q^{(k,k)}$, obtained by partially transforming the permutation operator $P^{(k,k)}$ with the transposition $\om : E^{(k)}_{ij} \mapsto E^{(k)}_{\ol{\jmath}\,\ol{\imath}}$, that is, $Q^{(k,k)} = (\id \ot \om)(P^{(k,k)}) = (\om \ot \id)(P^{(k,k)})$. 

Next, we introduce a matrix-valued rational function 
\equ{
R^{(k,k)}(u) := I^{(k,k)} - u^{-1} P^{(k,k)}  \label{R(u)}
}
called the {\it Yang's $R$-matrix}. It is a solution of the quantum Yang-Baxter equation in $\C^{k}\ot\C^{k}\ot \C^{k}$:
\equ{
R^{(k,k)}_{12}(u-v)\,R^{(k,k)}_{13}(u-z)\,R^{(k,k)}_{23}(v-z) = R^{(k,k)}_{23}(v-z)\,R^{(k,k)}_{13}(u-z)\,R^{(k,k)}_{12}(u-v). \label{YBE}
}
Here the subscript notation indicates the tensor spaces the matrix operators act on. We will use such a subscript notation throughout the manuscript. 
We will also make use the partially $\om$-transposed $R$-matrix
\equ{
\label{Rw(u)}
\wh{R}^{(k,k)}(u) := (\id \ot \om)(R^{(k,k)}(u))  = I^{(k,k)} - u^{-1} Q^{(k,k)}  
}
satisfying a transposed version of \eqref{YBE}:
\equ{
\label{tYBE}
R^{(k,k)}_{12}(u-v)\,\wh{R}^{(k,k)}_{23}(v-z)\,\wh{R}^{(k,k)}_{13}(u-z) = \wh{R}^{(k,k)}_{13}(u-z)\,\wh{R}^{(k,k)}_{23}(v-z)\,R^{(k,k)}_{12}(u-v) .
}


\subsection{Twisted Yangian \texorpdfstring{$Y^\pm(\mfgl_N)$}{} } \label{sec:TY}

We briefly recall the necessary details of the ``$\rho$-shifted'' twisted Yangian $Y^\pm(\mfgl_N)$ adhering closely to \cite{AC06a,GMR19} (see also \cite{Ols92} and Chapters 2 and 4 in \cite{Mo07}); here the upper (resp.~lower) sign in $\pm$ corresponds to the orthogonal (resp.~symplectic) case. The parameter $\rho\in\C$ is introduced to accommodate applications to Yang-Mills theories and condensed matter systems, where $\rho$ plays a role of a boundary parameter, and integrable overlaps, where $\rho$ appears as an integer parameter in the nesting procedure. 

Twisted Yangian $Y^\pm(\mfgl_N)$ is a unital associative $\C$-algebra with generators $s_{ij}[r]$ where $1 \le i,j \le N$ and $r \in \N$. The defining relations, written in terms of the generating series $s_{ij}(u) := \del_{ij} + \sum_{r\ge 1} s_{ij}[r]\,u^{-r}$, where $u$ is a formal variable, are 
\ali{ 
[\,s_{ij}(u),s_{kl}(v)]&=\frac{1}{u-v}\Big(s_{kj}(u)\,s_{il}(v)-s_{kj}(v)\,s_{il}(u)\Big)\nn \\
{}& \qu -\frac{1}{u-\tl{v}}\, \Big( \theta_{j \bar k}\,s_{i \bar k}(u)\, s_{\ol{\jmath} l}(v) - \theta_{i \bar l}\,s_{k \ol{\imath}}(v)\, s_{\bar l j}(u) \Big) \nn \\
{}& \qu +\frac{1}{(u-v) (u-\tl{v})}\, \theta_{i\,\ol{\jmath}} \Big( s_{k\ol{\imath}}(u)\, s_{\ol{\jmath} l}(v)-s_{k\ol{\imath}}(v)\, s_{\ol{\jmath} l}(u) \Big) \label{[s,s]}
}
and
\equ{
\theta_{ij}\, s_{\ol{\jmath}\,\ol{\imath}}(\tl{u}) = s_{ij}(u) \pm \frac{s_{ij}(u) - s_{ij}(\tl{u})}{u-\tl{u}} . \label{s-symm}
}
Here $\ol{\imath} = N-i+1$,  $\ol{\jmath} = N-j+1$, etc., and $\tl{u}:= -u-\rho$, $\tl{v}:= -v-\rho$. 
These relations can be cast in a matrix form as follows. Combine the series $s_{ij}(u)$ into the generating matrix
\equ{
\label{S(u)}
S^{(N)}(u) := \sum_{i,j} E^{(N)}_{ij} \ot s_{ij}(u) 
}
The defining relations \eqref{[s,s]} and \eqref{s-symm} are then equivalent to the twisted reflection equation 
\ali{
& R^{(N,N)}_{12}(u-v)\,S^{(N)}_1(u)\,\wh{R}^{(N,N)}_{12}(\tl{v}-u)\,S^{(N)}_2(v) \el
& \qq = S^{(N)}_2(v)\,\wh{R}^{(N,N)}_{12}(\tl{v}-u)\,S^{(N)}_1(u)\,R^{(N,N)}_{12}(u-v) \label{RE}
}
and the symmetry relation 
\ali{
\om(S^{(N)}(\tl{u})) = S^{(N)}(u) \pm \frac{S^{(N)}(u)-S^{(N)}(\tl{u})}{u-\tl{u}} \,. \label{symm}
}


\subsection{Block decomposition} \label{sec:block}

Set $\hat{n} := n$ when $N=2n$ and $\hat{n} := n + 1$ when $N=2n+1$. Then define $\hat n \times \hat n$ dimensional matrix operators
\eqa{
A^{(\hat{n})}(u) &= \sum_{i,j} E^{(\hat{n})}_{ij} \ot s_{ij}(u) , \qq
& B^{(\hat{n})}(u) &= \sum_{i,j} E^{(\hat{n})}_{ij} \ot s_{i,n+j}(u) , \\
C^{(\hat{n})}(u) &= \sum_{i,j} E^{(\hat{n})}_{ij} \ot s_{n+i,j}(u), \qq
& D^{(\hat{n})}(u) &= \sum_{i,j} E^{(\hat{n})}_{ij} \ot s_{n+i,n+j}(u) .
\label{ABCD}
}
These operators are matrix analogues of the conventional $a$, $b$, $c$ and $d$ operators of the six-vertex type algebraic Bethe ansatz. 
The exchange relations that we will need are \cite{GMR19}:
\ali{
& A^{(\hat{n})}_b(v)\, B^{(\hat{n})}_a(u) = R^{(\hat{n},\hat{n})}_{ab}(u-v)\, B^{(\hat{n})}_a(u)\, \wh{R}^{(\hat{n},\hat{n})}_{ab}(\tl{v}-u)\, A^{(\hat{n})}_b(v) \el & \hspace{2.15cm} + \frac{P^{(\hat{n},\hat{n})}_{ab} B^{(\hat{n})}_a(v)\, \wh{R}^{(\hat{n},\hat{n})}_{ab}(\tl{v}-u)\, A^{(\hat{n})}_b(u)}{u-v} \mp \frac{B^{(\hat{n})}_b(v)\, Q^{(\hat{n},\hat{n})}_{ab} D^{(\hat{n})}_a(u)}{u-\tl{v}} , \!\!\label{AB} 
\\[0.25em] 
& R^{(\hat{n},\hat{n})}_{ab}(u-v)\, B^{(\hat{n})}_a(u)\, \wh{R}^{(\hat{n},\hat{n})}_{ab}(\tl{v}-u)\, B^{(\hat{n})}_b(v) \el & \hspace{3.6cm} = B^{(\hat{n})}_b(v)\, \wh{R}^{(\hat{n},\hat{n})}_{ab}(\tl{v}-u)\, B^{(\hat{n})}_a(u)\, R^{(\hat{n},\hat{n})}_{ab}(u-v) , \label{BB} \allowdisplaybreaks
\\[0.25em]
& R^{(\hat{n},\hat{n})}_{ab}(u-v)\, A^{(\hat{n})}_a(u)\, A^{(\hat{n})}_b(v) - A^{(\hat{n})}_b(v)\, A^{(\hat{n})}_a(u)\, R^{(\hat{n},\hat{n})}_{ab}(u-v) \el & \hspace{1.76cm} = \mp\frac{ R^{(\hat{n},\hat{n})}_{ab}(u-v)\, B^{(\hat{n})}_a(u)\, Q^{(\hat{n},\hat{n})}_{ab}\, C^{(\hat{n})}_b(v) - B^{(\hat{n})}_b(v)\, Q^{(\hat{n},\hat{n})}_{ab}\, C^{(\hat{n})}_a(u)\, R^{(\hat{n},\hat{n})}_{ab}(u-v) }{u-\tl{v}} , \!\! \label{AA} 
\\[0.25em]
& C^{(\hat{n})}_a(u)\, A^{(\hat{n})}_b(v) = A^{(\hat{n})}_b(v)\, \wh{R}^{(\hat{n},\hat{n})}_{ab}(\tl{v}-u)\, C^{(\hat{n})}_a(u)\, R^{(\hat{n},\hat{n})}_{ab}(u-v) \el & \hspace{2.15cm} + \frac{P^{(\hat{n},\hat{n})}_{ab} A^{(\hat{n})}_a(u)\, \wh{R}^{(\hat{n},\hat{n})}_{ab}(\tl{v}-u)\, C^{(\hat{n})}_b(v)}{u-v} \mp \frac{D^{(\hat{n})}_a(u)\, Q^{(\hat{n},\hat{n})}_{ab} C^{(\hat{n})}_b(v)}{u-\tl{v}}  \!\! \label{CA}
}
and
\ali{
\wh{D}^{(\hat{n})}(\tl{u}) &= A^{(\hat{n})}(u)\pm \frac{A^{(\hat{n})}(u)-A^{(\hat{n})}(\tl{u})}{u-\tl{u}} ,\qq
\pm \, \wh{B}^{(\hat{n})}(\tl{u}) &= B^{(\hat{n})}(u) \pm \frac{B^{(\hat{n})}(u) - B^{(\hat{n})}(\tl{u})}{u-\tl{u}} . \label{symmDB}
}
Here indices $a$ and $b$ label two distinct copies of $\End(\C^{\hat{n}})$, and $\wh{D}^{(\hat{n})}(\tl{u})$, $\wh{B}^{(\hat{n})}(\tl{u})$ are $\om$-transposed matrices. Taking matrix coefficients of \eqref{AB}--\eqref{symmDB} one obtains relations among generating series that coincide with those given by the defining relations \eqref{[s,s]} and \eqref{s-symm}.

\begin{rmk}
In the $\hat{n}=n+1$ case all four operators in \eqref{ABCD} are ``overlapping''. For example, when $N=3$, we have $\hat{n}=n+1=2$ giving
\aln{
A^{(\hat{n})}(u) = \begin{pmatrix} s_{11}(u) & s_{12}(u) \\ s_{21}(u) & s_{22}(u) \end{pmatrix} , \qu
B^{(\hat{n})}(u) = \begin{pmatrix} s_{12}(u) & s_{13}(u) \\ s_{22}(u) & s_{23}(u) \end{pmatrix} , \\
C^{(\hat{n})}(u) = \begin{pmatrix} s_{21}(u) & s_{22}(u) \\ s_{31}(u) & s_{32}(u) \end{pmatrix} , \qu
D^{(\hat{n})}(u) = \begin{pmatrix} s_{22}(u) & s_{23}(u) \\ s_{32}(u) & s_{33}(u) \end{pmatrix} .
}
We will mostly be interested in the $A$ and $B$ operators. The $A$ operator will be used to construct a transfer matrix of the spin chain and the $B$ operator will be used to construct creation operators. Both $A$ and $B$ operators include generating series $s_{i\hat{n}}(u)$ with $1\le i \le n$ associated with the short root of $\mfso_{2n+1}$. These series will be used to construct level-$n$ creation operator and should only be considered as elements of the $B$ operator. Likewise, the ``middle'' generating series $s_{\hat{n}\hat{n}}(u)$ is also included in both $A$ and $B$ operators (and $C$ and $D$), but should only be considered as an element of the $A$ operator. These issues will be resolved by restricting to the upper-left $(n{-}1)\times(n{-1})$-dimensional submatrix of the $A$ operator (such a restriction is compatible with the $AB$ exchange relation, see Lemma~\ref{L:pAB}) and by explicitly computing the action of $s_{\hat{n}\hat{n}}(u)$ on level-$n$ Bethe vectors (see Lemma \ref{L:snn}). 
\end{rmk}


\section{Bethe ansatz}	\label{sec:ba}


\subsection{Quantum space} \label{sec:ev}

We study spin chains with the full quantum space given by
\equ{
\label{L}
L^{(n)} := L(\bm\la^{(1)}) \ot \cdots \ot L(\bm\la^{(\ell)}) \ot M(\bm\mu)
}
where $\ell \in \N$ is the length of the chain, each $L(\bm\la^{(i)})$ and $M(\bm\mu)$ are finite-dimensional irreducible highest-weight representations of $\mfgl_N$ and $\mfg_N$, respectively, and the $N$-tuples $\bm\la^{(1)}$ and $\bm\mu$ are their highest weights. We will say that $L^{(n)}$ is a {\it level-$n$ quantum space}.

The space $L^{(n)}$ can be equipped with a structure of a left $Y^{\pm}(\mfg_N)$-module  as follows.
Introduce Lax operators
\ali{
\mc{L}^{(N)}(u) & := \sum_{i,j} E^{(N)}_{ij} \ot (\del_{ij} - u^{-1} e_{ji} ) \,, \\
\mc{M}^{(N)}(u) & := \sum_{i,j} E^{(N)}_{ij} \ot (\del_{ij} - u^{-1} f_{ji}) \,.
}
Choose an $\ell$-tuple $\bm c = (c_1, \ldots, c_\ell)$ of distinct complex parameters. 
Then for any $\xi \in L^{(n)}$ the action of $Y^\pm(\mfgl_N)$ is given by
\equ{
\label{S(u).xi}
S^{(N)}_a(u)\cdot \xi = \prod_{i}^{\rightarrow} \mc{L}^{(N)}_{ai}(u-c_i) \;  \mc{M}^{(N)}_{a,\ell+1}(u+(\rho\pm1)/2)\, \prod_{i}^{\leftarrow} \wh{\mc{L}}{}^{(N)}_{ai}(\tl{u}-c_i) \cdot \xi 
}
where the subscript $a$ labels the matrix space of $S^{(N)}$ and the subscripts $i=1,\ldots,\ell$ and $\ell+1$ label the individual tensorands of the space $L^{(n)}$, which we call {\it bulk} and {\it boundary} quantum spaces. The bulk spaces are {\it evaluation representations} of $Y(\mfgl_{N})$ and the boundary space is an {\it evaluation representation} of $Y^\pm(\mfgl_{N})$.  
Moreover, since $L^{(n)}$ is finite-dimensional, the formal variable $u$ can be evaluated to any complex number, not equal to any $c_i$, $\tl{c}_i$, and $-(\rho\pm1)/2$.

Let $1_{\bm\la^{(i)}}$ and $1_{\bm\mu}$ denote highest-weight vectors of $L(\bm\la^{(i)})$ and $M(\bm\mu)$, respectively.
Set
\equ{
\label{eta}
\eta := 1_{\bm\la^{(1)}} \ot \cdots \ot 1_{\bm\la^{(\ell)}} \ot 1_{\bm\mu} \,.
}
Then $s_{ij}(u)\cdot \eta = 0$ if $i>j$ and $s_{ii}(u)\cdot \eta = \mu_i(u)\,\eta$ where
\equ{
\label{mu(u)}
\mu_i(u) := \frac{u + (\rho\pm1)/2 - \mu_i}{u + (\rho\pm1)/2} \prod_{j\le \ell} \frac{u-c_j - \la^{(j)}_i}{u-c_i} \cdot \frac{\tl{u}-c_j - \la^{(j)}_i}{\tl{u}-c_i}  .
}
Note that $\mu_{N-i+1} = - \mu_i$ and $\mu_{\hat n}=0$ when $\hat n = n+1$.

An important property of $L^{(n)}$ is that the subspace $(L^{(n)})^0\subset L^{(n)}$, annihilated by $s_{ij}(u)$ with $i>n$, $j\le \hat n$ and $i>j$, is isomorphic to an $(\ell+1)$-fold tensor product of irreducible $\mfgl_n$ representations. 
Its subspace $(L^{(n)})^1\subset (L^{(n)})^0$, annihilated by $s_{ni}(u)$ with $i<n$, is isomorphic to an $(\ell+1)$-fold tensor product of irreducible $\mfgl_{n-1}$ representations. 
This can be continued to give the following chain of (sub)spaces
\equ{
L^{(n)} \supset (L^{(n)})^0 \supset (L^{(n)})^1 \supset \cdots \supset (L^{(n)})^{n-1}
}
where $(L^{(n)})^{0}, (L^{(n)})^{1}, \ldots, (L^{(n)})^{n-1}$ are isomorphic to $(\ell+1)$-fold tensor products of irreducible finite-dimensional $\mfgl_{n}$, $\mfgl_{n-1}$, \ldots, $\mfgl_{2}$ representations, respectively. This property ensures that nested algebraic Bethe ansatz techniques can be applied.


\subsection{Nested quantum spaces} \label{sec:spaces}

Choose an $n$-tuple $\bm m := (m_1,\dots,m_{n})$ of non-negative integers, the excitation (magnon) numbers. For each $m_k$ assign an $m_k$-tuple $\bm u^{(k)} := (u^{(k)}_1,\dots,u^{(k)}_{m_k})$ of complex parameters (off-shell Bethe roots) and an $m_k$-tuple ${\bm a}^k := ( a^k_1, \dots, a^k_{m_k} )$ of labels, except that for $m_n$ we assign two $m_n$-tuples of labels, $\dot{\bm a} := ( \da_1, \dots, \da_{m_n} )$ and $\ddot{\bm a} := ( \dda_1, \dots, \dda_{m_n} )$. 
We will often use the following shorthand notation:
\equ{
\bm u^{(k\dots l)} := (\bm u^{(k)},\bm u^{(k+1)},\dots,\bm u^{(l)}) . 
}
We will assume that $\bm u^{(k\dots k)}=\bm u^{k}$ and that $\bm u^{(k\dots l)}$ is an empty tuple if $k>l$ so that, for instance,
\[
f(\bm u^{(1\dots k)},\bm u^{(k\dots l)}) = f(\bm u^{(1\dots k)})
\]
for any function or operator $f$ when $k\ge l$. For any tuples $\bm u$ and $\bm v$ of complex parameters we set
\equ{
\label{f}
f^\pm(u_i,v_j) := \frac{u_i-v_j\pm1}{u_i-v_j} \,, \qu\; 
f^\pm(\bm u, \bm v) := \prod_{i,j} f^\pm(u_i,v_j) \,, \qu\;
\frac{1}{\bm u - \bm v} := \prod_{i,j} \frac{1}{u_i - v_j} 
}
where the products are over all admissible indices $i$ and $j$.

\smallskip

Let $V^{(k)}_{a^{k}_i}$ denote a copy of $\C^{k}$ labelled by ``$a^{k}_i$'' and let $W^{(k)}_{\bm a^{k}}$ be defined by
\equ{
W^{(k)}_{\bm a^{k}} := V^{(k)}_{a^{k}_1} \ot \cdots \ot V^{(k)}_{a^{k}_{m_{k}}} \cong (\C^k)^{\ot m_k}.
\label{Wka}
}
Labels $a^k_i$ will be used to trace the action of matrix operators. We illustrate this property with an example. Let $\xi = \xi_{a^k_1} \ot \cdots \ot \xi_{a^k_{m_k}} \in W^{(k)}_{\bm a^{k}}$ and let $M^{(k)}_{a^{k}_j} \in \End(V^{(k)}_{a^k_j})$ be a matrix operator acting in the space labelled $a^k_j$. Then
\[
M^{(k)}_{a^k_j} \xi = \xi_{a^{k}_1} \ot \cdots \ot \xi_{a^{k}_{j-1}} \ot \Big( M^{(k)}_{a^{k}_j}  \xi_{a^{k}_j} \Big) \ot \xi_{a^{k}_{j+1}} \ot \cdots \ot \xi_{a^{k}_{m_{k}}} .
\]

Let $V^{(\hat{n})}_{\da_i}, V^{(\hat{n})}_{\dda_i} \cong \C^{\hat{n}}$ and $W^{(\hat{n})}_{\dot{\bm a}}, W^{(\hat{n})}_{\ddot{\bm a}} \cong (\C^{\hat{n}})^{\ot m_n}$ be defined analogously to \eqref{Wka}. 
We define a {\it level-$(n\!-\!1)$~quantum~space}~by
\equ{
\label{L(n-1)}
L^{(n-1)} := W^{(\hat{n})}_{\dot{\bm a}} \ot W^{(\hat{n})}_{\ddot{\bm a}} \ot (L^{(n)})^0 . 
}
When $\hat{n}=n+1$, we additionally introduce ``reduced'' vector spaces 
\equ{
\ol{W}{}^{(\hat{n})}_{\!\dot{\bm a}} := \ol{V}{}^{(\hat{n})}_{\!\da_1} \ot \cdots \ot \ol{V}{}^{(\hat{n})}_{\!\da_{m_n}} , \qq 
\ol{W}{}^{(\hat{n})}_{\!\ddot{\bm a}} := \ol{V}{}^{(\hat{n})}_{\!\dda_1} \ot \cdots \ot \ol{V}{}^{(\hat{n})}_{\!\dda_{m_n}}
}
where 
\equ{
\ol{V}{}^{(\hat{n})}_{\!\da_i} := {\rm span}_\C \{ E^{(\hat{n})}_{j} : 2\le j \le \hat{n} \} \subset V^{(\hat{n})}_{\da_i}, \qq \ol{V}{}^{(\hat{n})}_{\!\dda_i} := {\rm span}_\C \{ E^{(\hat{n})}_{1} \} \subset V^{(\hat{n})}_{\dda_i} .
\label{barV}
} 
Specifically, $\ol{W}{}^{(\hat{n})}_{\!\dot{\bm a}}$ is isomorphic to $(\C^{n})^{\ot m_n}$ and $\ol{W}{}^{(\hat{n})}_{\!\ddot{\bm a}}$ a 1-dimensional vector space.
We then define a {\it reduced level-$(n\!-\!1)$ quantum space} by
\equ{
\label{L(n-1)'}
\ol{L}{}^{(n-1)} := \ol{W}{}^{(\hat{n})}_{\!\dot{\bm a}} \ot \ol{W}{}^{(\hat{n})}_{\!\ddot{\bm a}} \ot (L^{(n)})^0  \subset L^{(n-1)} . 
}
The spaces $L^{(n-1)}$ and $\ol{L}{}^{(n-1)}$ will serve as the full (nested) quantum spaces of the $Y(\mfgl_n)$-based models obtained after the first step of nesting in the even and odd cases, respectively; see Remark \ref{R:gln}.  

Then, for each $k = n-2, n-3, \ldots, 1$ we define a {\it level-$k$ quantum space} by
\equ{
\label{L(k)}
L^{(k)} := W^{(k+1)}_{\bm a^{k+1}} \ot (L^{(k+1)})^0
}
where $(L^{(k+1)})^0$ is a {\it level-$(k\!+\!1)$ vacuum subspace} given by
\equ{
(L^{(k+1)})^0 := (W^{(k+2)}_{\bm a^{k+2}})^0 \ot \cdots \ot (W^{(n-1)}_{\bm a^{n-1}})^0 \ot (W^{(\hat{n})}_{\dot{\bm a}})^0 \ot (W^{(\hat{n})}_{\ddot{\bm a}})^{0} \ot (L^{(n)})^{n-k-1} \subset L^{(k+1)} \!\!\!\!
}
where 
\[
(W^{(k+2)}_{\bm a^{k+2}})^0 \subset W^{(k+2)}_{\bm a^{k+2}}, \qu\ldots,\qu
(W^{(n-1)}_{\bm a^{n-1}})^0 \subset W^{(n-1)}_{\bm a^{n-1}}, \qu\; 
(W^{(\hat{n})}_{\dot{\bm a}})^0 \subset W^{(\hat{n})}_{\dot{\bm a}}, \qu\; 
(W^{(\hat{n})}_{\ddot{\bm a}})^0 \subset W^{(\hat{n})}_{\ddot{\bm a}}
\]
are 1-dimensional subspaces spanned by vectors 
\[
E^{(k+2)}_1 \ot \cdots \ot E^{(k+2)}_1 , \qu \ldots, \qu 
E^{(n-1)}_1 \ot \cdots \ot E^{(n-1)}_1 , \qu 
E^{(\hat{n})}_{\hat 1} \ot \cdots \ot E^{(\hat{n})}_{\hat 1} , \qu
E^{(\hat{n})}_1 \ot \cdots \ot E^{(\hat{n})}_1 
\]
respectively. 
When $\hat{n}=n+1$, note that $(L^{(n-1)})^{0} \subset \ol{L}{}^{(n-1)}$. Moreover, $(L^{(k+1)})^0 \cong (L^{(n)})^{n-k-1}$ for $1 \le k \le n-2$.
The spaces $L^{(k)}$ will serve as the full (nested) quantum spaces of the $Y(\mfgl_{k+1})$-based models obtained after $n-k$ steps of nesting.


\subsection{Monodromy matrices} \label{sec:mono}

We will say that the matrix $S^{(N)}(u)$, acting in the space $L^{(n)}$ via \eqref{S(u).xi}, is a {\it level-$n$ monodromy matrix}. In this setting, we will treat $u$ as a non-zero complex number not equal to any $c_i$, $\tl{c}_i$ and $-(\rho\pm1)/2$.
We define a {\it level-$(n\!-\!1)$ nested monodromy matrix}, acting in the space~$L^{(n-1)}$,~by

\equ{
\label{Tn}
T^{(\hat{n})}_{a}(v;\bm u^{(n)}) := \prod_{i\le m_n}^{\leftarrow} \wh{R}^{(\hat{n},\hat{n})}_{\da_i a}(u^{(n)}_i-v)  \prod_{i\le m_n}^{\leftarrow} \wh{R}^{(\hat{n},\hat{n})}_{\dda_i a}(\tl{u}^{(n)}_i-v) \, A^{(\hat{n})}_a(v) .
}
When $\hat{n} = n+1$, we introduce a {\it reduced level-$(n\!-\!1)$ nested monodromy matrix}, acting in the space~$\ol{L}{}^{(n-1)}$, by
\equ{
\label{Tn'}
\ol{T_{a}^{(n)}}(v;\bm u^{(n)}) := \prod_{i\le m_n}^{\leftarrow} \ol{\wh{R}^{(n,n)}_{\da_i a}}(u^{(n)}_i-v) \, \big[ A^{(\hat{n})}_a(v) \big]{}^{(n)} 
}
where $\ol{\wh{R}^{(n,n)}_{\da_i a}}$ is the restriction of $\wh{R}^{(n,n)}_{\da_i a}$ to $\ol{V}{}^{(\hat{n})}_{\!\da_i}\ot V^{(n)}_{a}\subset V^{(\hat{n})}_{\da_i}\ot V^{(\hat{n})}_{a}$ (recall \eqref{Rw(u)} and \eqref{barV}), and the notation $[\;]^{(n)}$ means the restriction to the upper-left $(n\times n)$-dimensional submatrix; this notation will be used throughout the manuscript. 

\begin{lemma} \label{L:T-rest}
When $\hat{n}=n+1$, in the space $\ol{L}{}^{(n-1)}$ we have the equality of operators
\equ{
\big[T^{(\hat{n})}_{a}(v;\bm u^{(n)})\big]{}^{(n)} = \ol{T_{a}^{(n)}}(v;\bm u^{(n)}) .
\label{T-rest}
}
Moreover, the space $\ol{L}{}^{(n-1)}$ is stable under the action of $\ol{T_{a}^{(n)}}(v;\bm u^{(n)})$.
\end{lemma}

\begin{proof}
From \eqref{Rw(u)} observe that 
\equ{
\big[ \wh{R}^{(\hat{n},\hat{n})}_{b a}(v) \big]_{kl} E^{(\hat{n})}_j = \del_{kl}\, E^{(\hat{n})}_j - v^{-1}\,\del_{\hat{n}-l+1,j} E^{(\hat{n})}_{\hat{n}-k+1} 
\label{T-rest-1}
}
where $[\;]_{kl}$ selects the $(k,l)$-th matrix element of $\wh{R}^{(\hat{n},\hat{n})}_{b a}$ in the $a$-space; this notation will be used throughout the manuscript. 
Therefore, for any $1\le k,l\le n$ and any $\eta \in \ol{W}{}^{(\hat{n})}_{\!\dot{\bm a}}$, $\zeta \in \ol{W}^{(\hat{n})}_{\!\ddot{\bm a}}$, $\xi \in (L^{(n)})^0$, viz.~\eqref{L(n-1)'}, we have
\ali{
& \Big[ T^{(\hat{n})}_a(v; \bm u^{(n)}) \Big]_{kl} \cdot \eta \ot \zeta \ot \xi
\nn\\
& \qq = \sum_{p,r} \,\Bigg[ \prod_{i\le m_n}^{\leftarrow} \wh{R}^{(\hat{n},\hat{n})}_{\da_i a}(u^{(n)}_i-v) \Bigg]_{kp} \!\cdot \eta \ot \Bigg[ \prod_{i\le m_n}^{\leftarrow} \wh{R}^{(\hat{n},\hat{n})}_{\dda_i a}(\tl{u}^{(n)}_i-v) \Bigg]_{pr} \!\cdot \zeta \ot s_{rl}(v) \cdot \xi
\nn\\
& \qq = \sum_{p\le n} \,\Bigg[ \prod_{i\le m_n}^{\leftarrow} \wh{R}^{(\hat{n},\hat{n})}_{\da_i a}(u^{(n)}_i-v) \Bigg]_{kp} \!\cdot \eta \ot \zeta \ot s_{pl}(v) \cdot \xi
\label{T-rest-2}
}
since $s_{\hat{n}l}(v) \cdot \xi=0$ by definition of $(L^{(n)})^{0}$, and, by \eqref{T-rest-1},
\[
\Bigg[ \prod_{i\le m_n}^{\leftarrow} \wh{R}^{(\hat{n},\hat{n})}_{\dda_i a}(\tl{u}^{(n)}_i-v) \Bigg]_{pr} \cdot \zeta = \delta_{pr} \zeta 
\]
when $r<\hat{n}$ because $\zeta$ is a scalar multiple of $E^{(\hat{n})}_1 \ot \cdots \ot E^{(\hat{n})}_1$. But
\[
\Bigg[ \prod_{i\le m_n}^{\leftarrow} \wh{R}^{(\hat{n},\hat{n})}_{\da_i a}(u^{(n)}_i-v) \Bigg]_{kp} \!\cdot \eta \notin \ol{W}{}^{(\hat{n})}_{\!\dot{\bm a}}
\]
when $k,p\le n$ only if the product includes $\big[ \wh{R}^{(\hat{n},\hat{n})}_{\da_i a}(u^{(n)}_i-v) \big]_{\hat{n} r}$ with $r\le n$, but then it must also include $\big[ \wh{R}^{(\hat{n},\hat{n})}_{\da_i a}(u^{(n)}_i-v) \big]_{r\hat{n}}$ which acts by zero on $\eta$ since the spaces $\ol{V}{}^{(\hat{n})}_{\!\dot{a}_i}$ have no $E^{(\hat{n})}_1$'s. Thus
\equ{
\Bigg[ \prod_{i\le m_n}^{\leftarrow} \wh{R}^{(\hat{n},\hat{n})}_{\da_i a}(u^{(n)}_i-v) \Bigg]_{kp} \!\cdot \eta = \Bigg[ \prod_{i\le m_n}^{\leftarrow} \ol{\wh{R}^{(\hat{n},\hat{n})}_{\da_i a}}(u^{(n)}_i-v) \Bigg]_{kp} \!\cdot \eta \in \ol{W}{}^{(\hat{n})}_{\!\dot{\bm a}}
\label{T-rest-3}
}
implying \eqref{T-rest}. To prove the second part of the claim, notice that $(L^{(n)})^{0}$ is stable under the action of $s_{pl}(u)$ with $1\le p,l \le n$. Indeed, by definition, it is the subspace of $L^{(n)}$ annihilated by $s_{\ol{\imath}j}(u)$ with $\bar\imath> n$, $j\le \hat{n}$ and $\bar{\imath} > j$. Assuming $1\le i,j,k,l \le n$, \eqref{[s,s]} gives $s_{\ol{\imath}j}(u)\,s_{kl}(v) = 0$ in the space $(L^{(n)})^{0}$ thus proving its stability. The stability of $\ol{L}{}^{(n-1)}$ under the action of $\ol{T_{a}^{(n)}}(v;\bm u^{(n)})$ then follows immediately from \eqref{T-rest-2} and \eqref{T-rest-3}.
\end{proof}

Next, for~each~$k = n-1, n-2, \ldots, 2$, we define a {\it level-$(k\!-\!1)$ nested monodromy matrix}, acting in the space $L^{(k-1)}$,~by
\ali{
\label{Tk}
T^{(k)}_{a}(v;\bm u^{(k\ldots n)}) := \prod_{i\le m_k}^{\leftarrow} \wh{R}^{(k,k)}_{a^k_ia}(u^{(k)}_i-v) \; \big[\,T^{(k+1)}_{a}(v;\bm u^{(k+1\ldots n)})\big]{}^{(k)} 
}
where $T^{(k+1)}_{a}$ should be $\ol{T^{(k+1)}_{a}}$ when $\hat{n}=n+1$ and $k=n$.

\begin{lemma} \label{L:RTT}
For each $2\le k \le n$, the space $L^{(k-1)}$ is stable under the action of $T^{(k)}_{a}(v;\bm u^{(k\ldots n)})$~and
\ali{
\label{YBE-k}
& R^{(k,k)}_{ab}(v-w)\,T^{(k)}_{a}(v;\bm u^{(k\ldots n)})\,T^{(k)}_{b}(w;\bm u^{(k\ldots n)}) \el
& \qq = T^{(k)}_{b}(w;\bm u^{(k\ldots n)})\,T^{(k)}_{a}(v;\bm u^{(k\ldots n)})\, R^{(k,k)}_{ab}(v-w)
}
in this space, except, when $\hat{n}=n+1$ and $k=n$, $L^{(k-1)}$ should be $\ol{L}{}^{(k-1)}$ and $T^{(k)}$ should be $\ol{T^{(k)}}$.
\end{lemma}

\begin{proof}
When $k=n$ and $\hat{n}=n$, this was shown in Proposition 3.13 in \cite{GMR19}. When $k=n$ and $\hat{n}=n+1$, the first part of the claim follows from Lemma \ref{L:T-rest}; the second part follows from the observation that
\equ{
R^{(n,n)}_{ab}(u-v)\, \big[A^{(\hat{n})}_a(u)\big]{}^{(n)}\, \big[A^{(\hat{n})}_b(v)\big]{}^{(n)} = \big[A^{(\hat{n})}_b(v)\big]{}^{(n)}\, \big[A^{(\hat{n})}_a(u)\big]{}^{(n)}\, R^{(n,n)}_{ab}(u-v) \label{RTT-odd}
}
in the space $\ol{L}{}^{(n-1)}$ and application of the transposed quantum Yang-Baxter equation \eqref{tYBE}. The~\eqref{RTT-odd} follows from \eqref{AA} or directly from \eqref{[s,s]} upon restricting to $1\le i,j,k,l\le n$. 
The $k<n$ cases then follow by the standard arguments.
\end{proof}

\begin{rmk} \label{R:gln}
Lemma \ref{L:RTT} together with \eqref{Tn}, \eqref{Tn'} say that $Y^\pm(\mfgl_{2n})$- and $Y^+(\mfgl_{2n+1})$-based models, after the first step of nesting, are equivalent to $Y(\mfgl_{n})$-based models with off-shell Bethe roots given by $\bm v^{(1\ldots n-2)} := \bm u^{(1\ldots n-2)}$ and $\bm v^{(n)} := (\bm u^{(n)}, \tl{\bm u}^{(n)})$ in the even case, and $\bm v^{(n)} := \bm u^{(n)}$ in the  odd case. This property will be explored in Section \ref{sec:rr}.
\end{rmk}


\subsection{Creation operators} \label{sec:cr}

We define a {\it level-$n$ creation operator} by
\equ{ 
\label{Bn}
\mr{B}^{(n)}(\bm u^{(n)}) := \prod_{1\le i\le m_n}^{\leftarrow} \!\Bigg( \mr{b}^{(n)}_{\da_i\dda_i}(u^{(n)}_i) \prod_{i<j\le m_n}^{\rightarrow} \frac{R^{(\hat{n},\hat{n})}_{\da_i\dda_j}(\tl{u}^{(n)}_i-u^{(n)}_j)}{(f^-(\tl{u}^{(n)}_i,u^{(n)}_j))^{\del_{\hat n n}}} \Bigg) 
}
where
\equ{
\label{b-n}
\mr{b}^{(n)}_{\da_i\dda_i}(u^{(n)}_i) := \sum_{k,l\le \hat{n}} (E^{(\hat{n})}_k)^* \ot (E^{(\hat{n})}_l)^* \ot \big[B^{(\hat{n})}_a(u^{(n)}_i)\big]_{\ol{k}, l} \in (V^{(\hat{n})}_{\da_i})^* \ot (V^{(\hat{n})}_{\dda_i})^* \ot \End(L^{(n)}) 
}
and $B^{(\hat{n})}_a(u^{(n)}_i)$ is the $B$-block of the operator in the right hand side of \eqref{S(u).xi}. The $R$-matrices in \eqref{Bn} are necessary for the wanted order of the $\wh{R}$-matrices in \eqref{Tn}, which in turn is necessary for Lemma \ref{L:RTT} to hold. The denominator is an overall normalisation factor.

From \eqref{Bn} it is clear that $\mr{B}^{(n)}(\bm u^{(n)})$ satisfies the recurrence relation
\equ{
\mr{B}^{(n)}(\bm u^{(n)}) = \mr{b}^{(n)}_{\da_{m_n}\dda_{m_n}}(u^{(n)}_{m_n})\,\mr{B}^{(n)}(\bm u^{(n)}\backslash u^{(n)}_{m_n}) \, \mr{R}^{(\hat n)}(u^{(n)}_{m_n};\bm u^{(n)}\backslash u^{(n)}_{m_n})
\label{Bn-factor}
}
where $\mr{B}^{(n)}(\bm u^{(n)}\backslash u^{(n)}_{m_n})$ is defined via \eqref{Bn} except the ranges of products are $1\le i< m_n$ and $i<j<m_n$, and 
\equ{
\mr{R}^{(\hat n)}(u^{(n)}_{m_n};\bm u^{(n)}\backslash u^{(n)}_{m_n}) := \prod_{1\le i<m_n}^{\leftarrow}  \frac{R^{(\hat{n},\hat{n})}_{\da_{i}\dda_{m_n}}(\tl{u}^{(n)}_i-u^{(n)}_{m_n})}{(f^-(\tl{u}^{(n)}_i,u^{(n)}_{m_n}))^{\del_{\hat{n} n}}} \,.
\label{Rn-prod}
}
We will later meet operators $\mr{B}^{(n)}(\bm u^{(n)}\backslash u^{(n)}_{l})$ and $\mr{R}^{(\hat n)}(u^{(n)}_l;\bm u^{(n)}\backslash u^{(n)}_l)$ for any $l$ that are defined analogously except $u^{(n)}_i$ (resp.~$\tl{u}^{(n)}_i$) should be replaced with $u^{(n)}_{i+1}$ (resp.~$\tl{u}^{(n)}_{i+1}$) for all $l\le i < m_n$. 

\smallskip

Next, for each $k = n-1, n-2, \ldots, 1$ we define a {\it level-$k$ creation operator} by
\equ{
\label{Bk}
\mr{B}^{(k)}(\bm u^{(k)}; \bm u^{(k+1\ldots n)}) := \prod_{1\le i\le m_k}^{\leftarrow} \mr{b}^{(k)}_{a^k_i}(u^{(k)}_i; \bm u^{(k+1\ldots n)})
}
where
\equ{
\mr{b}^{(k)}_{a^k_i}(u^{(k)}_i; \bm u^{(k+1\ldots n)}) := \sum_{1\le j\le k} (E^{(k)}_j)^*_{a^k_i} \ot \big[ T^{(k+1)}_a(u^{(k)}_i; \bm u^{(k+1\ldots n)}) \big]_{\overline\jmath,k+1} \in (V^{(k)}_{a^k_i})^* \ot \End(L^{(k)})  .
\label{bk}
}
Note that $T^{(n)}_a(u^{(n-1)}_i; \bm u^{(n)})$ should be replaced with $\ol{T^{(n)}_a}(u^{(n-1)}_i; \bm u^{(n)})$ when $\hat{n} = n+1$.

\smallskip

Parameters of creation operators may be permuted using the following standard result, which follows from \eqref{BB}; see Lemma~3.6 in \cite{GMR19}.

\begin{lemma} \label{L:Beta-symm}
The level-$n$ creation operator satisfies
\equ{
\mr{B}^{(n)}(\bm u^{(n)}) = \mr{B}^{(n)}(\bm u^{(n)}_{i \leftrightarrow i+1})\, 
\check{R}^{(\hat{n}, \hat{n})}_{\da_{i+1}\da_{i}}(u^{(n)}_{i}-u^{(n)}_{i+1}) \, 
\check{R}^{(\hat{n}, \hat{n})}_{\dda_{i+1}\dda_{i}}(u^{(n)}_{i+1} - u^{(n)}_{i}) . 
\label{B=BRR}
}
For each $1 \le k \le n-1$ the level-$k$ creation operator satisfies
\equ{
\mr{B}^{(k)}(\bm u^{(k)}; \bm u^{(k+1\ldots n)}) = \mr{B}^{(k)}(\bm u^{(k)}_{i \leftrightarrow i+1}; \bm u^{(k+1\ldots n)}) \, \check{R}^{(k, k)}_{a^k_{i+1}a^k_{i}}(u^{(k)}_{i} - u^{(k)}_{i+1}) .
\label{B=BR}
}
Here the ``check'' $\check{R}$-matrices are defined by
\equ{
\label{R-check}
\check{R}^{(k,k)}_{ab}(u) := \frac{u}{u-1} \, P^{(k,k)}_{ab}  R^{(k,k)}_{ab}(u)
}
and $\bm u^{(k)}_{i \leftrightarrow i+1}$ denotes the tuple $\bm u^{(k)}$ with parameters $u^{(k)}_{i}$ and $u^{(k)}_{i+1}$ interchanged.
\end{lemma}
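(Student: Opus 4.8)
The plan is to derive both \eqref{B=BRR} and \eqref{B=BR} from the appropriate $RTT$-type exchange relation by the standard algebraic Bethe ansatz argument, following the proof of Lemma~3.6 of \cite{GMR19}. Each identity is a covariance statement: a product of ``$B$-type'' operators equals the same product with two spectral parameters interchanged, corrected by ``check'' $R$-matrices \eqref{R-check} acting on the matching auxiliary spaces. Only adjacent transpositions appear in the statements, so no reduction through braid relations is needed; for the level-$n$ operator, however, the defining product \eqref{Bn} carries intervening $R$-matrices, so even a single transposition forces one to transport these past the two active factors.

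For the level-$k$ operator with $1\le k\le n-1$, the product \eqref{Bk} is a plain ordered product of the operators $\mr{b}^{(k)}_{a^k_i}$ of \eqref{bk}, each built from the last-column entries $\big[T^{(k+1)}_a\big]_{k-j+1,\,k+1}$ with $1\le j\le k$ of the nested monodromy matrix $T^{(k+1)}$ (with $T^{(n)\prime}$ in place of $T^{(n)}$ when $\hat n=n+1$, as noted after \eqref{bk}). By Lemma~\ref{L:RTT} this matrix satisfies the Yang--Baxter relation \eqref{YBE-k} with $T^{(k+1)}$ and $R^{(k+1,k+1)}$ replacing $T^{(k)}$ and $R^{(k,k)}$. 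Extracting from it the matrix element that isolates the last column of $T^{(k+1)}$ in both auxiliary factors, and unravelling the reversal $j\mapsto k-j+1$, gives the two-operator exchange
\[
\mr{b}^{(k)}_{a^k_i}(u)\,\mr{b}^{(k)}_{a^k_{i+1}}(v) = \mr{b}^{(k)}_{a^k_i}(v)\,\mr{b}^{(k)}_{a^k_{i+1}}(u)\,\check{R}^{(k,k)}_{a^k_{i+1}a^k_i}(u-v),
\]
the normalisation $u/(u-1)$ in the definition \eqref{R-check} of $\check R$ being exactly what leaves no residual scalar factor. Substituting this into \eqref{Bk} at positions $i$ and $i+1$ yields \eqref{B=BR}.

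For the level-$n$ operator, \eqref{BB} plays the role of the $RTT$-relation; it is already written in the shape $R_{ab}B_a\wt R_{ab}B_b = B_b\wt R_{ab}B_a R_{ab}$ appropriate to the two stacked auxiliary spaces concealed in each $\mr{b}^{(n)}_{\da_i\dda_i}$ of \eqref{b-n}, the space $\da_i$ entering through the reversed index $\hat n-k+1$, hence through the map $w$ of Section~\ref{sec:matrix} that turns $R$ into $\wt R$ via \eqref{Rw(u)}, and $\dda_i$ entering untransposed. Rewriting the local piece $\mr{b}^{(n)}_{\da_i\dda_i}(u^{(n)}_i)\,R^{(\hat n,\hat n)}_{\da_i\dda_{i+1}}(\tl u^{(n)}_i-u^{(n)}_{i+1})\,\mr{b}^{(n)}_{\da_{i+1}\dda_{i+1}}(u^{(n)}_{i+1})$ in terms of $B^{(\hat n)}$, interchanging $B^{(\hat n)}_a(u^{(n)}_i)$ and $B^{(\hat n)}_b(u^{(n)}_{i+1})$ by \eqref{BB}, and, for a transposition in the interior of the tuple, inducting on $m_n$ via the recurrence \eqref{Bn-factor} while rerouting the intervening $R$-matrices of \eqref{Bn} through the Yang--Baxter equations \eqref{YBE} and \eqref{tYBE}, one recovers $\mr{B}^{(n)}(\bm u^{(n)}_{i\leftrightarrow i+1})$ together with one leftover $\check R$ in the $\da$-spaces (argument $u^{(n)}_i-u^{(n)}_{i+1}$) and one in the $\dda$-spaces (argument $u^{(n)}_{i+1}-u^{(n)}_i$); the opposite sign of the two arguments is precisely the imprint of the $w$-transposition on the $\da$-side. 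In the even case $\hat n=n$ one also checks that the normalisation factors $(f^-(\tl u^{(n)}_i,u^{(n)}_j))^{\del_{\hat n n}}$ of \eqref{Bn} cancel the scalars produced by \eqref{BB} and \eqref{R-check}; in the odd case $\hat n=n+1$ the exponent $\del_{\hat n n}$ vanishes and this is vacuous.

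The genuine difficulty is organisational rather than conceptual. In the level-$n$ computation one must keep the two stacked auxiliary spaces inside each $\mr{b}^{(n)}$ apart, match the middle $\wt R$ of \eqref{BB} correctly with the $w$-transpose of an ordinary $R$ between the right pair of primed and unprimed spaces, and push the intervening $R$-matrices of \eqref{Bn}, whose ordering was chosen precisely so that Lemma~\ref{L:RTT} holds, through the exchange without generating spurious terms. Once this dictionary between $\mr{b}^{(n)}$ and the block $B^{(\hat n)}$ is in place, the calculation is the standard one and needs nothing beyond \eqref{BB}, the Yang--Baxter equations, and unitarity of the $R$-matrices.
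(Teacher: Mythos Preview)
Your proposal is correct and follows the same route the paper indicates: the paper's ``proof'' is simply the one-line remark that the result follows from \eqref{BB} together with a reference to Lemma~3.6 of \cite{GMR19}, and your sketch expands exactly this standard argument. One small correction: in the even case the normalisation denominators $f^-(\tl u^{(n)}_i,u^{(n)}_j)$ in \eqref{Bn} are already symmetric under $u^{(n)}_i\leftrightarrow u^{(n)}_{i+1}$ (since $\tl u_i-u_{i+1}=-u_i-\rho-u_{i+1}=\tl u_{i+1}-u_i$), so they contribute nothing to the swap and there are no extra scalars to cancel; relation \eqref{BB} is an exact identity.
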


Recall the notation $\tl v = - v - \rho$ and introduce the following notation for a symmetrised combination of functions or operators
\equ{
\{ f(v) \}^{v} := f(v) + f(\tl{v})
}
and a rational function
\equ{
\label{p(v)}
p(v) := 1 \pm \frac{1}{v-\tl{v}} 
}
representing the right hand side of the symmetry relation \eqref{symm}.
The Lemma below rephrases the results obtained in \cite{GMR19} in a compact form.

\begin{lemma} \label{L:pAB}
The AB exchange relation for the level-$n$ creation operator \eqref{Bn} is 
\ali{
\label{pAB}
& \big\{p(v)\,A^{(\hat{n})}_a(v)\big\}^v\, \mr{B}^{(n)}(\bm u^{(n)}) \nn\\[0.25em]
& \qq = \mr{B}^{(n)}(\bm u^{(n)}) \, \{p(v)\,T^{(\hat{n})}_a(v;\bm u^{(n)})\}^v \el 
& \qq\qu + \sum_{i} \frac1{p(u^{(n)}_i)} \left\{ \frac{p(v)}{u^{(n)}_i-v}\, \mr{b}^{(n)}_{\da_{m_n} \dda_{m_n}}(v) \right\}^v \mr{B}^{(n)}(\bm u^{(n)} \backslash u^{(n)}_i) \,\, \mr{R}^{(\hat n)}(u^{(n)}_{i};\bm u^{(n)}\backslash u^{(n)}_{i}) 
\el 
& \qq\qq\qu\;\; \times \Res{w \to u^{(n)}_i} \big\{p(w)\, T^{(\hat{n})}_a(w;\bm u_{\sigma_i}^{(n)}) \big\}^w \prod_{j>i}^{\rightarrow} \check{R}^{(\hat{n}, \hat{n})}_{\da_{j} \da_{j-1}}(u^{(n)}_i - u^{(n)}_j) \, \check{R}^{(\hat{n}, \hat{n})}_{\dda_j \dda_{j-1}}(u^{(n)}_j - u^{(n)}_{i}) 
}
where $\bm u^{(n)}\backslash u^{(n)}_i := (u_1, \dots, u_{i-1}, u_{i+1}, \dots, u_{m_n})$ and $\bm u^{(n)}_{\sigma_i} := (u_1, \dots ,u_{i-1}, u_{i+1}, \dots, u_{m_n},u_i)$.
\end{lemma}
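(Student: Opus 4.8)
The plan is to prove \eqref{pAB} by induction on $m_n$, using the factorisation \eqref{Bn-factor} of $\mr{B}^{(n)}$ together with the single-excitation $AB$-type exchange relation \eqref{AB}. The base case $m_n = 1$ amounts to rewriting \eqref{AB} (with $B^{(\hat n)}$ replaced by the symmetrised $\mr b^{(n)}_{\da_1\dda_1}$-operator \eqref{b-n}) in the symmetrised variables $\{p(v)A^{(\hat n)}_a(v)\}^v$; here the symmetry relations \eqref{symmD}, \eqref{symmB} are needed to convert the $D^{(\hat n)}$- and $\wt B^{(\hat n)}$-terms appearing on the right-hand side of \eqref{AB} into the $A^{(\hat n)}$- and $B^{(\hat n)}$-terms that the statement requires, and to assemble the kernel $p(v)/(u^{(n)}_i - v)$ and its $v\leftrightarrow\tl v$ companion. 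The factor $p(u^{(n)}_i)^{-1}$ and the residue prescription $\Res{w\to u^{(n)}_i}$ arise naturally once one recognises that the ``unwanted'' term in \eqref{AB} carries a pole at $v = u^{(n)}_i$ (and at $v = \tl u^{(n)}_i$), and that extracting it symmetrically produces $\{p(w)T^{(\hat n)}_a(w;\dots)\}^w$ evaluated via its residue. The definition \eqref{Tn} of $T^{(\hat n)}_a$ and the $\wt R$-ordering it prescribes are exactly what make the $\wt R$-factors generated by \eqref{AB} reassemble into $T^{(\hat n)}_a$ rather than a disordered product.

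For the inductive step I would write $\mr B^{(n)}(\bm u^{(n)}) = \mr b^{(n)}_{\da_{m_n}\dda_{m_n}}(u^{(n)}_{m_n})\,\mr B^{(n)}(\bm u^{(n)}\backslash u^{(n)}_{m_n})\,\mr R^{(\hat n)}(u^{(n)}_{m_n};\dots)$ as in \eqref{Bn-factor}, move $\{p(v)A^{(\hat n)}_a(v)\}^v$ past the leading $\mr b^{(n)}$ using the $m_n = 1$ case, then apply the induction hypothesis to push it through $\mr B^{(n)}(\bm u^{(n)}\backslash u^{(n)}_{m_n})$. This produces: (i) a ``wanted'' term in which $A^{(\hat n)}$ has been fully converted into $T^{(\hat n)}_a(v;\bm u^{(n)}\backslash u^{(n)}_{m_n})$, which then must be combined with the leftover $\wt R$-factor from the first step and the trailing $\mr R^{(\hat n)}$ to rebuild $T^{(\hat n)}_a(v;\bm u^{(n)})$; (ii) unwanted terms indexed by $i < m_n$ coming from the induction hypothesis; and (iii) an unwanted term with $i = m_n$ coming from the first step. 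The bookkeeping of the $\check R$-matrices $\check R^{(\hat n,\hat n)}_{\da_j\da_{j-1}}\check R^{(\hat n,\hat n)}_{\dda_j\dda_{j-1}}$ in the final line of \eqref{pAB} is handled by Lemma \ref{L:Beta-symm}: the permutation $\bm u^{(n)}\to\bm u^{(n)}_{\sigma_i}$ that cycles $u^{(n)}_i$ to the last slot is realised by \eqref{B=BRR}, and the residual $\check R$-strings are precisely the price of that reordering, while the $R$-matrix tails of \eqref{Rn-prod} are absorbed into $\mr R^{(\hat n)}(u^{(n)}_i;\bm u^{(n)}\backslash u^{(n)}_i)$.

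The main obstacle is the reorganisation in step (i): showing that the $\wt R$-factor freed up when $\{p(v)A^{(\hat n)}_a(v)\}^v$ crosses the leading $\mr b^{(n)}_{\da_{m_n}\dda_{m_n}}(u^{(n)}_{m_n})$, together with the result of applying the induction hypothesis to the shorter operator, and the normalisation $\mr R^{(\hat n)}$, combine exactly into $\{p(v)T^{(\hat n)}_a(v;\bm u^{(n)})\}^v$ with no residual factors. This is a compatibility check between the $\wt R$-ordering built into \eqref{Tn}, the $R$-ordering built into \eqref{Bn}, and the Yang--Baxter moves \eqref{YBE}, \eqref{tYBE}; it is essentially the reason the denominators $(f^-(\tl u^{(n)}_i,u^{(n)}_j))^{\del_{\hat n n}}$ appear in \eqref{Bn}. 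A secondary difficulty is ensuring that the doubled-index conventions $\al(i)$, $\si(i)$ and the ``overlapping'' blocks of \eqref{block} do not generate extra terms in the $\hat n = n+1$ case; this is controlled by the observation (already used in Lemma \ref{L:RTT}) that $s_{\hat n l}(v)$ and $s_{l\hat n}(v)$ act compatibly on the relevant vacuum subspaces, so that the odd case formally mirrors the even one once the normalisation exponents $\del_{\hat n n}$ are inserted. I would finish by checking that both sides of \eqref{pAB} are symmetric under permutations of $\bm u^{(n)}$ via Lemma \ref{L:Beta-symm}, which makes the choice of $u^{(n)}_{m_n}$ as the distinguished root in \eqref{Bn-factor} harmless.
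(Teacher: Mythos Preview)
Your proposal assembles the right ingredients --- the single-excitation relation derived from \eqref{AB} and \eqref{symmD}, \eqref{symmB}, the factorisation \eqref{Bn-factor}, and Lemma~\ref{L:Beta-symm} --- but the inductive organisation has a genuine gap. After you push $\{p(v)A^{(\hat n)}_a(v)\}^v$ through the leading $\mr b^{(n)}_{\da_{m_n}\dda_{m_n}}(u^{(n)}_{m_n})$, the wanted part carries $\{p(v)\,T^{(\hat n)}_a(v;u^{(n)}_{m_n})\}^v$, i.e.\ $A^{(\hat n)}_a(v)$ dressed by $v$-dependent $\wt R$-factors sitting \emph{inside} the symmetriser. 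The induction hypothesis is the statement of \eqref{pAB} for $m_n-1$ excitations, which concerns $\{p(v)A^{(\hat n)}_a(v)\}^v$ acting on $\mr B^{(n)}(\bm u^{(n)}\backslash u^{(n)}_{m_n})$; it does not apply to $\{p(v)\,\wt R_{\da_{m_n}a}(u^{(n)}_{m_n}-v)\,\wt R_{\dda_{m_n}a}(\tl u^{(n)}_{m_n}-v)\,A^{(\hat n)}_a(v)\}^v$. Because the two summands of the symmetriser carry different $\wt R$-prefactors under $v\mapsto\tl v$, you cannot extract them and invoke the hypothesis on each piece. To make an induction work you would need either the unsymmetrised $AB$ relation as the inductive statement, or a strengthened hypothesis with $A^{(\hat n)}_a$ replaced by $T^{(\hat n)}_a(\,\cdot\,;X)$ for an arbitrary set $X$ of already-crossed auxiliary spaces; you do not mention either.

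The paper avoids this by a different organisation. Rather than induction, it argues term-by-term: after repeated use of the single-excitation relation, each contribution on the right is labelled by the argument of its rightmost $A^{(\hat n)}_a$, namely one of $v,u^{(n)}_1,\dots,u^{(n)}_{m_n}$ (and their tildes, handled by the built-in $v\leftrightarrow\tl v$ symmetry). The $v$-term is read off directly, with Yang--Baxter moves through the $R$-factors in $\mr B^{(n)}$ producing the correct $\wt R$-ordering inside $T^{(\hat n)}_a(v;\bm u^{(n)})$. The $u^{(n)}_{m_n}$-term is obtained by taking the unwanted branch at the \emph{first} exchange and then observing that, since the residue fixes $w=u^{(n)}_{m_n}$, the subsequent exchanges produce no further parameter-swapped contributions. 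The remaining $u^{(n)}_i$-terms are obtained not from an inductive unwinding but by first applying the cyclic permutation $\si_i$ via Lemma~\ref{L:Beta-symm} and then repeating the $u^{(n)}_{m_n}$ argument; this is precisely what generates the $\check R$-tails in \eqref{pAB}. Finally, your remark about extra terms in the $\hat n=n+1$ case is a red herring here: Lemma~\ref{L:pAB} is a statement at the level of the block relations \eqref{AB}--\eqref{symmB} and does not see the representation-theoretic features you invoke.
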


\begin{proof}
From \cite{GMR19}, relations \eqref{AB} and \eqref{symmDB} and properties of the $Q^{(\hat{n},\hat{n})}$ matrix operator (viz.~\eqref{IPQ}) lead to the following exchange relation with a single creation operator
\ali{ 
\big\{ p(v)\, A_a^{(\hat{n})}(v) \big\}^v \mr{b}^{(n)}_{\ddot a_i \dot a_i}(u^{(n)}_i) & = \mr{b}^{(n)}_{\ddot a_i \dot a_i}(u^{(n)}_{i}) \,\big\{ p(v)\, T^{(\hat{n})}_a(v; u^{(n)}_i) \big\}^v \label{ABetafull} \\
&\qu + \frac1{p(u^{(n)}_i)} \left\{ \frac{p(v)}{u^{(n)}_i-v}\, \mr{b}^{(n)}_{\ddot a_i \dot a_i}(v) \right\}^v \!\Res{w \to u^{(n)}_i}\! \big\{p(w)\, T^{(\hat{n})}_a(w; u^{(n)}_i) \big\}^w \nn
}
where $T^{(\hat{n})}_a(v; u^{(n)}_i)=\wh{R}_{\da_ia}(u^{(n)}_i-v)\,\wh{R}_{\dda_ia}(\tl{u}^{(n)}_i-v)\,A^{(\hat{n})}_a(v)$.
We extend this to the creation operator for $m_n$ excitations by the standard argument. 
Indeed, the right hand side of the equation consists of terms with $A^{(\hat{n})}_a(u)$ as the rightmost operator, for $u$ equal to each of $v, u_1^{(n)}, \dots, u_{m_n}^{(n)}$ and the corresponding tilded elements. 
Due to the $w \mapsto \tl{w}$ symmetry of $\{p(w)\, A^{(\hat{n})}_a(w) \}^w$ in \eqref{ABetafull}, it is sufficient to find those terms corresponding to $v, u_1^{(n)}, \dots, u_{m_n}^{(n)}$.

First, we find the term corresponding to $v$ to be $\mr{B}^{(n)}(\bm u^{(n)}) \, \{p(v)\,T^{(\hat{n})}_a(v;\bm u^{(n)})\}^v$. 
The required order of $\wh{R}$-matrices inside $T^{(\hat{n})}_a(v; \bm u^{(n)})$ is a result of Yang-Baxter moves through the $R$-matrices inside $\mr{B}^{(n)}(\bm u^{(n)})$.
Using factorisation \eqref{Bn-factor} we find the term corresponding to $u_{m_n}^{(n)}$ to be 
\eqn{
& \frac1{p(u^{(n)}_{m_n})} \left\{ \frac{p(v)}{u^{(n)}_{m_n}-v}\, \mr{b}^{(n)}_{\da_{m_n}\dda_{m_n}}(v) \right\}^v \mr{B}^{(n)}(\bm u^{(n)}\backslash u^{(n)}_{m_n}) \\[0.25em] & \qq \times \mr{R}^{(\hat n)}(u^{(n)}_{m_n};\bm u^{(n)}\backslash u^{(n)}_{m_n}) \Res{w \to u^{(n)}_{m_n}} \! \big\{p(w)\, T^{(\hat{n})}_a(w;\bm u^{(n)}) \big\}^w .
}
This is because, after applying \eqref{ABetafull} to $\mr{b}^{(n)}_{\da_{m_n}\dda_{m_n}}\!(u^{(n)}_{m_n})$, there can be no further contributions from the parameter-swapped term in the subsequent applications of \eqref{ABetafull}.

To find the remaining terms, we note that Lemma~\ref{L:Beta-symm} allows us to apply any permutation to the spectral parameters of the level-$n$ creation operator before applying the above argument.
By applying the permutation $\si_i : (1, \dots, i-1, i, i+1, \dots, m_n) \mapsto (1 \dots ,i-1,i+1, \dots, m_n,i)$, we obtain the term corresponding to $u^{(n)}_i$.
\end{proof}

The Lemma below states $Y(\mfgl_{k+1})$-based column-nested $AB$ and $DB$ exchange relations. They follow from Lemma \ref{L:RTT} using standard arguments, see e.g.~\cite{BR08}.

\begin{lemma} \label{L:TB}
The exchange relation for the level-$k$ creation operator \eqref{Bk} is
\ali{
& \big[T^{(k+1)}_a(v;\bm u^{(k+1\ldots n)})\big]{}^{(k)} \, \mr{B}^{(k)}(\bm u^{(k)};\bm u^{(k+1\ldots n)}) \nn \\[0.25em]
& \qq = \mr{B}^{(k)}(\bm u^{(k)};\bm u^{(k+1\ldots n)}) \, T^{(k)}_a(v;\bm u^{(k\ldots n)}) \nn \\[0.25em]
& \qq\qu + \sum_{i} \frac{1}{u^{(k)}_i - v} \, \mr{b}^{(k)}_{a^k_{m_k}}(v;\bm u^{k+1\ldots n})\,\mr{B}^{(k)}(\bm u^{(k)}\backslash u^{(k)}_{i};\bm u^{(k+1\ldots n)})  \nn \\[-0.25em]
& \qq\qq\qu\;\; \times \Res{w\to u^{(k)}_i} T^{(k)}_a(w;(\bm u^{(k)}_{\si_i},\bm u^{(k+1\ldots n)})) \prod_{j>i}^{\rightarrow} \check{R}^{(k,k)}_{a^{k}_j a^{k}_{j-1}}(u^{(k)}_i - u^{(k)}_j ) \,.
\intertext{Moreover,}
& \big[T^{(k+1)}_a(v;\bm u^{(k+1\ldots n)})\big]_{k+1,k+1} \, \mr{B}^{(k)}(\bm u^{(k)};\bm u^{(k+1\ldots n)}) \nn \\[0.25em]
& \qq = \mr{B}^{(k)}(\bm u^{(k)};\bm u^{(k+1\ldots n)}) \, f^-(v;\bm u^{(k)}) \,\big[T^{(k+1)}_a(v;\bm u^{(k+1\ldots n)})\big]_{k+1,k+1} \nn\\[0.25em]
& \qq\qu + \sum_{i} \frac{1}{u^{(k)}_i - v} \, \mr{b}^{(k)}_{a^k_{m_k}}(v;\bm u^{k+1\ldots n})\,\mr{B}^{(k)}(\bm u^{(k)}\backslash u^{(k)}_{i};\bm u^{(k+1\ldots n)}) \nn\\[-0.25em]
& \qq\qq\qu\;\; \times \Res{w\to u^{(k)}_i} f^-(w;\bm u^{(k)}) \,\big[T^{(k+1)}_a(w;\bm u^{(k+1\ldots n)})\big]_{k+1,k+1}\, \prod_{j>i}^{\rightarrow} \check{R}^{(k,k)}_{a^{k}_j a^{k}_{j-1}}(u^{(k)}_i - u^{(k)}_j ) \,.
}
Here we used the notation
\[
\mr{B}^{(k)}(\bm u^{(k)}\backslash u^{(k)}_{i};\bm u^{(k+1\ldots n)}) = \prod_{1\le j<i}^{\leftarrow} \mr{b}^{(k)}_{a^k_j}(u^{(k)}_j; \bm u^{(k+1\ldots n)}) \prod_{i\le j<m_k}^{\leftarrow} \mr{b}^{(k)}_{a^k_j}(u^{(k)}_{j+1}; \bm u^{(k+1\ldots n)}).
\]
\end{lemma}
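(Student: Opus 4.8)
The plan is to treat Lemma~\ref{L:TB} as a standard nested-algebraic-Bethe-ansatz statement and to derive it from the $R$-matrix exchange relations for the nested monodromy matrix $T^{(k+1)}$ provided by Lemma~\ref{L:RTT}, following the scheme of \cite{BR08}. First I would unpack the definition \eqref{Bk}--\eqref{bk}: the level-$k$ creation operator is built from the matrix elements $[T^{(k+1)}_a(u^{(k)}_i;\bm u^{(k+1\ldots n)})]_{k-j+1,k+1}$, i.e.\ from the last column of $[T^{(k+1)}_a]^{(k)}$ together with the $(k+1,k+1)$ ``new diagonal'' entry, so the whole problem lives inside a $Y(\mfgl_{k+1})$ (really $Y(\mfgl_k)\ltimes$one extra row/column) module exactly as in the usual $\mfgl_n$ nesting. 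The governing relations are the six-vertex-type consequences of \eqref{YBE-k}: writing $T:=T^{(k+1)}_a(v)$ in $(k+1)$-block form with $[T]^{(k)}$ in the top-left, column $\mf b$-operators $[T]_{j,k+1}$, row operators $[T]_{k+1,j}$, and the scalar $[T]_{k+1,k+1}=:D$, equation \eqref{YBE-k} gives (i) an $[T]^{(k)}\!\cdot\!\mf b$ exchange relation of precisely the $\mathrm{RTB}$ shape, (ii) a $\mf b\mf b$ commutation relation governed by $\check R^{(k,k)}$, and (iii) a $D\!\cdot\!\mf b$ relation with an inhomogeneous term proportional to $f^-$. These are the $k$-level analogues of \eqref{AB}, \eqref{BB} displayed earlier, and they are exactly what is needed.

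Next I would prove the single-excitation versions of both displayed identities by extracting the relevant matrix entries of relation (i) and relation (iii) above, evaluated on the level-$k$ vacuum subspace $(L^{(k+1)})^0\subset L^{(k)}$ — on which $[T^{(k+1)}_a]^{(k)}$ acts triangularly with the prescribed vacuum eigenvalues, by construction of the nested quantum spaces in Section~\ref{sec:spaces}. Crucially, because the $(k,k)$ top-left block $[T^{(k+1)}_a]^{(k)}$ itself satisfies an $RTT$-relation on $L^{(k)}$ (Lemma~\ref{L:RTT}, including the reduced-space caveat when $k+1=\hat n=n+1$, where $[T^{(\hat n)}_a]^{(n)}=T^{(n)\prime}_a$ on $L^{(n-1)\prime}$), there is no residual obstruction from the twisted-Yangian structure: after the first nesting step everything is a genuine $Y(\mfgl_k)$-type computation, as the Remark after Lemma~\ref{L:RTT} already notes. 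The residue term in the single-excitation case appears in the standard way: moving $[T]^{(k)}$ (resp.\ $D$) past $\mf b^{(k)}(u^{(k)}_i)$ produces a ``wanted'' term with the argument unchanged and an ``unwanted'' term with arguments swapped, and the coefficient $\tfrac1{u^{(k)}_i-v}$ together with $\Res_{w\to u^{(k)}_i}$ is precisely the bookkeeping that repackages the unwanted term.

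Then I would promote the single-excitation relations to $m_k$ excitations by the usual induction on $m_k$, writing $\mr B^{(k)}(\bm u^{(k)}) = \mf b^{(k)}_{a^k_1}(u^{(k)}_1)\,\mr B^{(k)}(\bm u^{(k)}\backslash u^{(k)}_1)$, pushing $[T^{(k+1)}_a(v)]^{(k)}$ (resp.\ its $(k+1,k+1)$ entry) through $\mf b^{(k)}_{a^k_1}(u^{(k)}_1)$ by the single-excitation formula, and then through the remaining $\mr B^{(k)}$ by the inductive hypothesis; the cross terms are reorganised using the $\mf b\mf b$ relation (ii), i.e.\ Lemma~\ref{L:Beta-symm}, which is exactly why the product $\prod_{j>i}^{\rightarrow}\check R^{(k,k)}_{a^k_j a^k_{j-1}}(u^{(k)}_i-u^{(k)}_j)$ and the permuted tuple $\bm u^{(k\ldots n)}_{\si^k_i}$ show up in the final answer. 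The only point requiring genuine care — and what I would flag as the main obstacle — is verifying that the inhomogeneous piece of relation (iii) really assembles into the clean factor $f^-(v;\bm u^{(k)})$ (and, inside the residue, $f^-(w;\bm u^{(k)})$) after all $m_k$ steps, i.e.\ checking that the scalar prefactors generated at each nesting step telescope correctly against the normalisations in \eqref{Bk}; this is a finite but fiddly rational-function identity, and it is the place where the conventions for $\check R^{(k,k)}$ in \eqref{R-check} and the ordering of $\wt R$-matrices in \eqref{Tk} must be tracked with precision. Everything else is, as claimed, the standard argument of \cite{BR08}.
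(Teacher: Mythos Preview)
Your proposal is correct and matches the paper's own treatment: the paper does not spell out a proof but simply records that the lemma ``follows from Lemma~\ref{L:RTT} and is a standard result, see e.g.~\cite{BR08}'', which is exactly the argument you outline. One small bookkeeping slip: given the left-ordered product in \eqref{Bk}, the factorisation you want is $\mr{B}^{(k)}(\bm u^{(k)}) = \mr{b}^{(k)}_{a^k_{m_k}}(u^{(k)}_{m_k})\,\mr{B}^{(k)}(\bm u^{(k)}\backslash u^{(k)}_{m_k})$, peeling off the $m_k$-th excitation from the left rather than the first --- this is why the unwanted terms carry the label $a^k_{m_k}$ in the statement --- but this does not affect the substance of your argument.
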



\subsection{Bethe vectors} \label{sec:bv}

Recall \eqref{eta} and define a {\it nested vacuum vector} by
\equ{ 
\eta^{\bm m} := (E^{(1)}_1)^{\ot m_1}  \ot \cdots \ot  (E^{(n-1)}_1)^{\ot m_{n-1}} \ot  (E^{(\hat{n})}_{\hat 1})^{\ot m_n} \ot (E^{(\hat{n})}_1)^{\ot m_n} \ot \eta \,. \label{eta-m}
}
Note that $E^{(\hat{n})}_{\hat 1} = E^{(n+1)}_2$ when $\hat{n}=n+1$.
For each $1\le k \le n$ we define a {\it level-$k$} (off-shell) Bethe vector with (off-shell) Bethe roots $\bm u^{(1\ldots k)}$ and free parameters $\bm u^{(k+1\ldots n)}$~by
\equ{
\label{BV}
\Psi(\bm u^{(1\ldots k)}\,|\,\bm u^{(k+1\ldots n)}) := \prod_{i\le k}^{\leftarrow} \mr{B}^{(i)}(\bm u^{(i)}; \bm u^{(i+1\ldots n)})  \cdot \eta^{\bm m} .
}
We will say that vector $\eta^{\bm m}$ is the {\it reference vector} of this Bethe vector. Note that, by construction, $\Psi(\bm u^{(1\ldots k)}\,|\,\bm u^{(k+1\ldots n)}) \in L^{(k)}$ except when $\hat{n}=n+1$ and $k=n-1$, $\Psi(\bm u^{(1\ldots n-1)}\,|\,\bm u^{(n)}) \in \ol{L}{}^{(n-1)}$.

\smallskip

The Lemma below follows by a repeated application of Lemma \ref{L:Beta-symm}.

\begin{lemma} \label{L:BV-symm}
Bethe vector $\Psi(\bm u^{(1\ldots k)}\,|\,\bm u^{(k+1\ldots n)})$ is invariant under interchange of any two of its Bethe roots, $u^{(l)}_i$ and $u^{(l)}_j$, for all admissible $i$, $j$, and $l$.
\end{lemma}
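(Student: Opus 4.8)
The plan is to prove Lemma \ref{L:BV-symm} by induction on the number of nesting levels, peeling off the creation operators one at a time from the outside in and invoking Lemma \ref{L:Beta-symm} at each level. First I would fix a level $l$ with $1 \le l \le k$ and two indices $i$, $j$; since the symmetric group is generated by adjacent transpositions, it suffices to treat the case $j = i+1$. The Bethe vector \eqref{BV} is the ordered product $\prod_{i\le k}^{\leftarrow} \mr{B}^{(i)}(\bm u^{(i)};\bm u^{(i+1\ldots n)})$ applied to $\eta^{\bm m}$, so the factors $\mr{B}^{(i')}$ with $i' < l$ involve only the tuples $\bm u^{(i')},\ldots,\bm u^{(n)}$ with $i' < l$ together with $\bm u^{(l)}$ appearing as a free parameter tuple; the interchange $u^{(l)}_i \leftrightarrow u^{(l)}_{i+1}$ must be shown to commute past all of these.

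The key step is to observe that by Lemma \ref{L:Beta-symm}, equation \eqref{B=BR} (or \eqref{B=BRR} at level $n$), we have $\mr{B}^{(l)}(\bm u^{(l)};\bm u^{(l+1\ldots n)}) = \mr{B}^{(l)}(\bm u^{(l)}_{i\leftrightarrow i+1};\bm u^{(l+1\ldots n)}) \, \check{R}^{(l,l)}_{a^l_{i+1}a^l_i}(u^{(l)}_i - u^{(l)}_{i+1})$, and the $\check{R}$-matrix here acts only on the auxiliary spaces $V^{(l)}_{a^l_i} \ot V^{(l)}_{a^l_{i+1}}$. When we apply $\mr{b}^{(l-1)}_{a^{l-1}_p}(u^{(l-1)}_p;\bm u^{(l\ldots n)})$ from the left, as in \eqref{bk}, it involves $\big[T^{(l)}_a(u^{(l-1)}_p;\bm u^{(l\ldots n)})\big]_{l-(\cdot)+1,\,l+1}$, and by Lemma \ref{L:RTT} the nested monodromy $T^{(l)}_a$ intertwines with the $R$-matrix $R^{(l,l)}_{a^l_{i} a^l_{i+1}}$ via \eqref{YBE-k}. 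Thus the $\check{R}$-factor produced by \eqref{B=BR} can be pushed to the right through every subsequent creation operator $\mr{B}^{(l-1)},\ldots,\mr{B}^{(1)}$, and finally it hits the reference vector $\eta^{\bm m}$. On $\eta^{\bm m}$ — which is a tensor product of highest-weight-type vectors $E^{(l)}_1$ in the relevant auxiliary factors — the $\check{R}$-matrix $\check{R}^{(l,l)}_{a^l_{i+1}a^l_i}(u)$ acts as the identity, since $\check{R}(u) = \tfrac{u}{u-1} P R(u)$ fixes $E^{(l)}_1 \ot E^{(l)}_1$ (one checks $P(E_1\ot E_1) = E_1\ot E_1$ and $R(u)(E_1\ot E_1) = (1 - u^{-1})(E_1 \ot E_1)$, so the prefactor cancels). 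For level $n$ the reference factor is $(E^{(\hat n)}_{\hat 1})^{\ot m_n} \ot (E^{(\hat n)}_1)^{\ot m_n}$ and both $\check{R}$-matrices in \eqref{B=BRR} act trivially by the same computation. Therefore $\Psi(\bm u^{(1\ldots k)}\,|\,\bm u^{(k+1\ldots n)}) = \Psi(\ldots, \bm u^{(l)}_{i\leftrightarrow i+1},\ldots\,|\,\ldots)$, which is the claim for adjacent transpositions; iterating gives invariance under the full symmetric group acting on each tuple $\bm u^{(l)}$.

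I expect the main obstacle to be the bookkeeping of \emph{where} the $\check{R}$-factor lands and \emph{in which auxiliary spaces} it acts as it is transported rightward: one must verify that at each stage the relevant intertwining identity is exactly one of \eqref{YBE-k}, \eqref{B=BR}, or the restricted version for $T^{(n)\prime}_a$ guaranteed by Lemma \ref{L:RTT}, and that no anomalous contact terms (the residue terms of Lemmas \ref{L:pAB} and \ref{L:TB}) appear — they do not, because we are commuting the creation operators among themselves and through the pure $R$-matrix, not through the $A$-operator. A secondary point requiring care is the odd case $\hat n = n+1$: there the normalisation denominators $(f^-(\tl u^{(n)}_i,u^{(n)}_j))^{\del_{\hat n n}}$ in \eqref{Bn} are switched off, so one must confirm that \eqref{B=BRR} as stated — with no extra scalar factors — is what holds, which is precisely the content of Lemma \ref{L:Beta-symm} as quoted. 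Beyond these checks the argument is a routine induction, so I would present it compactly: reduce to adjacent transpositions, apply Lemma \ref{L:Beta-symm} at the relevant level, transport the $\check{R}$-factor through the remaining creation operators using Lemma \ref{L:RTT}, and kill it on $\eta^{\bm m}$.
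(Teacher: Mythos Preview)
Your argument is correct and follows exactly the approach the paper indicates, which simply states that the lemma follows by repeated application of Lemma~\ref{L:Beta-symm}. One small correction: the intertwining relation that transports $\check{R}^{(l,l)}_{a^l_{i+1}a^l_i}$ rightward through the lower-level creation operators is the transposed Yang--Baxter equation \eqref{tYBE} applied to the $\wt{R}$-factors inside $T^{(l)}_a$ (which simultaneously effects the swap $u^{(l)}_i \leftrightarrow u^{(l)}_{i+1}$ in those operators), not the $RTT$-relation \eqref{YBE-k} of Lemma~\ref{L:RTT}, which concerns two copies of the auxiliary space $a$ rather than the inhomogeneity spaces $a^l_i,a^l_{i+1}$.
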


The last technical result that we will need is the action of $s_{\hat{n}\hat{n}}(v) = [S^{(N)}_a(v)]_{\hat{n}\hat{n}}$, viz.~\eqref{S(u).xi}, on a level-$n$ Bethe vector, when $\hat{n}=n+1$. It~is~motivated by the following relation in $Y^+(\mfgl_{2n+1})(\!(u^{-1},v^{-1})\!)$ for $1\le k\le n$: 
\aln{
s_{\hat{n} \hat{n}}(v)\,s_{k\hat{n}}(u) = f^-(v,u) \, f^+(v,\tl{u}) \, s_{k\hat{n}}(u)\, s_{\hat{n} \hat{n}}(v) - \bigg\{ \frac{p(v)}{u-v}\, s_{k \hat{n}}(v) \bigg\}^v s_{\hat{n} \hat{n}}(u) \,.
}
We postpone the proof of the Lemma below to Section \ref{sec:snn-proof}. 

\begin{lemma} \label{L:snn}
When $\hat n = n+1$, 
\ali{
\label{snn}
s_{\hat{n}\hat{n}}(v)\,\Psi(\bm u^{(1\ldots n)}) & = f^-(v, \bm u^{(n)}) \, f^+(v, \tl{\bm u}^{(n)}) \,\mu_{\hat{n}}(v)\,\Psi(\bm u^{(1\ldots n)}) \nn\\ 
& \qu + \sum_{i} \frac{1}{p(u^{(n)}_i)} \left\{ \frac{p(v)}{u^{(n)}_i-v} \,\, \mr{b}^{(n)}_{\da_{m_n} \dda_{m_n}}\!(v) \right\}^v  \mr{B}^{(n)}(\bm u^{(n)}\backslash u^{(n)}_{i}) \,\, \mr{R}^{(\hat{n})}(u^{(n)}_{i},\bm u^{(n)}\backslash u^{(n)}_{i}) \nn\\[0.25em] & \qq\times \Res{w\to u^{(n)}_i} f^-(w, \bm u^{(n)}) \, f^+(w, \tl{\bm u}^{(n)}) \, \mu_{\hat{n}}(w) \, \Psi(\bm u^{(1\ldots n-1)}\,|\,\bm u^{(n)}_{\si_i}) \,.
}
\end{lemma}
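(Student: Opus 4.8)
The plan is to deduce Lemma~\ref{L:snn} from Lemma~\ref{L:pAB} together with the symmetry relation \eqref{symm}, using the observation that, when $\hat n = n+1$, the generating matrix $S^{(2\hat n)}(u)$ has a doubled middle row and column, so that the matrix operator $A^{(\hat n)}(u)$ ``overlaps'' with $D^{(\hat n)}(u)$ precisely in the entry labelled by $\hat n$. Concretely, $s_{\hat n\hat n}(v) = \big[A^{(\hat n)}(v)\big]_{\hat n\hat n} = \big[D^{(\hat n)}(v)\big]_{11}$, so the scalar operator $s_{\hat n\hat n}(v)$ is recovered by taking the $(\hat n,\hat n)$ matrix element of the operator-valued identity \eqref{pAB}. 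First I would rewrite the left-hand side: since $\mu_{\hat n}(v) = -\mu_{\hat n}(\tl v)$ would vanish — in fact $\mu_{\hat n}(v)=0$ identically, as noted after \eqref{mu(u)} — one must be careful, so instead I would work with $\big\{p(v)\,A^{(\hat n)}_a(v)\big\}^v$ acting on $\Psi(\bm u^{(1\ldots n)})$, extract the $(\hat n,\hat n)$ entry, and use the $v\leftrightarrow\tl v$ symmetry to recover $s_{\hat n\hat n}(v)$ itself from the symmetrised combination; the point is that $\big\{p(v)\,[A^{(\hat n)}_a(v)]_{\hat n\hat n}\big\}^v$ and $p(v)s_{\hat n\hat n}(v) + p(\tl v)s_{\hat n\hat n}(\tl v)$ differ in a controlled way governed by \eqref{symmD} restricted to the $(\hat n,\hat n)$ entry.

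Next I would handle the three terms on the right of \eqref{pAB} entry-by-entry in the $(\hat n,\hat n)$ slot. For the first term, $\mr B^{(n)}(\bm u^{(n)})\,\big\{p(v)\,[T^{(\hat n)}_a(v;\bm u^{(n)})]_{\hat n\hat n}\big\}^v$: expanding $T^{(\hat n)}_a$ via \eqref{Tn}, the $(\hat n,\hat n)$ matrix element of the product of $\wt R$-matrices acting on the nested vacuum $\eta^{\bm m}$ (on the subspaces $W^{(\hat n)}_{\dot{\bm a}}$, $W^{(\hat n)}_{\ddot{\bm a}}$) collapses, because $E^{(\hat n)}_{\hat 1}$ and $E^{(\hat n)}_1$ are eigenvectors of the relevant restricted matrix elements; this produces exactly the scalar factor $f^-(v,\bm u^{(n)})\,f^+(v,\tl{\bm u}^{(n)})$ acting on $[A^{(\hat n)}_a(v)]_{\hat n\hat n}\cdot\eta = s_{\hat n\hat n}(v)\cdot\eta = \mu_{\hat n}(v)\,\eta$ — here one needs the eigenvalue $\mu_{\hat n}(v)$ from \eqref{mu(u)} and the ``$\tl u$'' contributions coming from the $\dda_i$-strand, which convert $f^-$ into $f^+$ with $\tl{\bm u}^{(n)}$. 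The second (residue) term in \eqref{pAB} already has the shape of the second term in \eqref{snn}; restricting to the $(\hat n,\hat n)$ entry and pushing $[T^{(\hat n)}_a(w;\bm u^{(n)}_{\sigma_i})]_{\hat n\hat n}$ through the creation operators $\mr B^{(n)}(\bm u^{(n)}\backslash u^{(n)}_i)$ and down to $\eta$ using Lemma~\ref{L:TB} (its second displayed identity, for the $(k+1,k+1)$ entry, iterated through all lower nesting levels $k = n-1,\dots,2$) yields the factor $f^-(w,\bm u^{(n)})f^+(w,\tl{\bm u}^{(n)})\mu_{\hat n}(w)$ times $\Psi^{(n-1)}(\bm u^{(1\ldots n-1)};\bm u^{(n)}_{\sigma_i})$, as required. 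The third term in \eqref{pAB}, involving $B^{(\hat n)}_b(v)\,Q^{(\hat n,\hat n)}_{ab}\,D^{(\hat n)}_a(u)$, has no diagonal $(\hat n,\hat n)$ contribution of the right type — more precisely, I expect its $(\hat n,\hat n)$ entry to vanish or to be absorbed, because $Q$ couples index $\hat n$ with index $\overline{\hat n}$ which lies outside the relevant block, and $s_{ij}(u)\cdot\eta = 0$ for $i>j$ kills the leftover; this cancellation is what makes the ``middle entry'' act so cleanly.

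The main obstacle, and the step I would spend the most care on, is the bookkeeping that turns the product of $\wt R$-matrix entries appearing in $[T^{(\hat n)}_a(v;\bm u^{(n)})]_{\hat n\hat n}$ into the clean scalar $f^-(v,\bm u^{(n)})f^+(v,\tl{\bm u}^{(n)})$ when evaluated against $\eta^{\bm m}$ — i.e.\ showing that all off-diagonal ``leakage'' through the $W^{(\hat n)}_{\dot{\bm a}}\ot W^{(\hat n)}_{\ddot{\bm a}}$ strands cancels, exactly as in the argument proving Lemma~\ref{L:RTT}, but now tracking the numerical coefficient rather than just stability. A secondary subtlety is the interplay between the symmetrised bracket $\{\cdot\}^v$ and the symmetry relations \eqref{symmD}–\eqref{symmB}: one must verify that extracting the $(\hat n,\hat n)$ entry of $\big\{p(v)A^{(\hat n)}_a(v)\big\}^v$ and then ``de-symmetrising'' reproduces $s_{\hat n\hat n}(v)\Psi$ on the nose (not merely its symmetrisation), which uses the motivating $Y^+(\mfgl_{2n+1})$ relation quoted just before the Lemma statement as a consistency check. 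Once these two points are settled, assembling the three pieces gives \eqref{snn} directly; the rest is routine and the detailed verification is deferred to Section~\ref{sec:snn-proof} as the authors indicate.
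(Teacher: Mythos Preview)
Your approach has a genuine gap at the step you yourself flag as ``the main obstacle''. When you take the $(\hat n,\hat n)$ entry of the first term on the right of \eqref{pAB} and let it act on $\Psi^{(n-1)}(\bm u^{(1\ldots n-1)}\,|\,\bm u^{(n)})$, the operator $[T^{(\hat n)}_a(v;\bm u^{(n)})]_{\hat n\hat n}$ does \emph{not} act as the scalar $f^-(v,\bm u^{(n)})f^+(v,\tl{\bm u}^{(n)})\mu_{\hat n}(v)$. Already for $m_n=1$ one finds, for $\xi\in(L^{(n)})^0$ and $2\le j\le\hat n$,
\[
[T^{(\hat n)}_a(v)]_{\hat n\hat n}\cdot E^{(\hat n)}_j\ot E^{(\hat n)}_1\ot\xi
= f^+(v,\tl u^{(n)}_1)\,\mu_{\hat n}(v)\,E^{(\hat n)}_j\ot E^{(\hat n)}_1\ot\xi \;+\;(\text{two further terms}),
\]
the extra terms sending the $\da$-component to $E^{(\hat n)}_1\notin V^{(\hat n)\prime}_{\da}$ and involving the off-diagonal operators $s_{r,\hat n}(v)$ with $r<\hat n$ acting nontrivially on $\xi$. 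Even the diagonal piece lacks the factor $f^-(v,u^{(n)}_1)$. For general $\Psi^{(n-1)}$ the $\da$-factors lie anywhere in $V^{(\hat n)\prime}_{\da}$, so these extra contributions proliferate. Your proposed cure, ``iterating the second identity of Lemma~\ref{L:TB}'', does not apply: that identity concerns $[T^{(k+1)}_a]_{k+1,k+1}$ commuting through $\mr B^{(k)}$, whereas here you need $[T^{(\hat n)}_a]_{\hat n,\hat n}$ through the \emph{lower} operators $\mr B^{(k)}$, $k<n$, which is a different relation. (Also, \eqref{pAB} has two terms on the right, not three; the $B\,Q\,D$ term you mention lives in \eqref{AB}, not \eqref{pAB}.) Establishing that all the extra pieces reassemble into the unwanted terms of \eqref{snn} is essentially the whole proof, not residual bookkeeping.

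The paper therefore proceeds quite differently. Rather than extracting a matrix entry from \eqref{pAB}, it introduces an auxiliary Bethe vector $\Psi_{m_n+1}(\bm u^{(1\ldots n)}\cup v)$ with one additional level-$n$ root $v$ and the modified reference vector $(E^{(\hat n)}_{12})_{\da_{m_n+1}}\eta^{\bm m}$, and evaluates it in two independent ways: once via Lemma~\ref{L:Psi-j} combined with the expansion~\eqref{Psi-exp}, and once via the analogue of \eqref{Psi-exp} with the extra excitation split off first. Both evaluations are then pushed down to explicit sums over $E^{(\hat n)}_{k}\ot E^{(\hat n)}_{l}$-components using the composite-model techniques from the proof of Proposition~\ref{P:RR-odd}. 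Equating the two resulting expressions yields \eqref{snn} (after first rewriting it in the equivalent form~\eqref{snn-new} via \eqref{symmB}). The argument thus rests on recurrence relations rather than on the $AB$ exchange relation.
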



\subsection{Transfer matrix and Bethe equations} \label{sec:tm}

We define the {\it transfer matrix} by
\equ{
\tau(v) := \tr_a \big( M^{(N)}_a S^{(N)}_a(v) \big) = \tr_a \big( \al^{(\hat{n})}_a \,\big[M^{(N)}_a\big]{}^{(\hat{n})} \big\{  p(v)\, A^{(\hat{n})}_a(v) \big\}^v \big)
\label{tm}
}
where $M^{(N)} = \sum_{i}\veps_{i}\,E^{(N)}_{ii}$ with $\veps_{i}\in\C^\times$ satisfying $\veps_{N-i+1} = \veps_i$ is a twist matrix, a solution to the dual twisted reflection equation 
\eqa{
& (M^{(N)}_b(v))^{t_b}\, \wh{R}^{(N,N)}_{ab}(u-\tl{v})\, (M^{(N)}_a(u))^{t_a}\, R^{(N,N)}_{ab}(v-u) \\ & \qq = R^{(N,N)}_{ab}(v-u) \, (M^{(N)}_a(u))^{t_a} \, \wh{R}^{(N,N)}_{ab}(u-\tl{v}) \, (M^{(N)}_b(v))^{t_b} 
\label{dtRE}
}
ensuring commutativity of transfer matrices, see Appendix \ref{app:tm}. Here $t$ denotes the usual matrix transposition. The right hand side of \eqref{tm} follows from the symmetry relation \eqref{symmDB}; the $\al^{(\hat{n})}$ is a diagonal matrix with entries $\al_k=1$ for all $k$ except $\al_{\hat{n}}=1/2$ when $\hat{n}=n+1$, which resolves the double-counting of $s_{\hat{n}\hat{n}}(v)$.

\begin{thrm} \label{T:eig} 
The Bethe vector $\Psi(\bm u^{(1\ldots n)})$ is an eigenvector of $\tau(v)$ with the eigenvalue
\equ{ 
\label{eig}
\Lambda(v;\bm u^{(1\ldots n)}) := \sum_{k\le \hat n} \al_k \veps_k \big\{ p(v)\,\Ga_k(v;\bm u^{(1\ldots n)}) \big\}^v
}
where $p(v)$ is given by \eqref{p(v)} and
\equ{
\Ga_k(v;\bm u^{(1\ldots n)}) := f^-(v,\bm u^{(k-1)}) \, f^+(v,\bm u^{(k)}) \, \mu_k(v) \qu\text{for $k<\hat n$} \label{Ga_k}
}
and
\ali{ 
\Ga_{\hat n}(v;\bm u^{(1\ldots n)}) := 
\begin{cases} 
f^-(v,\bm u^{(n-1)}) \, f^+(v,\bm u^{(n)}) \, f^+(v,\tl{\bm u}^{(n)}) \, \mu_n(v) & \text{when } \hat{n}=n,  \\[0.5em]
f^-(v,\bm u^{(n)}) \, f^+(v,\tl{\bm u}^{(n)}) \, \mu_{n+1}(v) & \text{when } \hat{n}=n+1 
\end{cases} \label{Ga_n}
}
provided $\Res{v\to u^{(k)}_j} \Lambda(v;\bm u^{(1\ldots n)}) = 0$ for all admissible $k$ and $j$; these equations are called Bethe equations.
\end{thrm}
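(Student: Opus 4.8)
The plan is to compute $\tau(v)\cdot\Psi(\bm u^{(1\ldots n)})$ directly by commuting the operator $\tr_a\big[M^{(2\hat n)}_a\big]^{(\hat n)}\{p(v)\,A^{(\hat n)}_a(v)\}^v$ through the product of creation operators defining the Bethe vector in \eqref{BV}, collecting the ``wanted'' (diagonal) term and showing all ``unwanted'' terms cancel under the Bethe equations. The computation splits into two stages corresponding to the two-step structure of the nesting: first push $\{p(v)\,A^{(\hat n)}_a(v)\}^v$ past the level-$n$ creation operator $\mr B^{(n)}(\bm u^{(n)})$ using Lemma~\ref{L:pAB}, which produces $\mr B^{(n)}(\bm u^{(n)})\,\{p(v)\,T^{(\hat n)}_a(v;\bm u^{(n)})\}^v$ plus $m_n$ unwanted terms; then push $\big[T^{(\hat n)}_a(v;\bm u^{(n)})\big]^{(n)}$ (and the entry $\big[T^{(\hat n)}_a(v)\big]_{\hat n\hat n}=s_{\hat n\hat n}(v)$ when $\hat n=n+1$) down through $\mr B^{(n-1)},\dots,\mr B^{(1)}$ using Lemma~\ref{L:TB} and, in the odd case, Lemma~\ref{L:snn}.

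**Key steps.** First I would note that, by the Remark after Lemma~\ref{L:RTT}, once $\{p(v)\,A^{(\hat n)}_a(v)\}^v$ has been moved to the right past $\mr B^{(n)}$ the remaining problem is exactly the standard $Y(\mfgl_n)$ transfer-matrix eigenvalue computation applied to the nested monodromy $T^{(n)\,(\text{or }n)\prime}_a$ acting on $(L^{(n)})^0$ with the shifted Bethe roots $\bm v^{(1\ldots n-2)}=\bm u^{(1\ldots n-2)}$, $\bm v^{(n)}=(\bm u^{(n)},\tl{\bm u}^{(n)})$ (even) or $\bm v^{(n)}=\bm u^{(n)}$ (odd). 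Second, for the diagonal/wanted term: iterating Lemma~\ref{L:TB} (the ``moreover'' part, which gives the action of the diagonal entry $[T^{(k+1)}]_{k+1,k+1}$) together with the vacuum eigenvalues $s_{ii}(u)\cdot\eta=\mu_i(u)\,\eta$ from \eqref{mu(u)} yields, for each $k\le\hat n$, a contribution proportional to $\al_k\veps_k\,\{p(v)\,\Ga_k(v;\bm u^{(1\ldots n)})\}^v\,\Psi(\bm u^{(1\ldots n)})$; the $f^\pm$ factors in \eqref{Ga_k}–\eqref{Ga_n} are precisely the products $f^-(v;\bm u^{(k)})$ picked up at each nesting level, and the doubled contribution $f^+(v,\tl{\bm u}^{(n)})$ in $\Ga_{\hat n}$ together with the factor $\al_{\hat n}=1/2$ accounts for the row/column doubling. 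In the odd case the $s_{\hat n\hat n}(v)$ piece is handled separately by Lemma~\ref{L:snn}, whose wanted term supplies $\Ga_{n+1}$. Third, for the unwanted terms: each unwanted term produced along the way has a creation operator $\mr b$ evaluated at $v$ (or $\tl v$) and a residue at some Bethe root; I would show, using Lemma~\ref{L:BV-symm} and Lemma~\ref{L:Beta-symm} to symmetrise, that the total coefficient of each such ``$v$ in place of $u^{(k)}_j$'' term reorganises into $\Res_{v\to u^{(k)}_j}\Lambda(v;\bm u^{(1\ldots n)})$ times a Bethe-vector-valued expression, hence vanishes exactly when the Bethe equations hold.

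**Main obstacle.** The delicate point is the bookkeeping at the ``seam'' between the twisted-Yangian level $n$ and the Yangian level $n-1$: one must check that the unwanted terms generated by Lemma~\ref{L:pAB} (which come with the factors $1/p(u^{(n)}_i)$, the symmetrised combination $\{p(v)/(u^{(n)}_i-v)\,\mr b^{(n)}_{\da_{m_n}\dda_{m_n}}(v)\}^v$, and the product of check-$R$-matrices) combine correctly with the unwanted terms from the subsequent $Y(\mfgl_n)$ reduction so that the residue at $u^{(n)}_i$ of the full eigenvalue $\Lambda$ appears. In the even case $\bm u^{(n)}$ and $\tl{\bm u}^{(n)}$ are both genuine level-$n$ Bethe roots of the reduced $Y(\mfgl_n)$ model, so the $w\mapsto\tl w$ symmetry of $\{p(w)\,T^{(\hat n)}_a(w)\}^w$ is essential to avoid double counting; in the odd case one must instead verify that the $s_{\hat n\hat n}(v)$ contribution from Lemma~\ref{L:snn} has exactly the unwanted-term structure needed to cancel against the $k=\hat n$ residue, and that $\mu_{\hat n}(v)\equiv 0$ does \emph{not} make $\Ga_{n+1}$ vanish because the relevant residues of $\mu_{\hat n}$ at the $u^{(n)}_i$ are nonzero. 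I expect the rest (convergence of the formal series, stability of $(L^{(n)})^0$ under the nested monodromy, the standard $Y(\mfgl_n)$ induction) to be routine given Lemmas~\ref{L:RTT}–\ref{L:snn}.
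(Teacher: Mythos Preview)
Your approach is correct and matches the paper's own proof: push $\{p(v)\,A^{(\hat n)}_a(v)\}^v$ through $\mr B^{(n)}$ via Lemma~\ref{L:pAB}, reduce to the $Y(\mfgl_n)$ transfer-matrix problem via Lemmas~\ref{L:RTT} and~\ref{L:TB}, and in the odd case handle the $s_{\hat n\hat n}(v)$ piece separately by Lemma~\ref{L:snn}. One clarification on your ``main obstacle'': the remark $\mu_{\hat n}=0$ after \eqref{mu(u)} refers to the weight component of $\bm\mu$ (forced by the symmetry $\mu_{N-i+1}=-\mu_i$), not to the function $\mu_{\hat n}(v)$, which is a nonzero rational function of $v$; hence $\Ga_{n+1}(v)$ does not vanish and this point is not an obstacle at all.
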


\begin{proof}
When $\hat{n}=n$, this is a restatement of Theorems 4.3 and 4.4 in \cite{GMR19}. We will briefly recall the main steps of the proofs therein. They will provide a backbone of the proof of the more complex $\hat{n}=n+1$ case. 

\smallskip

{\noindent\it The $\hat{n}=n$ case.} We start by noticing that
\equ{
\prod_{i<j\le m_n}^{\rightarrow} \check{R}^{(\hat{n}, \hat{n})}_{\da_{j} \da_{j-1}}(u^{(n)}_i - u^{(n)}_j) \, \check{R}^{(\hat{n}, \hat{n})}_{\dda_j \dda_{j-1}}(u^{(n)}_j - u^{(n)}_{i}) \, \Psi(\bm u^{(1\ldots n-1)}\,|\,\bm u^{(n)})  = \Psi(\bm u^{(1\ldots n-1)}\,|\,\bm u^{(n)}_{\si_i}) 
\label{RR-psi}
}
where $\bm u^{(n)}_{\si_i} = (u_1, \ldots, u_{i-1},u_{i+1},\ldots, u_{m_n}, u_i)$. This identity is a consequence of Yang-Baxter moves and the identities
\equ{
\check{R}^{(\hat{n}, \hat{n})}_{\da_{j} \da_{j-1}}(u^{(n)}_i - u^{(n)}_j)  \cdot \eta^{\bm m} = \eta^{\bm m} , \qq \check{R}^{(\hat{n}, \hat{n})}_{\dda_j \dda_{j-1}}(u^{(n)}_j - u^{(n)}_{i}) \cdot \eta^{\bm m} = \eta^{\bm m}
}
which are computed using \eqref{T-rest-1} and \eqref{eta-m}. 

Next, using \eqref{BV} and \eqref{tm}, we write
\[
\tau(v)\,\Psi(\bm u^{(1\ldots n)}) = \tr_a \Big( \big[M^{(N)}_a\big]{}^{(n)}\, \big\{ p(v)\, A^{(n)}_a(v) \big\}^v \mr{B}^{(n)}(\bm u^{(n)}) \Big)\, \Psi(\bm u^{(1\ldots n-1)}\,|\,\bm u^{(n)}) .
\]
Lemma \ref{L:pAB} allows us to exchange $\big\{ p(v)\, A^{(n)}_a(v) \big\}^v$ and $\mr{B}^{(n)}(\bm u^{(n)})$. Applying \eqref{RR-psi} to the result gives 
\ali{
\tau(v)\,\Psi(\bm u^{(1\ldots n)}) & = \mr{B}^{(n)}(\bm u^{(n)}) \, \tau(v;\bm u^{(n)})\, \Psi(\bm u^{(1\ldots n-1)}\,|\,\bm u^{(n)}) \el 
& \qu + \sum_{i} \frac1{p(u^{(n)}_i)} \left\{ \frac{p(v)}{u^{(n)}_i-v}\, \mr{b}^{(n)}_{\da_{m_n} \dda_{m_n}}(v) \right\}^v \mr{B}^{(n)}(\bm u^{(n)}\backslash u^{(n)}_{i}) \nn 
\\[0.2em] 
& \qq \times \mr{R}^{(\hat{n})}(u^{(n)}_i; \bm u^{(n)}\backslash u^{(n)}_{i})  \Res{w \to u^{(n)}_i} \tau(w;\bm u^{(n)}_{\si_i}) \, \Psi(\bm u^{(1\ldots n-1)}\,|\,\bm u^{(n)}_{\si_i}) 
\label{tm-Psi}
}
where
\[
\tau(v;\bm u^{(n)}) := \tr_a \Big( \big[M^{(N)}_a\big]{}^{(n)} \big\{ p(v)\, T^{(n)}_a(v;\bm u^{(n)}) \big\}^v \Big)
\]
is a nested transfer matrix. It remains to compute the action of $\tau(v;\bm u^{(n)})$ on the nested Bethe vector $\Psi(\bm u^{(1\ldots n-1)}\,|\,\bm u^{(n)}) \in L^{(n-1)}$. By Lemma \ref{T-rest}, this can be achieved using $Y(\mfgl_n)$-type nested Bethe ansatz techniques assisted by Lemmas \ref{L:TB} and \ref{L:BV-symm} leading to the eigenvalue \eqref{eig} and the corresponding Bethe equations.

\smallskip

{\noindent\it The $\hat{n}=n+1$ case.} In this case we can not apply Lemma \ref{L:pAB} directly since this would lead to the following nested transfer matrix
\aln{
\tau(v;\bm u^{(n)}) &= \tr_a \Big( \al^{(\hat{n})}_a\, \big[M^{(N)}_a\big]{}^{(\hat{n})}  \big\{ p(v)\, T^{(\hat{n})}_a(v;\bm u^{(n)}) \big\}^v \Big) \nn\\
& = \tr_a \Big( \big[M^{(N)}_a\big]{}^{(n)}  \big\{ p(v)\, \big[T^{(\hat{n})}_a(v;\bm u^{(n)})\big]{}^{(n)} \big\}^v \Big) + \frac12\,\veps_{\hat{n}} \big\{ p(v)\, \big[T^{(\hat{n})}_a(v;\bm u^{(n)})\big]_{\hat{n}\hat{n}} \big\}^v  .
}
However, the space $\ol{L}{}^{(n-1)}$ is not stable under the action of $\big[ T^{(\hat{n})}_a(v;\bm u^{(n)})\big]_{\hat{n}\hat{n}}$. This is because $\big[ T^{(\hat{n})}_a(v;\bm u^{(n)})\big]_{\hat{n}\hat{n}}$ has operators $\big[\wh{R}^{(\hat{n},\hat{n})}_{\da_ia}(u^{(n)}_i - v)\big]_{\hat{n}j}$ with $j\le n$ that map $E^{(\hat{n})}_{\hat{n}-j+1} \in \ol{V}{}^{\hat{n}}_{\da_i}$ to $E^{(\hat{n})}_1$.
Therefore, the right hand side of \eqref{pAB} would no longer represent a splitting into ``wanted'' and ``unwanted'' terms. A resolution of this issue is to single-out the operator $s_{\hat{n}\hat{n}}(v)$ from the very beginning. From \eqref{s-symm} we know that $s_{\hat{n}\hat{n}}(\tl u)=s_{\hat{n}\hat{n}}(u)$ giving $\{ p(v)\, s_{\hat{n}\hat{n}}(v)\}^v = 2\, s_{\hat{n}\hat{n}}(v)$. This allows us to rewrite the transfer matrix as
\equ{
\tau(v) = \tr_a \Big( \big[M^{(N)}_a\big]{}^{(n)} \big\{ p(v)\, \big[A^{(\hat{n})}_a(v)\big]{}^{(n)} \big\}^v \Big) + \veps_{\hat{n}}\, s_{\hat{n}\hat{n}}(v) . \label{tm-odd}
}
We can now use Lemma \ref{L:pAB} to exchange $\big\{ p(v)\, \big[A^{(\hat{n})}_a(v)\big]{}^{(n)} \big\}^v$ and $\mr{B}^{(n)}(\bm u^{(n)})$, and Lemma \ref{L:snn} to compute the action of $s_{\hat{n}\hat{n}}(v)$ on $\Psi(\bm u^{(1\ldots n)})$. This gives an expressions equivalent to \eqref{tm-Psi} except the nested transfer matrix is now given by
\[
\tau(v;\bm u^{(n)}) := \tr_a \Big( \big[M^{(N)}_a\big]{}^{(n)} \big\{ p(v)\, \ol{T^{(n)}_a}(v;\bm u^{(n)}) \big\}^v \Big) + \veps_{\hat{n}}\,f^-(v,\bm u^{(n)}) \, f^+(v,\tl{\bm u}^{(n)}) \,\mu_{\hat{n}}(v) \,.
\]
Here we invoked Lemma \ref{L:T-rest} to replace $\big[ T^{(n)}_a(v;\bm u^{(n)})\big]{}^{(n)}$ with $\ol{T^{(n)}_a}(v;\bm u^{(n)})$. 
The remaining steps are the same as in the $\hat{n}=n$ case.
\end{proof}

\begin{rmk} \label{R:BE}
Let $(a_{ij})_{i,j=1}^n$ denote Cartan matrix of type $A_n$. Let $(b_{ij})_{i,j=1}^n$ denote a zero matrix when $\hat{n} = n+1$ and let $b_{nn}=2$, $b_{n-1,n}=b_{n,n-1}=-1$, and $b_{ij}=0$ otherwise, when $\hat{n}=n$.
Set $m_0 :=0$ and $z^{(k)}_{j} := u^{(k)}_{j} - \frac{1}2(k-\rho)$.  
Then Bethe equations can be written as, for each $k<n$,
\gat{ 
\prod_{l=k-1}^{k+1} \prod_{i=1}^{m_l} \frac{z^{(k)}_j-z^{(l)}_i+\frac12a_{kl}}{z^{(k)}_j-z^{(l)}_i-\frac12a_{kl}} \cdot \frac{z^{(k)}_j+z^{(l)}_i+n+\frac12b_{kl}}{z^{(k)}_j+z^{(l)}_i+n-\frac12b_{kl}} = -\frac{\veps_{k+1}}{\veps_{k}} \cdot \frac{ \mu_{k+1}(u^{(k)}_j)}{\mu_{k}(u^{(k)}_j)} \,, \label{BE1}
\\
\frac{z^{(n)}_j+\frac12(n+1)}{z^{(n)}_j+\frac12(\hat{n}-1)} \prod_{l=n-1}^{n} \prod_{i=1}^{m_l} \frac{z^{(n)}_j-z^{(l)}_i+\frac12a_{nl}}{z^{(n)}_j-z^{(l)}_i-\frac12a_{nl}} \prod_{l=\hat{n}-1}^n \prod_{i=1}^{m_l}\frac{z^{(n)}_j+z^{(l)}_i+n+\frac12b_{nl}}{z^{(n)}_j+z^{(l)}_i+\hat{n}-\frac12b_{nl}} = -\frac{\veps_{\hat{n}}}{\veps_{n}} \cdot \frac{\mu_{\hat{n}}(\tl{u}^{(n)}_j)}{\mu_{n}(u^{(n)}_j)} \,. \label{BE2}
}
\end{rmk}


\subsection{Trace formula} \label{sec:tf}

Define the ``master'' creation operator
\ali{ \label{wf}
{\mr{B}}_N(\bm u^{(1\ldots n)}) & := \prod_{k\le n} \, \prod_{j<i} \frac{1}{f^+(u^{(k)}_j,u^{(k)}_i)\;(f^+(u^{(k)}_j,\tl{u}^{(k)}_i))^{\del_{\hat n,n}}} \el
& \qu\;\, \times \tr \Bigg[ \prod_{(k,i)\succ(l,j)} \! R^{(N,N)}_{a^k_i a^l_j}(u^{(k)}_i-u^{(l)}_j) \, \prod_{(k,i)} \, \Bigg( S^{(N)}_{a^k_i}(u^{(k)}_i) \!
\prod_{(k,i)\succ (l,j)} \! \wh{R}^{(N,N)}_{a^k_i a^l_j}(\tl{u}^{(k)}_i-u^{(l)}_j) \Bigg) \el 
& \hspace{5.5cm} \times (E_{n+1,n}^{(N)})^{\ot m_n} \ot \cdots \ot (E^{(N)}_{21})^{\ot m_1} \Bigg]
}
where $(k,i)\succ(l,j)$ means that $k>l$ or $k=l$ and $i>j$, and the products over tuples are defined in terms of the following rule
\[
\prod_{(k,i)} = \prod_{k<n}^{\leftarrow} \; \prod_{i<m_k}^{\leftarrow}  
\]
In other words, these products are ordered in the reversed lexicographical order. The trace is taken over all $a^k_i$ spaces, including $a^n_i$, which are associated with level-$n$ excitations. Note that $(k,i)$ is fixed in the third product inside the trace.
Diagrammatically, the operator inside the trace is of the form
\aln{
\begin{tikzpicture}[scale=0.25]
\draw[thick] (0,4)--(-4,4)--(-8,0) arc(360:180:0.5)--(-13,4)--(-16,4);
\draw[thick] (0,3)--(-3,3)--(-6,0) arc(360:180:0.5)--(-8,1)--(-9,1)--(-10,0)--(-11,0)--(-14,3)--(-16,3);
\draw[thick] (0,2)--(-2,2)--(-4,0) arc(360:180:0.5) --(-7,2)--(-10,2)--(-12,0)--(-13,0)--(-15,2)--(-16,2);
\draw[thick] (0,1)--(-1,1)--(-2,0) arc(360:180:0.5) --(-6,3)--(-11,3)-- (-14,0)--(-15,0)--(-16,1);
\draw[thick] (0,0) arc(360:180:0.5)--(-5,4)--(-12,4)--(-16,0);
\filldraw[black] (-1.5,0.5) circle (6pt) (-3.5,0.5) circle (6pt) (-5.5,0.5) circle (6pt) (-7.5,0.5) circle (6pt) (-2.5,1.5) circle (6pt) (-4.5,1.5) circle (6pt) (-6.5,1.5) circle (6pt) (-3.5,2.5) circle (6pt) (-5.5,2.5) circle (6pt) (-4.5,3.5) circle (6pt);
\end{tikzpicture} \smallskip
}
where 
$
\vcenter{\hbox{\begin{tikzpicture}[scale=0.25]
\draw[thick] (0,0)--(1,1);
\draw[thick] (1,0)--(0,1);
\end{tikzpicture}}}\, = R_{a^k_ia^l_j}(u^{(k)}_i-u^{(l)}_j)
$, $
\vcenter{\hbox{\begin{tikzpicture}[scale=0.25]
\draw[thick] (0,0)--(1,1);
\draw[thick] (1,0)--(0,1);
\filldraw[black] (0.5,0.5) circle (6pt);
\end{tikzpicture}}}\, = \wh{R}_{a^k_ia^l_j}(\tl{u}^{(k)}_i-u^{(l)}_j)
$, and 
$
\begin{tikzpicture}[scale=0.25]
\draw[thick] (0,0) arc(360:180:0.5);
\end{tikzpicture} \,= S_{a^k_i}(u^{(k)}_i)
$.

\begin{exam} \label{B-exam}
The ``master'' creation operators of low rank:
\aln{
{\mr{B}}_3(u^{(1)}_1) &= s_{12}(u^{(1)}_1) ,
\qq
{\mr{B}}_3(u^{(1)}_1,u^{(1)}_2) = s_{12}(u^{(1)}_2)\,s_{12}(u^{(1)}_1) + \frac{s_{13}(u^{(1)}_2)\,s_{22}(u^{(1)}_1)}{u^{(1)}_1-\tl{u}^{(1)}_2} ,
\\[0.4em]
{\mr{B}}_4(u^{(1)}_1,u^{(2)}_1) &= s_{23}(u^{(2)}_1)\, s_{12}(u^{(1)}_1) + \frac{s_{24}(u^{(2)}_1)\, s_{22}(u^{(1)}_1)}{u^{(1)}_1-\tl{u}^{(2)}_1} + \frac{(u^{(1)}_1-\tl{u}^{(2)}_1 +1)\, s_{13}(u^{(2)}_1)\, s_{22}(u^{(1)}_1)}{(u^{(1)}_1-u^{(2)}_1) (u^{(1)}_1-\tl{u}^{(2)}_1)} ,
\\[0.4em]
{\mr{B}}_5(u^{(1)}_1,u^{(2)}_1) &= s_{23}(u^{(2)}_1)\, s_{12}(u^{(1)}_1) + \frac{s_{13}(u^{(2)}_1)\, s_{22}(u^{(1)}_1)}{u^{(1)}_1-u^{(2)}_1} + \frac{s_{25}(u^{(2)}_1)\, s_{32}(u^{(1)}_1)}{u^{(1)}_1-\tl{u}^{(2)}_1} \\ & \hspace{8cm} +\frac{s_{14}(u^{(2)}_1)\, s_{32}(u^{(1)}_1)}{(u^{(1)}_1-u^{(2)}_1) (u^{(1)}_1-\tl{u}^{(2)}_1)} .
}
\end{exam}

\begin{prop} \label{P:tf}
The level-$n$ Bethe vector \eqref{BV} can be written as
\equ{ \label{tf_full}
 \Psi(\bm u^{(1..n)}) = \mr{B}_N(\bm u^{(1\ldots n)}) \cdot \eta \,. 
}
\end{prop}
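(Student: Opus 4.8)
The plan is to prove \eqref{tf_full} by induction on $n$, peeling off one level of nesting at a time and matching the trace formula \eqref{wf} against the recursive definition \eqref{BV} of the Bethe vector. First I would set up the base case: for $n$ small (e.g. the rank-one cases $N=3,4,5$ exhibited in Example \ref{B-exam}), one checks directly that $\mr{B}_N(\bm u^{(1)})\cdot\eta$ coincides with $\mr{B}^{(1)}(\bm u^{(1)})\cdot\eta^{\bm m}$, using \eqref{b-n}, \eqref{bk} and the definition of $\eta^{\bm m}$; this is essentially the content of the examples together with the observation that $\big[B^{(\hat n)}_a(u)\big]_{\hat n - k+1,l}$ and $\big[T^{(n+1)}_a(u)\big]_{k-j+1,k+1}$ are exactly the relevant matrix entries $s_{ij}(u)$ of $S^{(N)}(u)$ that appear in \eqref{wf} when all but one nesting level is trivial.

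For the inductive step I would separate the level-$n$ excitations from the rest. The key structural fact is that the master operator $\mr{B}_N(\bm u^{(1\ldots n)})$ can be factored, by moving the $a^n_i$-spaces to the left through the reversed-lexicographic-ordered product of $R$- and $\wh R$-matrices, into a product whose leftmost part is built from the $S^{(N)}_{a^n_i}(u^{(n)}_i)$ together with the $R^{(N,N)}_{a^n_i a^n_j}$ and $\wh R^{(N,N)}_{a^n_i a^n_j}$ connecting level-$n$ strands, while the remaining part is literally $\mr{B}_N'(\bm u^{(1\ldots n-1)})$ of the same shape but for the reduced chain --- i.e. a master operator for a $Y(\mfgl_n)$-type monodromy built out of $T^{(\hat n)}$ (resp.\ $T^{(n)\prime}$ when $\hat n=n+1$), acting on the level-$(n-1)$ quantum space. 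This is exactly the reduction recorded in Lemma \ref{L:RTT} and the subsequent remark: after the first nesting step the model becomes a $Y(\mfgl_n)$-based model, so the inductive hypothesis applies to the $\mfgl_n$-master operator and identifies the ``tail'' $\prod_{i\le n-1}^{\leftarrow}\mr{B}^{(i)}(\bm u^{(i)};\bm u^{(i+1\ldots n)})$ acting on $\eta^{\bm m}$. What remains is to recognise the leftmost, level-$n$ part as $\mr{B}^{(n)}(\bm u^{(n)})$; here one uses the block decomposition \eqref{block}, the definition \eqref{Bn}--\eqref{b-n} of $\mr{b}^{(n)}_{\da_i\dda_i}$ (with its $R$-matrix dressing \eqref{Rn-prod}), and the normalisation prefactors in \eqref{wf} --- the products of $1/f^+(u^{(k)}_j,u^{(k)}_i)$ and $1/f^+(u^{(k)}_j,\tl u^{(k)}_i)^{\delta_{\hat n,n}}$ --- to absorb the $\wh R$ and $R$ factors into the $\check R$- and $f^\pm$-conventions used in \eqref{Bn}. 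The doubled middle row/column in the $\hat n=n+1$ case is handled by the map $\al$ of \eqref{S(u)} together with the reduced spaces $W^{(\hat n)\prime}$, exactly as in \eqref{Tn'} and the second half of Lemma \ref{L:RTT}, so that $s_{\hat n\hat n}$ never appears on $\eta^{\bm m}$ and the ``master'' operator restricted appropriately reproduces the reduced creation operators.

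Concretely, the steps in order: (1) rewrite \eqref{wf} by commuting the $a^n_i$-strands to the far left via Yang--Baxter moves through the ordered $R$/$\wh R$ product, producing a clean factorisation $\mr{B}_N(\bm u^{(1\ldots n)}) = \big(\text{level-}n\text{ part}\big)\cdot \mr{B}_N'(\bm u^{(1\ldots n-1)})$ where $\mr{B}_N'$ is the master operator for the $Y(\mfgl_n)$-model on $L^{(n-1)}$ with spectral data shifted as in the remark after Lemma \ref{L:RTT}; (2) apply the inductive hypothesis (the $\mfgl_n$ trace-formula, which is the known $Y(\mfgl_n)$ result, see the references cited around \eqref{wf}) to get $\mr{B}_N'(\bm u^{(1\ldots n-1)})\cdot\eta^{(n-1)}_{\text{ref}} = \prod_{i\le n-1}^{\leftarrow}\mr{B}^{(i)}(\bm u^{(i)};\bm u^{(i+1\ldots n)})\cdot\eta^{\bm m}$ on the appropriate reference vector inside $L^{(n-1)}$; (3) identify the level-$n$ part with $\mr{B}^{(n)}(\bm u^{(n)})$ by expanding \eqref{Bn}--\eqref{Rn-prod}, matching matrix entries via \eqref{block} and \eqref{b-n}, and checking that the scalar prefactors in \eqref{wf} are precisely those needed to convert bare $R$/$\wh R$ dressings into the $\check R$-normalised ones; (4) compose and conclude \eqref{tf_full}.

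The main obstacle I expect is step (1): controlling the combinatorics of the reversed-lexicographic-ordered double product of $R$- and $\wh R$-matrices while extracting the level-$n$ strands, and verifying that the extra $\wh R^{(N,N)}_{a^n_i a^l_j}(\tl u^{(n)}_i - u^{(l)}_j)$ factors (which mix a level-$n$ strand with a lower-level strand) reorganise exactly into the $\wt R$-structure sitting inside $T^{(\hat n)}_a$ in \eqref{Tn}, with no leftover terms. This is where the precise ordering conventions, the placement of the $\wh\om$-transpose, and the $\delta_{\hat n,n}$-dependent normalisations all have to conspire; once this factorisation is established cleanly, steps (2)--(4) are bookkeeping against definitions already in place. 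A secondary subtlety is keeping track, in the $\hat n=n+1$ case, of the restriction to $L^{(n-1)\prime}$ and the fact that the doubled index $\al(\hat n)=n+1\ne\hat n$ forces the $(E^{(N)}_{n+1,n})^{\ot m_n}$ insertion in \eqref{wf} rather than a diagonal one --- but Lemma \ref{L:RTT} already packages exactly this compatibility, so it can be invoked rather than re-proved.
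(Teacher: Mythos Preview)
Your proposal is workable in substance but takes a longer route than the paper, and the ``induction on $n$'' framing is muddled. After one step of nesting the model becomes a $Y(\mfgl_n)$-based model, not a twisted Yangian of smaller rank, so there is no genuine inductive hypothesis to invoke; what you actually do in step~(2) is cite the known $Y(\mfgl_n)$ trace formula, which is fine but is a one-step reduction, not induction. The factorisation you describe in step~(1)---extracting the level-$n$ strands and recognising the remainder as a $Y(\mfgl_n)$ master operator---is essentially the content of Proposition~4.7 in \cite{GMR19}, which the paper simply cites rather than reproving.

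The paper's own proof is much shorter because it identifies a key simplification you miss: for $\zeta\in (L^{(n)})^0$ and indices $1\le i,j\le n$, one has $Q^{(N,N)}_{ab}\,E^{(N)}_i\otimes E^{(N)}_j = 0$ and $Q^{(N,N)}_{ab}\,S^{(N)}_a(v)\cdot E^{(N)}_i\otimes E^{(N)}_j\otimes\zeta = 0$. This means every $\wh R^{(N,N)}_{a^k_i a^l_j}$ with $k,l<n$ acts as the identity in \eqref{tf_full}, immediately collapsing the double product of $\wh R$'s to only those involving a level-$n$ strand. Together with the observation that the diagonal $R^{(N,N)}_{a^k_i a^k_j}(u^{(k)}_i-u^{(k)}_j)$ contribute exactly the scalar $f^+(u^{(k)}_j,u^{(k)}_i)$ under the trace (cancelling the first prefactor in \eqref{wf}), this reduces the expression---up to Yang--Baxter moves---to the form already handled in \cite{GMR19}. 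The combinatorial difficulty you flag in step~(1) is precisely what this $Q$-vanishing observation sidesteps; your approach would eventually arrive at the same place, but by re-doing the work of \cite{GMR19} rather than invoking it.
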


\begin{proof} 
First, notice that $R$-matrices $R^{(N,N)}_{a^k_i a^k_j}(u^{(k)}_i-u^{(k)}_j)$ in \eqref{wf} evaluate to $f^+(u^{(k)}_j - u^{(k)}_i)$ under the trace. This cancels the first overall factor in \eqref{wf}. The second overall factor is the choice of normalisation in \eqref{Bn}. 
Next, let $V^{(N)}_{a}$ and $V^{(N)}_{b}$ denote copies of $\C^{N}$. Then, for any $\zeta \in (L^{(n)})^0$ and $E^{(N)}_i \ot E^{(N)}_j \in V^{(N)}_a \ot V^{(N)}_b$ with $1\le i,j\le n$, we have
\[
Q^{(N,N)}_{ab} \, E^{(N)}_{i} \ot E^{(N)}_{j} = 0
\]
and
\[
Q^{(N,N)}_{ab} S^{(N)}_a(v) \cdot E^{(N)}_{i} \ot E^{(N)}_{j} \ot \zeta = \sum_{k} Q^{(N,N)}_{ab} \cdot E^{(N)}_{k} \ot E^{(N)}_{j} \ot s_{ki}(v)\, \zeta = 0 .
\]
Thus $\wh{R}^{(N,N)}_{a^k_i a^l_j}(\tl{u}^{(k)}_i-u^{(l)}_j)$ with $1\le k,l<n$ act as identity operators in \eqref{tf_full}. This gives an expression analogous (up to Yang-Baxter moves) to that in Proposition 4.7 of \cite{GMR19}. The $N=2n$ case then follows from that proposition. The $N=2n+1$ case is proven analogously.
\end{proof}


\section{Recurrence relations} \label{sec:rr}


\subsection{Notation} \label{sec:not}

Given any tuple $\bm u$ of complex parameters, let $(\bm u_{\rm I}, \bm u_{\rm II}) \vdash \bm u $ be a partition of this tuple and let $\bm u_{\rm I, II} := \bm u_{\rm I} \cup \bm u_{\rm II} = \bm u$. Assume that $1\le k<|\bm u|$ and set 
\[
\sum_{|\bm u_{\rm II}|=k} f(\bm u_{\rm I}) := \sum_{ i_1 < i_2 < \cdots < i_k} f( \bm u \backslash (u_{i_1},u_{i_2},\ldots, u_{i_k}) ) 
\]
for any function or operator $f$. We will use a natural generalisation of this notation for any partition~of~$\bm u$. 
For instance, for $(\bm u_{\rm I}, \bm u_{\rm II}, \bm u_{\rm III}) \vdash \bm u$ we have $\bm u_{\rm I, II} = \bm u_{\rm I} \cup \bm u_{\rm II}$, $\bm u_{\rm II, III} = \bm u_{\rm II} \cup \bm u_{\rm III}$, etc., and e.g.
\[
\sum_{|\bm u_{\rm III}|=1} \sum_{|\bm u_{\rm II}|=2}  f(\bm u_{\rm II}) \, g(\bm u_{\rm I}) = \sum_{ j} \sum_{\substack{ i_1 < i_2\\ i_1 \ne j,\; i_2 \ne j}} f((u_{i_1}, u_{i_2}) )\, g( \bm u \backslash (u_{i_1},u_{i_2},u_j) ) .
\]

We extend the notation above to partitions of tuples $\bm u^{(1\ldots n)}$ by allowing empty partitions. The empty partitions will be the ones that are missing from the expressions.
For instance, an expression of the form
\[
\sum_{\substack{|\bm u^{(r)}_{\rm II}|=k\\i<r\le n}} f(\bm u^{(r)}_{\rm II})\,g(\bm u^{(1\ldots n)}_{\rm I})
\]
will mean that $\bm u^{(1)}_{\rm II} = \ldots = \bm u^{(i)}_{\rm II} = \emptyset$ so that $ \bm u^{(1\ldots n)}_{\rm I} = (\bm u^{(1)} , \ldots, \bm u^{(i)} , \bm u^{(i+1)}_{\rm I}, \ldots, \bm u^{(n)}_{\rm I})$. We will also use the notation $|\bm u^{(r)}_{\rm III}|=0$ meaning $\bm u^{(r)}_{\rm III} = \emptyset$.

The notation $|\bm u^{(r)}_{\rm II,III}|=(k,l)$ will mean that $|\bm u^{(r)}_{\rm II}|=k$ and $|\bm u^{(r)}_{\rm III}|=l$ and the notation $|\bm u^{(r,s)}_{\rm II}|=(k,l)$ will mean that $|\bm u^{(r)}_{\rm II}|=k$ and $|\bm u^{(s)}_{\rm II}|=l$ so that
\[
\sum_{|\bm u^{(r)}_{\rm II, III}|=(k,l)} = \sum_{|\bm u^{(r)}_{\rm III}|=l} \sum_{|\bm u^{(r)}_{\rm II}|=k} \qq\text{and}\qq
\sum_{|\bm u^{(r,s)}_{\rm II}|=(k,l)} = \sum_{|\bm u^{(s)}_{\rm II}|=l} \sum_{|\bm u^{(r)}_{\rm II}|=k}.
\]
A notation of the form $\bm u^{(r,s)}_{\rm II,III}$ will not be used.


\subsection{Recurrence relations}

We will combine the composite model method and the known $Y(\mfgl_{n})$-type recurrence relations to obtain recurrence relations for $Y^\pm(\mfg_{N})$-based Bethe vectors.
The composite model method was introduced in \cite{IK84}. For a pedagogical review, see \cite{Sla20}. 
Recurrence relations for $Y(\mfgl_{n})$-based Bethe vectors were obtained in \cite{HL17b}.
We will need the following statement which follows directly from those in \cite{HL17b} recalled in Appendix \ref{app:gln}. Recall the notation \eqref{f} of rational functions.

\begin{prop} \label{P:RR-gln}
Consider a $Y(\mfgl_n)$-based Bethe vector $\Phi(\bm v^{(1\ldots n-1)} \,|\, \bm v^{(n)})$ in the quantum space
\equ{
V^{(n)}_{a_{m_n}} \ot \cdots \ot V^{(n)}_{a_{1}} \ot L(\bm\la)
}
with $V^{(n)}_{a_i} \cong \C^n$, a finite-dimensional irreducible $Y(\mfgl_n)$-module $L(\bm\la)$, Bethe roots $\bm v^{(1\ldots n-1)}$ and inhomogeneities~$\bm v^{(n)}$ associated with spaces $V^{(n)}_{a_i}$. Set
\equ{
\La_k(z; \bm v^{(1\ldots n-1)}) := f^-(z,\bm v^{(k-1)})\, f^+(z,\bm v^{(k)}) \,\la_k(z) \,. \label{Lak}
}
An expansion of $\Phi(\bm v^{(1\ldots n-1)} \,|\, \bm v^{(n)})$ in the space $V^{(n)}_{a_{m_n}}$~is~given~by 
\ali{
\Phi(\bm v^{(1\ldots n-1)} \,|\, \bm v^{(n)}) &= \sum_{1\le i\le n} \sum_{\substack{|\bm v^{(r)}_{\rm II}|=1\\ i\le r<n}} \prod_{i< k\le n} \frac{\La_{k}(\bm v^{(k-1)}_{\rm II}; \bm v^{(1\ldots n)}_{\rm I})}{\bm v^{(k-1)}_{\rm II} - \bm v^{(k)}_{\rm II}} \; E^{(n)}_{\ol{\imath}} \ot \Phi(\bm v^{(1\ldots n-1)}_{\rm I} \,|\, \bm v^{(n)}_{\rm I}) 
\label{RR-gln}
}
where $\bm v^{(n)}_{\rm II} = v^{(n)}_{m_n}$ and $\bm v^{(r)}_{\rm II}=\emptyset$ for all $1\le r < i$ so that $\bm v^{(1\ldots n)}_{\rm I} = (\bm v^{(1)} , \ldots , \bm v^{(i-1)}, \bm v^{(i)}_{\rm I} , \ldots , \bm v^{(n)}_{\rm I})$.
\end{prop}

\begin{crl} \label{crl:RR2-gln}
An expansion of Bethe vector $\Phi(\bm v^{(1\ldots n-1)} \,|\, \bm v^{(n)})$ in the space $V^{(n)}_{a_{m_n}} \ot V^{(n)}_{a_{m_n-1}}$~is~given~by
\ali{
& \sum_{1\le i\le n} \sum_{\substack{|\bm v^{(r)}_{\rm II,III}|=(2,0)\\ i\le r<n}} \prod_{i<k\le n}  K(\bm v^{(k-1)}_{\rm II} \,|\, {\bm v}^{(k)}_{\rm II,III}) \, \La_{k}(\bm v^{(k-1)}_{\rm II}; \bm v^{(1\ldots n)}_{\rm I}) \,E^{(n)}_{\ol{\imath}} \ot E^{(n)}_{\ol{\imath}} \ot \Phi(\bm v^{(1\ldots n-1)}_{\rm I} \,|\, \bm v^{(n)}_{\rm I}) \nn \\[-0.4em]
& + \sum_{1\le i<j\le n} \sum_{\substack{|\bm v^{(r)}_{\rm III}|=1\\ i\le r<n}} \sum_{\substack{|\bm v^{(s)}_{\rm II}|=1\\ j\le s<n}} \prod_{i<k<j} \frac{\La_{k}(\bm v^{(k-1)}_{\rm III}; \bm v^{(1\ldots n-1)}_{\rm I})}{\bm v^{(k-1)}_{\rm III} - \bm v^{(k)}_{\rm III}} \cdot \La_{j}(\bm v^{(j-1)}_{\rm III}; \bm v^{(1\ldots n-1)}_{\rm I})  \nn\\[0.2em] 
& \qu \times \prod_{j<k\le n} \frac{\La_{k}(\bm v^{(k-1)}_{\rm II}; \bm v^{(1\ldots n-1)}_{\rm I})\, \La_{k}(\bm v^{(k-1)}_{\rm III}; \bm v^{(1\ldots n-1)}_{\rm I,II})}{(\bm v^{(k-1)}_{\rm II} - \bm v^{(k)}_{\rm II})(\bm v^{(k-1)}_{\rm III} - \bm v^{(k)}_{\rm III})}  \nn\\[0.25em]
& \qu \times \Bigg( \frac{f^+(\bm u^{(j-1)}_{\rm III},\bm v^{(j)}_{\rm II})}{\bm v^{(j-1)}_{\rm III}-\bm v^{(j)}_{\rm III}}\, E^{(n)}_{\ol{\imath}} \ot E^{(n)}_{\ol{\jmath}} + \frac{1}{\bm v^{(j-1)}_{\rm III}-\bm v^{(j)}_{\rm II}}\,E^{(n)}_{\ol{\jmath}} \ot E^{(n)}_{\ol{\imath}} \Bigg) \ot \Phi(\bm v^{(1\ldots n-1)}_{\rm I} \,|\, \bm v^{(n)}_{\rm I}) 
\label{RR2-gln}
}
where $\bm v^{(n)}_{\rm III} = v^{(n)}_{m_n}$, $\bm v^{(n)}_{\rm II} = v^{(n)}_{m_n-1}$ and $\bm v^{(r)}_{\rm III} = \bm v^{(r)}_{\rm II}=\emptyset$ for all $1\le r <i$ in the first sum and $\bm v^{(r)}_{\rm III} = \bm v^{(s)}_{\rm II}=\emptyset$ for all $1\le r <i$ and $1\le s < j$ in the second sum, and
\equ{
K(\bm u\,|\,\bm v) := \frac{\prod_{i,j} (u_i-v_j+1)}{\prod_{i<j} (u_i-u_j)(v_j-v_i) } \, \det_{i,j} \bigg( \frac{1}{(u_i-v_j)(u_i-v_j +1)} \bigg) 
\label{DWPF}
}
is the domain wall boundary partition function.
\end{crl}

\begin{proof}
Applying \eqref{RR-gln} to $\Phi(\bm v^{(1\ldots n-1)} \,|\, \bm v^{(n)})$ twice gives 
\equ{
\sum_{1\le i,j\le n} \sum_{\substack{|\bm v^{(r)}_{\rm III}|=1\\ i\le r<n}}  \sum_{\substack{|\bm v^{(s)}_{\rm II}|=1\\ j\le s<n}} \prod_{i< k\le n}  \frac{\La_{k}(\bm v^{(k-1)}_{\rm III}; \bm v^{(1\ldots n)}_{\rm I,II})}{\bm v^{(k-1)}_{\rm III} - \bm v^{(k)}_{\rm III}} \,  \prod_{j< l\le n} \frac{\La_{l}(\bm v^{(l-1)}_{\rm II}; \bm v^{(1\ldots n)}_{\rm I})}{\bm v^{(l-1)}_{\rm II} - \bm v^{(l)}_{\rm II}} \, \Phi_{\ol{\imath}\,\ol{\jmath}}
\label{RR2-1}
}
where $\bm v^{(r)}_{\rm III} = \bm v^{(s)}_{\rm II}=\emptyset$ for all $1\le r<i$ and $1\le s <j$, and $\Phi_{ij} := E^{(n)}_{i} \ot  E^{(n)}_{i} \ot \Phi(\bm v^{(1\ldots n-1)}_{\rm I} \,|\, \bm v^{(n)}_{\rm I})$.

\medskip

\noindent {\it Cases $i=j$.} Notice that
\[
\La_{k}(\bm v^{(k-1)}_{\rm III}; \bm v^{(1\ldots n)}_{\rm I,II}) = f^-(\bm v^{(k-1)}_{\rm III},\bm v^{(k-1)}_{\rm II})\, f^+(\bm v^{(k-1)}_{\rm III},\bm v^{(k)}_{\rm II})\,\La_{k}(\bm v^{(k-1)}_{\rm III}; \bm v^{(1\ldots n)}_{\rm I})
\]
and
\[
\frac{f^-(\bm v^{(k-1)}_{\rm III},\bm v^{(k-1)}_{\rm II})\, f^+(\bm v^{(k-1)}_{\rm III},\bm v^{(k)}_{\rm II})}{(\bm v^{(k-1)}_{\rm III} - \bm v^{(k)}_{\rm III})(\bm v^{(k-1)}_{\rm II} - \bm v^{(k)}_{\rm II})} + \frac{f^-(\bm v^{(k-1)}_{\rm II},\bm v^{(k-1)}_{\rm III})\, f^+(\bm v^{(k-1)}_{\rm II},\bm v^{(k)}_{\rm II})}{(\bm v^{(k-1)}_{\rm II} - \bm v^{(k)}_{\rm III})(\bm v^{(k-1)}_{\rm III} - \bm v^{(k)}_{\rm II})} = K(\bm v^{(k-1)}_{\rm II,II}\,|\,\bm v^{(k)}_{\rm II,III}) \,.
\]
These identities allow us to rewrite the $i=j$ cases of \eqref{RR2-1} as
\aln{
\sum_{1\le i\le n} \sum_{\substack{|\bm v^{(r)}_{\rm II,III}|=(1,1)\\ i\le r<n}}  \prod_{i< k\le n} K(\bm v^{(k-1)}_{\rm II,III}\,|\,\bm v^{(k)}_{\rm II,III})\,\La_{k}(\bm v^{(k-1)}_{\rm II,III}; \bm v^{(1\ldots n)}_{\rm I})  \, \Phi_{\ol{\imath}\,\ol{\imath}}
}
giving the first sum in \eqref{RR2-gln}.

\smallskip

\noindent {\it Cases $i<j$.} Since $\bm v^{(s)}_{\rm II} = \emptyset$ for $s<j$ in \eqref{RR2-1} we have
\[
\La_{k}(\bm v^{(k-1)}_{\rm III}; \bm v^{(1\ldots n)}_{\rm I,II}) = \La_{k}(\bm v^{(k-1)}_{\rm III}; \bm v^{(1\ldots n)}_{\rm I}) \qu\text{for}\qu k<j
\]
and
\[
\La_{j}(\bm v^{(j-1)}_{\rm III}; \bm v^{(1\ldots n)}_{\rm I,II}) =  f^+(\bm v^{(j-1)}_{\rm III},\bm v^{(j)}_{\rm II})\,\La_{j}(\bm v^{(j-1)}_{\rm III}; \bm v^{(1\ldots n)}_{\rm I})
\]
allowing us to rewrite the $i<j$ cases as
\ali{
& \sum_{1\le i<j\le n} \sum_{\substack{|\bm v^{(r)}_{\rm III}|=1\\ i\le r<n}} \sum_{\substack{|\bm v^{(s)}_{\rm II}|=1\\ j\le s<n}}  \prod_{i< k< j} \frac{\La_{k}(\bm v^{(k-1)}_{\rm III}; \bm v^{(1\ldots n)}_{\rm I})}{\bm v^{(k-1)}_{\rm III} - \bm v^{(k)}_{\rm III}} \cdot \La_{j}(\bm v^{(j-1)}_{\rm III}; \bm v^{(1\ldots n)}_{\rm I}) \nn\\ 
& \qu \times \prod_{j< k\le n} \frac{\La_{k}(\bm v^{(k-1)}_{\rm II}; \bm v^{(1\ldots n)}_{\rm I})\, \La_{k}(\bm v^{(k-1)}_{\rm III}; \bm v^{(1\ldots n)}_{\rm I,II})}{(\bm v^{(k-1)}_{\rm II} - \bm v^{(k)}_{\rm II})\,(\bm v^{(k-1)}_{\rm III} - \bm v^{(k)}_{\rm III})} \cdot \frac{f^+(\bm v^{(j-1)}_{\rm III},\bm v^{(j)}_{\rm II})}{\bm v^{(j-1)}_{\rm III} - \bm v^{(j)}_{\rm III}} \, \Phi_{\ol{\imath}\,\ol{\jmath}} .
\label{RR2-2}
}

\noindent {\it Cases $i>j$.} Interchanging indices $i$ and $j$ in \eqref{RR2-1} gives
\ali{
& \sum_{1\le i<j\le n} \sum_{\substack{|\bm v^{(s)}_{\rm III}|=1\\ j\le s<n}} \sum_{\substack{|\bm v^{(r)}_{\rm II}|=1\\ i\le r<n}}  \prod_{i< k<j} \frac{\La_{k}(\bm v^{(k-1)}_{\rm II}; \bm v^{(1\ldots n)}_{\rm I})}{\bm v^{(k-1)}_{\rm II} - \bm v^{(k)}_{\rm II}} \cdot \La_{j}(\bm v^{(j-1)}_{\rm II}; \bm v^{(1\ldots n)}_{\rm I}) \nn\\
& \qu \times \prod_{j< k\le n} \frac{\La_{k}(\bm v^{(k-1)}_{\rm II}; \bm v^{(1\ldots n)}_{\rm I}) \, \La_{k}(\bm v^{(k-1)}_{\rm III}; \bm v^{(1\ldots n)}_{\rm I,II})}{(\bm v^{(k-1)}_{\rm II} - \bm v^{(k)}_{\rm II})(\bm v^{(k-1)}_{\rm III} - \bm v^{(k)}_{\rm III})}  \cdot \frac{1}{\bm v^{(j-1)}_{\rm II} - \bm v^{(j)}_{\rm II}} \, \Phi_{\ol{\jmath}\,\ol{\imath}} .
\label{RR2-3}
}
Since $i<j$ we can rename $\bm v^{(r)}_{\rm II}$ by $\bm v^{(r)}_{\rm III}$ for $i\le r<j$ and combine the result with \eqref{RR2-2}. This gives the second sum in \eqref{RR2-gln}.
\end{proof}

\begin{exam} 
When $N=3$, expansion \eqref{RR2-gln} of $\Phi(\bm v^{(1,2)} \,|\, \bm v^{(3)})$ is
\aln{
\Phi_{11} & + \sum_{|\bm v^{(2)}_{\rm II}|=2} K(\bm v^{(2)}_{\rm II} \,|\, {\bm v}^{(3)}_{\rm II,III})\, \La_{3}(\bm v^{(2)}_{\rm II}; \bm v^{(1,2,3)}_{\rm I}) \, \Phi_{22} \nn \\
& + \sum_{|\bm v^{(1,2)}_{\rm II}|=(2,2)} K(\bm v^{(1)}_{\rm II} \,|\, {\bm v}^{(2)}_{\rm II})\,K(\bm v^{(2)}_{\rm II} \,|\, {\bm v}^{(3)}_{\rm II,III})\,\La_{2}(\bm v^{(1)}_{\rm II}; \bm v^{(1,2,3)}_{\rm I}) \,\La_{3}(\bm v^{(2)}_{\rm II}; \bm v^{(1,2,3)}_{\rm I}) \, \Phi_{33} \nn \\
& + \sum_{|\bm v^{(2)}_{\rm III}|=1} \La_3(\bm v^{(2)}_{\rm III};\bm v^{(1,2,3)}_{\rm I}) \, \Bigg( \frac{f^+(\bm v^{(2)}_{\rm III}, \bm v^{(3)}_{\rm II})}{\bm v^{(2)}_{\rm III} - \bm v^{(3)}_{\rm III}} \, \Phi_{21} + \frac{1}{\bm v^{(2)}_{\rm III} - \bm v^{(3)}_{\rm II}} \,\Phi_{12} \Bigg)  \nn \\
& + \sum_{|\bm v^{(1,2)}_{\rm III}|=(1,1)} \frac{\La_2(\bm v^{(1)}_{\rm III};\bm v^{(1,2,3)}_{\rm I})}{\bm v^{(1)}_{\rm III}-\bm v^{(2)}_{\rm III}}\,\La_3(\bm v^{(2)}_{\rm III};\bm v^{(1,2,3)}_{\rm I}) \, \Bigg( \frac{f^+(\bm v^{(2)}_{\rm III}, \bm v^{(3)}_{\rm II})}{\bm v^{(2)}_{\rm III} - \bm v^{(3)}_{\rm III}} \, \Phi_{31} + \frac{1}{\bm v^{(2)}_{\rm III} - \bm v^{(3)}_{\rm II}} \, \Phi_{13} \Bigg) \nn \\
& + \sum_{|\bm v^{(1,2)}_{\rm III}|=(1,1)} \sum_{|\bm v^{(2)}_{\rm II}|=1} \La_2(\bm v^{(1)}_{\rm III};\bm v^{(1,2,3)}_{\rm I})\, \frac{\La_{3}(\bm v^{(2)}_{\rm II}; \bm v^{(1,2,3)}_{\rm I})\, \La_{3}(\bm v^{(2)}_{\rm III}; \bm v^{(1,2,3)}_{\rm I,II})}{(\bm v^{(2)}_{\rm II} - \bm v^{(3)}_{\rm II})(\bm v^{(2)}_{\rm III} - \bm v^{(3)}_{\rm III})} \\
& \hspace{7cm} \times \Bigg( \frac{f^+(\bm v^{(1)}_{\rm III}, \bm v^{(2)}_{\rm II})}{\bm v^{(1)}_{\rm III} - \bm v^{(2)}_{\rm III}} \, \Phi_{32} + \frac{1}{\bm v^{(1)}_{\rm III} - \bm v^{(2)}_{\rm II}} \, \Phi_{23} \Bigg) 
}
where $\bm v^{(3)}_{\rm III}= v^{(3)}_{m_3}$, $\bm v^{(3)}_{\rm II}= v^{(3)}_{m_3-1}$, and $\bm v^{(2)}_{\rm III} = \bm v^{(1)}_{\rm III} = \bm v^{(1)}_{\rm II} = \emptyset$ in the first sum, $\bm v^{(2)}_{\rm III} = \bm v^{(1)}_{\rm III} = \emptyset$ in the second sum, $\bm v^{(1)}_{\rm III} = \bm v^{(2)}_{\rm II} = \bm v^{(1)}_{\rm II} = \emptyset$ in the third sum, $\bm v^{(2)}_{\rm II} = \bm v^{(1)}_{\rm II} = \emptyset$ in the fourth sum and $\bm v^{(1)}_{\rm II} = \emptyset$ in the last sum, and $\Phi_{ij} = E^{(3)}_{i} \ot  E^{(3)}_{j} \ot \Phi(\bm v^{(1,2)}_{\rm I} \,|\, \bm v^{(3)}_{\rm I})$.
\end{exam}

We are ready to state the main results of this section, recurrence relations for twisted Yangian based Bethe vectors. The even case follows almost immediately from Corollary~\ref{crl:RR2-gln}. The odd case will require additional steps which are due to the $E^{(\hat n)}_{\hat 1} = E^{(n+1)}_{2}$ factors in the reference vector $\eta^{\bm m}$.

\begin{prop} \label{P:RR-even}
$Y^\pm(\mfgl_{2n})$-based Bethe vectors satisfy the recurrence relation
\ali{
\Psi(\bm u^{(1\ldots n)}) & = \sum_{1\le i\le n} \sum_{\substack{|\bm u^{(r)}_{\rm II,III}|=(2,0)\\ i\le r<n}} \prod_{i<k\le n} K(\bm u^{(k-1)}_{\rm II} \,| \, {\bm u}^{(k)}_{\rm II, III}) \,\, \Ga_{k}(\bm u^{(k-1)}_{\rm II}; \bm u^{(1\ldots n)}_{\rm I}) \,\, s_{i,2n-i+1}(\bm u^{(n)}_{\rm III})\, \Psi(\bm u^{(1\ldots n)}_{\rm I}) \nn \\[-0.4em]
& \qu + \sum_{1\le i<j\le n} \sum_{\substack{|\bm u^{(r)}_{\rm III}|=1\\ i\le r<n}} \sum_{\substack{|\bm u^{(s)}_{\rm II}|=1\\ j\le s<n}} \prod_{i<k<j} \frac{\Ga_{k}(\bm u^{(k-1)}_{\rm III}; \bm u^{(1\ldots n)}_{\rm I})}{\bm u^{(k-1)}_{\rm III} - \bm u^{(k)}_{\rm III}} \cdot \Ga_{j}(\bm u^{(j-1)}_{\rm III}; \bm u^{(1\ldots n)}_{\rm I}) \nn\\[0.2em] 
& \qq \times \prod_{j<k\le n} \frac{\Ga_{k}(\bm u^{(k-1)}_{\rm II}; \bm u^{(1\ldots n)}_{\rm I})\, \Ga_{k}(\bm u^{(k-1)}_{\rm III}; \bm u^{(1\ldots n)}_{\rm I,II})}{(\bm u^{(k-1)}_{\rm II} - \bm u^{(k)}_{\rm II})(\bm u^{(k-1)}_{\rm III} - \bm u^{(k)}_{\rm III})} \nn\\[0.2em]
& \qq \times \Bigg( \frac{f^+(\bm u^{(j-1)}_{\rm III},\bm u^{(j)}_{\rm II})}{\bm u^{(j-1)}_{\rm III}-\bm u^{(j)}_{\rm III}}\, s_{i,2n-j+1}(\bm u^{(n)}_{\rm III}) + \frac{1}{\bm u^{(j-1)}_{\rm III}-\bm u^{(j)}_{\rm II}}\,s_{j,2n-i+1}(\bm u^{(n)}_{\rm III}) \Bigg) \,\Psi(\bm u^{(1\ldots n)}_{\rm I}) 
\label{RR-even}
}
where $\bm u^{(n)}_{\rm III} = u^{(n)}_{j}$, $\bm u^{(n)}_{\rm II} = \tl{u}^{(n)}_{j}$ and $\bm u^{(n)}_{\rm I} = \bm u^{(n)} \backslash u^{(n)}_{j}$ for any $1\le j\le m_n$, and $\bm u^{(r)}_{\rm III} = \bm u^{(r)}_{\rm II}=\emptyset$ for all $1\le r <i$ in the first sum, $\bm u^{(r)}_{\rm III} = \bm u^{(s)}_{\rm II}=\emptyset$ for all $1\le r <i$ and $1\le s <j$ in the second sum, and $\Ga_{k}(\bm u^{(k-1)}_{\rm III}; \bm u^{(1\ldots n)}_{\rm I,II})$ when $k=n$ denotes $f^+(\bm u^{(n-1)}_{\rm III}, \bm u^{(n)}_{\rm II})\,\,\Ga_{n}(\bm u^{(n-1)}_{\rm III}; \bm u^{(1\ldots n)}_{\rm I})$. 
\end{prop}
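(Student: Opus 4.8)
The plan is to exploit the Remark following Lemma \ref{L:RTT}: after the first step of nesting, the $Y^\pm(\mfgl_{2n})$-based model is equivalent to a $Y(\mfgl_n)$-based model with inhomogeneities $\bm v^{(n)} := (\bm u^{(n)}, \tl{\bm u}^{(n)})$ and Bethe roots $\bm v^{(1\ldots n-2)} := \bm u^{(1\ldots n-2)}$. Concretely, using the trace formula of Proposition \ref{P:tf} (or, equivalently, the definition \eqref{BV} together with the factorisation of $\mr{B}^{(n)}$ in \eqref{Bn}), one sees that $\Psi(\bm u^{(1\ldots n)})$ is built by applying the $Y(\mfgl_n)$ creation operators $\mr{B}^{(1)},\ldots,\mr{B}^{(n-1)}$ to a reference vector inside the quantum space $W^{(\hat n)}_{\dot{\bm a}} \ot W^{(\hat n)}_{\ddot{\bm a}} \ot (L^{(n)})^0$, which by the construction of Section \ref{sec:spaces} carries the structure of a tensor product of $\mfgl_n$-evaluation modules with inhomogeneities $\bm u^{(n)}$ (from $\dot{\bm a}$) and $\tl{\bm u}^{(n)}$ (from $\ddot{\bm a}$). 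Thus $\Psi(\bm u^{(1\ldots n)})$ is, up to the contraction of the $\dot{\bm a}_{m_n}$ and $\ddot{\bm a}_{m_n}$ auxiliary spaces against the $\mr{b}^{(n)}$ covectors, precisely a $Y(\mfgl_n)$ Bethe vector of the type appearing in Proposition \ref{P:RR-gln}, with the two distinguished inhomogeneities being $u^{(n)}_{j}$ and $\tl{u}^{(n)}_{j}$.

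The main body of the argument is then: first I would apply the two-fold $Y(\mfgl_n)$ recurrence \eqref{RR2-gln}, expanding the $Y(\mfgl_n)$ Bethe vector in the two spaces $V^{(n)}_{\dot a_{m_n}}$ and $V^{(n)}_{\ddot a_{m_n}}$, i.e.\ choosing the last $\dot{\bm a}$-tensorand and the last $\ddot{\bm a}$-tensorand as the two spaces $a_{m_n}, a_{m_n-1}$ of that formula, and setting $\bm v^{(n)}_{\rm III} = u^{(n)}_j$, $\bm v^{(n)}_{\rm II} = \tl{u}^{(n)}_j$ (this is exactly the non-standard convention declared at the end of the Notation subsection). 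The terms of \eqref{RR2-gln} are labelled by a single index $i$ (when the two spaces receive the same basis vector $E^{(n)}_{\bar\imath}\ot E^{(n)}_{\bar\imath}$) and by pairs $i<j$ (otherwise). Next I would translate each labelled vector $E^{(n)}_{\bar\imath}\ot E^{(n)}_{\bar\jmath}$ in the $V^{(n)}_{\dot a_{m_n}}\ot V^{(n)}_{\ddot a_{m_n}}$ part back into an operator on $L^{(n)}$ by contracting with the covectors defining $\mr{b}^{(n)}_{\da_{m_n}\dda_{m_n}}$ in \eqref{b-n}; since $\mr{b}^{(n)}_{\da_{m_n}\dda_{m_n}}(u)$ picks out $[B^{(\hat n)}_a(u)]_{\hat n - k +1,\, l}$, i.e.\ the matrix element $s_{k,\,2n-l+1}(u)$ of the generating matrix, the vector $E^{(n)}_{\bar\imath}\ot E^{(n)}_{\bar\jmath}$ contracts to $s_{i,\,2n-j+1}(\cdot)$ — this is where the entries $s_{i,2n-i+1}$, $s_{i,2n-j+1}$, $s_{j,2n-i+1}$ in \eqref{RR-even} come from. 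Finally I would match the coefficients: the functions $\La_k$ of Proposition \ref{P:RR-gln} become the functions $\Ga_k$ of Theorem \ref{T:eig} under the identification $\la_k \leftrightarrow \mu_k$ and $\bm v^{(k)}\leftrightarrow \bm u^{(k)}$ for $k<n$, while for $k=n$ the inhomogeneity set $\bm v^{(n)}$ splits into $\bm u^{(n)}$ and $\tl{\bm u}^{(n)}$ so that $f^+(z,\bm v^{(n)})$ becomes $f^+(z,\bm u^{(n)})\,f^+(z,\tl{\bm u}^{(n)})$, matching $\Ga_n$ in \eqref{Ga_n} and the special convention for $\Ga_n(\bm u^{(n-1)}_{\rm III}; \bm u^{(1\ldots n)}_{\rm I,II})$ stated in the Proposition; the domain-wall factor $K$ carries over verbatim.

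I expect two points to require care. The first is the precise bookkeeping of which tensorand plays the role of "$a_{m_n}$" versus "$a_{m_n-1}$" in \eqref{RR2-gln} and the ordering of the surviving $\dot{\bm a}$- and $\ddot{\bm a}$-spaces, so that the residual vector is again a $Y(\mfgl_n)$ Bethe vector of the same shape with $\bm u^{(n)}_{\rm I} = \bm u^{(n)}\setminus u^{(n)}_j$ inhomogeneities; here one uses Lemma \ref{L:BV-symm} (permutation invariance of the Bethe vector) and Lemma \ref{L:Beta-symm} freely to bring the chosen root $u^{(n)}_j$ to the last slot. The second, and the genuine obstacle, is justifying that after contracting the two $\mfgl_n$-inhomogeneity spaces attached to $u^{(n)}_j$ and $\tl u^{(n)}_j$ one indeed recovers the twisted-Yangian generating-matrix entries $s_{i,2n-j+1}$ rather than some combination that still remembers the block structure — i.e.\ that the "overlap" doubling of the middle row/column does not produce spurious cross terms. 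For the even case $\hat n = n$ there is no middle row to worry about, so this reduces to checking that the $Q$-matrix contributions, which are the only place where the $\mfgl_{2n}$-structure (as opposed to $\mfgl_n\times\mfgl_n$) enters, are killed on the relevant vacuum subspace $(L^{(n)})^0$ — exactly the computation already performed in the proof of Proposition \ref{P:tf} (the vanishing of $Q^{(N,N)}_{ab}$ on $E^{(N)}_i\ot E^{(N)}_j$ with $i,j\le n$, and of $Q^{(N,N)}_{ab}S^{(N)}_a(v)$ on such vectors tensored with $(L^{(n)})^0$). Once that vanishing is invoked, the remaining manipulations are the routine $Y(\mfgl_n)$ computations underlying \eqref{RR-gln}–\eqref{RR2-gln}, and \eqref{RR-even} follows.
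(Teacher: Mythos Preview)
Your approach is essentially the paper's own: reduce to $j=m_n$ by Lemma~\ref{L:BV-symm}, use the factorisation \eqref{Bn-factor} of $\mr{B}^{(n)}$, view the level-$(n{-}1)$ vector as a $Y(\mfgl_n)$ Bethe vector with inhomogeneities $(\bm u^{(n)},\tl{\bm u}^{(n)})$, apply the two-site expansion \eqref{RR2-gln} in $V^{(n)}_{\da_{m_n}}\ot V^{(n)}_{\dda_{m_n}}$, and finally contract against $\mr{b}^{(n)}_{\da_{m_n}\dda_{m_n}}(u^{(n)}_{m_n})\,\mr{B}^{(n)}(\bm u^{(n)}\backslash u^{(n)}_{m_n})$ to produce the $s_{i,2n-j+1}$ entries. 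Your identification $\La_k\leftrightarrow\Ga_k$ and the handling of the $k=n$ factor are correct.

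One point to sharpen: the ``genuine obstacle'' you name is not quite the right one. The $Q$-matrix vanishing from the proof of Proposition~\ref{P:tf} is not what is needed here. The actual key step, which the paper states explicitly, is that the reordering operator $\mr{R}^{(\hat n)}(u^{(n)}_{m_n};\bm u^{(n)}\backslash u^{(n)}_{m_n})$ appearing in \eqref{Bn-factor} must be pushed through (via Yang-Baxter moves) all the way to the nested vacuum $\eta^{\bm m}$; this simultaneously reorders the nested monodromy matrix into the factorised form \eqref{Tn-factor}, placing $\da_{m_n}$ and $\dda_{m_n}$ as the two outermost sites so that \eqref{RR2-gln} applies directly. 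In the even case $\hat n=n$ one then has $\mr{R}^{(\hat n)}\cdot\eta^{\bm m}=\eta^{\bm m}$ (the normalisation in \eqref{Rn-prod} is chosen precisely for this), and the proof concludes. It is exactly the failure of this identity when $\hat n=n+1$, cf.~\eqref{Rn-action}, that makes the odd case substantially harder --- not any issue with $Q$-matrices or block overlap.
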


\begin{exam} \label{ex:rec-even}
When $n=2$, the recurrence relation \eqref{RR-even} gives
\ali{
\Psi(\bm u^{(1,2)}) &= s_{23}(\bm u^{(2)}_{\rm III})\,\Psi(\bm u^{(1,2)}_{\rm I}) + \sum_{|\bm u^{(1)}_{\rm II}| = 2} K(\bm u^{(1)}_{\rm II}\,|\,\bm u^{(2)}_{\rm II, III})\, \Ga_{2}(\bm u^{(1)}_{\rm II}; \bm u^{(1,2)}_{\rm I}) \,  s_{14}(\bm u^{(2)}_{\rm III}) \, \Psi(\bm u^{(1,2)}_{\rm I})
\nn\\
& \qu + \sum_{|\bm u^{(1)}_{\rm III}| = 1}  \Ga_{2}(\bm u^{(1)}_{\rm III}; \bm u^{(1,2)}_{\rm I}) \, \Bigg( \frac{f^+(\bm u^{(1)}_{\rm III},\bm u^{(2)}_{\rm II})}{\bm u^{(1)}_{\rm III} - \bm u^{(2)}_{\rm III}} \, s_{13}(\bm u^{(2)}_{\rm III}) + \frac{1}{\bm u^{(1)}_{\rm III} - \bm u^{(2)}_{\rm II}} \, s_{24}(\bm u^{(2)}_{\rm III}) \Bigg)\, \Psi(\bm u^{(1,2)}_{\rm I}) 
\label{RR4}
}
where $\bm u^{(2)}_{\rm III} = u^{(2)}_{j}$, $\bm u^{(2)}_{\rm II} = \tl{u}^{(2)}_{j}$ and $\bm u^{(2)}_{\rm I} = \bm u^{(2)} \backslash u^{(2)}_{j}$ for any $1\le j\le m_2$, and $\bm u^{(1)}_{\rm III}=\emptyset$ in the first sum and $\bm u^{(1)}_{\rm II}=\emptyset$ in the second sum. 
\end{exam}

\begin{proof}[Proof of Proposition \ref{P:RR-even}]
By Lemma \ref{L:BV-symm}, it is sufficient to consider the $j=m_n$ case. 
Recall \eqref{Bn-factor}, \eqref{BV} and consider a level-$(n{-}1)$ vector 
\equ{
\mr{R}^{(\hat{n})}(u^{(n)}_{m_n};\bm u^{(n)}\backslash u^{(n)}_{m_n})\, \Psi(\bm u^{(1\ldots n-1)} \,|\, \bm u^{(n)}) \,.
\label{RBV}
}
With the help of Yang-Baxter equation we can move operator $\mr{R}^{(\hat{n})}(u^{(n)}_{m_n};\bm u^{(n)}\backslash u^{(n)}_{m_n})$ all way to the reference vector $\eta^{\bm m}$. As a result of this, the level-$(n{-}1)$ nested monodromy matrix \eqref{Tn} factorises as
\equ{
\wh{R}^{(\hat{n},\hat{n})}_{\da_{m_n} a}(u^{(n)}_{m_n} - v)\,\wh{R}^{(\hat{n},\hat{n})}_{\dda_{m_n} a}(\tl{u}^{(n)}_{m_n}-v) \, T^{(\hat{n})}_a(v;\bm u^{(n)}\backslash u^{(n)}_{m_n} ) \,. 
\label{Tn-factor}
}
Since $\mr{R}^{(\hat{n})}(u^{(n)}_{m_n};\bm u^{(n)}\backslash u^{(n)}_{m_n}) \cdot \eta^{\bm m} = \eta^{\bm m}$ when $\hat{n}=n$,
we may view vector \eqref{RBV} as a $Y(\mfgl_{n})$-based Bethe vector with monodromy matrix \eqref{Tn-factor} and apply expansion \eqref{RR2-gln} in the space $V^{(n)}_{\da_{m_n}} \ot V^{(n)}_{\dda_{m_n}}$.
Recall \eqref{b-n}, \eqref{Ga_k}, \eqref{Ga_n} and act with $\mr{b}^{(n)}_{\da_{m_n}\dda_{m_n}}(u^{(n)}_{m_n})\,\mr{B}^{(n)}(\bm u^{(n)}\backslash u^{(n)}_{m_n})$ on the resulting expression. This immediately gives the wanted result.
\end{proof}

\begin{prop} \label{P:RR-odd}
\allowdisplaybreaks

$Y^+(\mfgl_{2n+1})$-based Bethe vectors satisfy the recurrence relation 
\ali{ 
\Psi(\bm u^{(1\ldots n)}) &= \sum_{1\le i\le n} \sum_{\substack{|\bm u^{(r)}_{\rm III}|=1\\ i\le r< n}}  \prod_{i<k\le n} \frac{\Ga_k(\bm u^{(k-1)}_{\rm III}; \bm u^{(1\ldots n)}_{\rm I})}{\bm u^{(k-1)}_{\rm III} - \bm u^{(k)}_{\rm III}} \,\, s_{i,\hat{n}}(\bm u^{(n)}_{\rm III}) \, \Psi(\bm u^{(1\ldots n)}_{\rm I}) \nn\\[-0.1em]
& \qu + \sum_{1\le i<n} \sum_{\substack{|\bm u^{(r)}_{\rm II}|=1\\ i\le r\le n}} \prod_{i<k\le n} \frac{\Ga_{k}(\bm u^{(k-1)}_{\rm II}; \bm u^{(1\ldots n)}_{\rm I})}{\bm u^{(k-1)}_{\rm II} - \bm u^{(k)}_{\rm II}} \cdot \frac{\Ga_{\hat{n}}(\bm u^{(n)}_{\rm II}; \bm u^{(1\ldots n)}_{\rm I})}{\bm u^{(n)}_{\rm II} - \tl{\bm u}^{(n)}_{\rm III}}  \nn\\
& \qq \times \Bigg( \frac{\bm u^{(n-1)}_{\rm II}-\bm u^{(n)}_{\rm II}+1}{\bm u^{(n-1)}_{\rm II}-\bm u^{(n)}_{\rm III}}\, s_{i,\hat{n}+1}(\bm u^{(n)}_{\rm III}) + s_{n,\hat{n}+i+1}(\bm u^{(n)}_{\rm III}) \Bigg) \Psi(\bm u^{(1\ldots n)}_{\rm I}) \nn\\[0.2em] 
& \qu + \sum_{1\le i\le n} \sum_{\substack{|\bm u^{(r)}_{\rm II,III}|=(2,0)\\ i\le r<n}} \sum_{|\bm u^{(n)}_{\rm II}|=1}  \prod_{i<k\le n} \Ga_{k}(\bm u^{(k-1)}_{\rm II}; \bm u^{(1\ldots n)}_{\rm I}) \,\, K(\bm u^{(k-1)}_{\rm II} \,|\, {\bm u}^{(k)}_{\rm II,III}) \nn\\ & \qq \times \frac{\Ga_{\hat{n}}(\bm u^{(n)}_{\rm II}; \bm u^{(1\ldots n)}_{\rm I})}{\bm u^{(n)}_{\rm II}- \tl{\bm u}^{(n)}_{\rm III}} \,\,   s_{i,2\hat n -i}(\bm u^{(n)}_{\rm III})\, \Psi(\bm u^{(1\ldots n)}_{\rm I}) \allowbreak \nn\\ 
& \qu + \sum_{1\le i<j<n} \sum_{\substack{|\bm u^{(r)}_{\rm III}|=1\\ i\le r<n}} \sum_{\substack{|\bm u^{(s)}_{\rm II}|=1\\ j\le s\le n}} \prod_{i<k<j} \frac{\Ga_{k}(\bm u^{(k-1)}_{\rm III}; \bm u^{(1\ldots n)}_{\rm I})}{\bm u^{(k-1)}_{\rm III} - \bm u^{(k)}_{\rm III}} \cdot \Ga_{j}(\bm u^{(j-1)}_{\rm III}; \bm u^{(1\ldots n)}_{\rm I})  \nn\\[0.2em] 
& \qq \times \!\prod_{j<k<n} \!\frac{\Ga_{k}(\bm u^{(k-1)}_{\rm II}; \bm u^{(1\ldots n)}_{\rm I})\, \Ga_{k}(\bm u^{(k-1)}_{\rm III}; \bm u^{(1\ldots n)}_{\rm I,II})}{(\bm u^{(k-1)}_{\rm II} - \bm u^{(k)}_{\rm II})(\bm u^{(k-1)}_{\rm III} - \bm u^{(k)}_{\rm III})} \cdot \Ga_{n}(\bm u^{(k-1)}_{\rm II,III}; \bm u^{(1\ldots n)}_{\rm I}) \,\Ga_{\hat{n}}(\bm u^{(n)}_{\rm II}; \bm u^{(1\ldots n)}_{\rm I}) \nn\\[0.25em]
& \qq \times \Bigg[ \Bigg(\!\bigg( \be_0 + \frac{\be_2}{2\ga} \bigg) \frac{f^+(\bm u^{(j-1)}_{\rm III},\bm u^{(j)}_{\rm II})}{\bm u^{(j-1)}_{\rm III}-\bm u^{(j)}_{\rm III}} + \frac{\be_1}{2\ga} \cdot \frac{1}{\bm u^{(j-1)}_{\rm III}-\bm u^{(j)}_{\rm II}} \Bigg)\, s_{i,2\hat{n}-j}(\bm u^{(n)}_{\rm III}) \nn\\
& \qq\qq + \Bigg( \frac{\be_1}{2\ga} \cdot \frac{f^+(\bm u^{(j-1)}_{\rm III},\bm u^{(j)}_{\rm II})}{\bm u^{(j-1)}_{\rm III}-\bm u^{(j)}_{\rm III}} + \bigg( \be_0 + \frac{\be_2}{2\ga} \bigg) \frac{1}{\bm u^{(j-1)}_{\rm III}-\bm u^{(j)}_{\rm II}} \Bigg)\,s_{j,2\hat{n}-i}(\bm u^{(n)}_{\rm III}) \Bigg]\el & \hspace{10.1cm} \times \Psi(\bm u^{(1\ldots n)}_{\rm I})\!\!
\label{RR-odd}
}
where 
\eqa{
\be_0 &= \frac{f^-(\bm u^{(n-1)}_{\rm III}, \bm u^{(n-1)}_{\rm II})\, f^+(\bm u^{(n-1)}_{\rm III}, \tl{\bm u}^{(n)}_{\rm III})}{(\bm u^{(n-1)}_{\rm II} - \tl{\bm u}^{(n)}_{\rm III})(\bm u^{(n-1)}_{\rm III} - \bm u^{(n)}_{\rm III})(\bm u^{(n)}_{\rm II} - \tl{\bm u}^{(n)}_{\rm III})} \,, \\
\be_1 &= \frac{\bm u^{(n-1)}_{\rm III}-\bm u^{(n)}_{\rm III}}{\bm u^{(n-1)}_{\rm III}-\bm u^{(n)}_{\rm II}}  \Bigg(\bm u^{(n-1)}_{\rm II} - \tl{\bm u}^{(n)}_{\rm III} + 1 + \frac{\bm u^{(n-1)}_{\rm III} - \tl{\bm u}^{(n)}_{\rm III}}{\bm u^{(n-1)}_{\rm II}-\bm u^{(n)}_{\rm II}} \Bigg) \,, \\
\be_2 &= f^+(\bm u^{(n-1)}_{\rm II},\bm u^{(n)}_{\rm II})\,\frac{\bm u^{(n-1)}_{\rm II}-\tl{\bm u}^{(n)}_{\rm III}}{\bm u^{(n-1)}_{\rm III} - \bm u^{(n)}_{\rm II}} + \frac{(\bm u^{(n-1)}_{\rm II}-\bm u^{(n)}_{\rm III})(\bm u^{(n-1)}_{\rm III}-\tl{\bm u}^{(n)}_{\rm III})+1}{\bm u^{(n-1)}_{\rm II}-\bm u^{(n)}_{\rm II}} 
}
and
\equ{
\ga = (\bm u^{(n-1)}_{\rm II}-\tl{\bm u}^{(n)}_{\rm III})(\bm u^{(n-1)}_{\rm III}-\tl{\bm u}^{(n)}_{\rm III})(\bm u^{(n-1)}_{\rm II}-{\bm u}^{(n)}_{\rm III})(\bm u^{(n-1)}_{\rm III}-{\bm u}^{(n)}_{\rm III}) 
}
and $\bm u^{(n)}_{\rm III} = u^{(n)}_{j}$ for any $1\le j\le m_n$, and $\bm u^{(r)}_{\rm III}=\bm u^{(s)}_{\rm II}=\emptyset$ for all $1\le r <i$ and $1\le s \le n$ in the first sum, $\bm u^{(r)}_{\rm II}=\bm u^{(s)}_{\rm III}=\emptyset$ for all $1\le r <i$ and $1\le s <n$ in the second sum, $\bm u^{(r)}_{\rm II}=\bm u^{(s)}_{\rm III} = \emptyset$ for all $1\le r<i$ and $1\le s < n$ in the third sum and $\bm u^{(r)}_{\rm III} = \bm u^{(s)}_{\rm II}=\emptyset$ for all $1\le r<i$ and $1\le s<j$ in the last sum.      
\end{prop}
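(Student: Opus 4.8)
The plan is to follow the proof of Proposition~\ref{P:RR-even}, the two new ingredients in the odd case being that the normalisation factor $\mr{R}^{(\hat{n})}$ in \eqref{Bn-factor} no longer fixes the nested vacuum (it carries no $f^-$ denominator when $\hat{n}=n+1$), and that the ``middle'' generator $s_{\hat{n}\hat{n}}$ enters, through the relation preceding Lemma~\ref{L:snn} and through Lemma~\ref{L:snn} itself.

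By Lemma~\ref{L:BV-symm} it suffices to treat $j=m_n$, so that $\bm u^{(n)}_{\rm III}=u^{(n)}_{m_n}$, $\bm u^{(n)}_{\rm II}=\tl u^{(n)}_{m_n}$ and $\bm u^{(n)}_{\rm I}=\bm u^{(n)}\backslash u^{(n)}_{m_n}$. Combining \eqref{BV} with \eqref{Bn-factor} gives
\[
\Psi(\bm u^{(1\ldots n)})=\mr{b}^{(n)}_{\da_{m_n}\dda_{m_n}}(u^{(n)}_{m_n})\,\mr{B}^{(n)}(\bm u^{(n)}\backslash u^{(n)}_{m_n})\,\mr{R}^{(\hat{n})}(u^{(n)}_{m_n};\bm u^{(n)}\backslash u^{(n)}_{m_n})\,\Psi^{(n-1)}(\bm u^{(1\ldots n-1)};\bm u^{(n)}),
\]
where, by Lemma~\ref{L:RTT} and the Remark following it, $\Psi^{(n-1)}(\bm u^{(1\ldots n-1)};\bm u^{(n)})$ is a $Y(\mfgl_n)$-based Bethe vector of the kind treated in Proposition~\ref{P:RR-gln}, built over the reduced space $L^{(n-1)\prime}$, whose only active sites are the $V^{(\hat{n})\prime}_{\da_i}\cong\C^n$, the $V^{(\hat{n})\prime}_{\dda_i}$ being frozen at $E^{(\hat{n})}_1$. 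I would then expand the leftmost creation operator by summing over its components, cf.\ \eqref{b-n}. The components $[B^{(\hat{n})}_a(u^{(n)}_{m_n})]_{\hat{n}-k+1,1}$ with $k\le n$ pair the $\da_{m_n}$-site against $E^{(\hat{n})}_1$ on the $\dda_{m_n}$-side; applying the $Y(\mfgl_n)$ recurrence \eqref{RR-gln} once in the $V^{(\hat{n})\prime}_{\da_{m_n}}$-site and using that the relevant entry of $B^{(\hat{n})}$ is $s_{i,\hat{n}}$ reproduces exactly the first sum of \eqref{RR-odd}, with the cascade of $\Ga_k/(\bm u^{(k-1)}_{\rm III}-\bm u^{(k)}_{\rm III})$ factors inherited from \eqref{RR-gln}. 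The remaining component $[B^{(\hat{n})}_a(u^{(n)}_{m_n})]_{\hat{n},1}$ equals $s_{\hat{n}\hat{n}}(u^{(n)}_{m_n})$, the middle generator, which acts as a scalar on the nested vacuum but not on the intervening creation operators.

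The remaining contributions come from $\mr{R}^{(\hat{n})}$ and from the components of $\mr{b}^{(n)}_{\da_{m_n}\dda_{m_n}}$ that touch the $\dda_{m_n}$-site nontrivially. Writing each factor $R^{(\hat{n},\hat{n})}_{\da_i\dda_{m_n}}(\tl u^{(n)}_i-u^{(n)}_{m_n})$ of $\mr{R}^{(\hat{n})}$ as $I-(\tl u^{(n)}_i-u^{(n)}_{m_n})^{-1}P$, its transposition parts transport an $E^{(\hat{n})}_1$ into a $\da_i$-site; since the corresponding $\dda_i$-site is frozen at $E^{(\hat{n})}_1$, the creation operator $\mr{b}^{(n)}_{\da_i\dda_i}(u^{(n)}_i)$ then collapses to its middle component $s_{\hat{n}\hat{n}}(u^{(n)}_i)$ as well. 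In either case I would propagate the inserted $s_{\hat{n}\hat{n}}$ rightward through the remaining level-$n$ creation operators by Lemma~\ref{L:snn} and the relation preceding it. This produces, first, a ``diagonal'' contribution carrying the scalar $f^-(w,\bm u^{(n)})\,f^+(w,\tl{\bm u}^{(n)})\,\mu_{\hat{n}}(w)$---that is, the factor $\Ga_{\hat{n}}$, which is precisely where the mirror roots $\tl{\bm u}^{(n)}$ enter---and in which the mirror parameter $\tl u^{(n)}_{m_n}$ now sits in the $\dda_{m_n}$-site, so that contracting with $\mr{b}^{(n)}_{\da_{m_n}\dda_{m_n}}$ brings in the entries of $B^{(\hat{n})}$ with second index $\ge\hat{n}+1$, namely the generators $s_{i,\hat{n}+1}$, $s_{n,\hat{n}+i+1}$, $s_{i,2\hat{n}-i}$, $s_{i,2\hat{n}-j}$, $s_{j,2\hat{n}-i}$ of the last three sums. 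A second application of the $Y(\mfgl_n)$ recurrence---now involving the two active level-$n$ parameters $u^{(n)}_{m_n}$ and $\tl u^{(n)}_{m_n}$, exactly as in the double expansion \eqref{RR2-gln}---supplies the domain-wall kernels $K$ and the weights $f^+(\bm u^{(j-1)}_{\rm III},\bm u^{(j)}_{\rm II})$; and the residue terms of Lemma~\ref{L:snn}, together with the explicit rational coefficients of the $\mr{R}^{(\hat{n})}$-transpositions and the matrix elements of the $\wt R^{(\hat{n},\hat{n})}$'s in the nested monodromy matrix, combine into the coefficients $\be_0,\be_1,\be_2$ and $\ga$. A final use of Lemma~\ref{L:BV-symm} restores general $j$.

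The hard part is exactly this last assembly: one has to check that the rational prefactors from $\mr{R}^{(\hat{n})}$, the relevant entries of the $\wt R^{(\hat{n},\hat{n})}$'s, the residues supplied by Lemma~\ref{L:snn}, and the kernels $K$ from the second use of \eqref{RR-gln} combine precisely into $\be_0,\be_1,\be_2,\ga$ as stated, and that all extra terms---higher-order transpositions of $\mr{R}^{(\hat{n})}$, contributions in which the reference vector fails to be of highest weight, and over-counted partitions---either vanish or cancel. One must also verify that the Bethe-root content of $\Psi(\bm u^{(1\ldots n)}_{\rm I})$ on the right-hand side comes out correctly at every level $k<n$; this follows from the cascading structure of Proposition~\ref{P:RR-gln}, but requires care where the two level-$n$ parameters $u^{(n)}_{m_n}$ and $\tl u^{(n)}_{m_n}$ are removed simultaneously.
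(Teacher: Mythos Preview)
Your outline has the right skeleton---factor off $\mr{b}^{(n)}_{\da_{m_n}\dda_{m_n}}$, push $\mr{R}^{(\hat n)}$ to the vacuum via Yang--Baxter, and then expand---but two of the load-bearing steps are not the ones that actually work.

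\medskip

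\textbf{Circular use of Lemma~\ref{L:snn}.} You propose to handle the ``middle'' contributions by propagating $s_{\hat n\hat n}$ through the remaining level-$n$ creation operators using Lemma~\ref{L:snn}. In the paper's logical order, however, the proof of Lemma~\ref{L:snn} (Section~\ref{sec:snn-proof}) is deferred precisely because it \emph{relies} on the machinery developed in the proof of Proposition~\ref{P:RR-odd}: it quotes the intermediate expansion \eqref{Psi-exp} and explicitly states that the remaining evaluations are ``done applying the same techniques used in the proof of Proposition~\ref{P:RR-odd}''. So invoking Lemma~\ref{L:snn} here is circular. The paper instead uses Lemma~\ref{L:Psi-j} (whose proof is elementary and independent) together with the telescoping identity \eqref{RR-odd-sum-id} to collapse the $\mr{R}^{(\hat n)}$-transposition terms into the clean form \eqref{Psi-exp}; Lemma~\ref{L:snn} never enters.

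\medskip

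\textbf{The $\accentset{\bullet}{\Psi}^{(n-1)}_{2,1}$ piece is not a standard $Y(\mfgl_n)$ Bethe vector.} Your plan for the identity part of $\mr{R}^{(\hat n)}$ is to apply \eqref{RR-gln} once in the $\da_{m_n}$-site. This is not enough. After the Yang--Baxter move the nested monodromy becomes \eqref{Tn-factor}, which carries the \emph{pair} of outer factors $\wt R^{(\hat n,\hat n)}_{\da_{m_n}a}\,\wt R^{(\hat n,\hat n)}_{\dda_{m_n}a}$; the level-$(n{-}1)$ creation operators built from it therefore act on \emph{both} $\da_{m_n}$ and $\dda_{m_n}$, and the reference state $E^{(\hat n)}_2\otimes E^{(\hat n)}_1$ in $V^{(\hat n)}_{\da_{m_n}}\otimes V^{(\hat n)}_{\dda_{m_n}}$ is not annihilated by all lowering components. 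Concretely, the $k=\hat n$ column produces the extra operator $\mr p^{\ms{II}}$ (paired with $\mr c^{\ms{I}}$ on the bulk side), and its action on $E^{(\hat n)}_2\otimes E^{(\hat n)}_1$ is nonzero. This is precisely why the paper does an explicit composite-model computation for $\accentset{\bullet}{\Psi}^{(n-1)}_{2,1}$, obtaining the four groups of terms \eqref{21-1}--\eqref{21-4}; only \eqref{21-1} and \eqref{21-2} feed the first sum in \eqref{RR-odd}, while \eqref{21-3} and \eqref{21-4} are essential inputs for the second, third and fourth sums. A single application of \eqref{RR-gln} misses \eqref{21-3} and \eqref{21-4} entirely, and with them the mechanism that produces $\be_1,\be_2,\gamma$.

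\medskip

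In short: replace Lemma~\ref{L:snn} by Lemma~\ref{L:Psi-j} plus \eqref{RR-odd-sum-id} for the $\mr{R}^{(\hat n)}$-transposition part, and replace the single $Y(\mfgl_n)$ expansion by the composite-model decomposition of $\mr b^{(n-1)}$ into $\mr a^{\ms{II}}\!,\mr b^{\ms{II}}\!,\mr p^{\ms{II}}$ paired with $\mr b^{\ms{I}}_k,\mr d^{\ms{I}},\mr c^{\ms{I}}$ for the $E^{(\hat n)}_2\otimes E^{(\hat n)}_1$ piece. The combinatorial assembly you flagged as ``the hard part'' then becomes a bookkeeping exercise once \eqref{21-1}--\eqref{21-4} and \eqref{22-1}--\eqref{22-3} are in hand.
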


\begin{exam} \label{ex:rec-odd}
When $n=1$, the recurrence relation \eqref{RR-odd} gives
\ali{
\Psi(\bm u^{(1)}) &= s_{12}(\bm u^{(1)}_{\rm III})\,\Psi(\bm u^{(1)}_{\rm I}) + \sum_{|\bm u^{(1)}_{\rm II}| = 1} \frac{\Ga_2(\bm u^{(1)}_{\rm II}; \bm u^{(1)}_{\rm I})}{\bm u^{(1)}_{\rm II}-\tl{\bm u}^{(1)}_{\rm III}} \,  
s_{13}(\bm u^{(1)}_{\rm III}) \, \Psi(\bm u^{(1)}_{\rm I}) 
\label{RR3}
}
where $\bm u^{(1)}_{\rm III} = u^{(1)}_{j}$ for any $1\le j\le m_1$.
When $n=2$, we have
\ali{
\Psi(\bm u^{(1,2)}) &= s_{23}(\bm u^{(2)}_{\rm III})\,\Psi(\bm u^{(1,2)}_{\rm I}) + \sum_{|\bm u^{(1)}_{\rm III}| = 1} \! \frac{\Ga_2(\bm u^{(1)}_{\rm III}; \bm u^{(1,2)}_{\rm I})}{\bm u^{(1)}_{\rm III}-\bm u^{(2)}_{\rm III}} \, s_{13}(\bm u^{(2)}_{\rm III}) \, \Psi(\bm u^{(1,2)}_{\rm I}) \nn\\
& \qu + \sum_{|\bm u^{(1,2)}_{\rm II}| = (1,1)} \!\! \frac{\Ga_2(\bm u^{(1)}_{\rm II}; \bm u^{(1,2)}_{\rm I})\, \Ga_3(\bm u^{(2)}_{\rm II}; \bm u^{(1,2)}_{\rm I})}{\bm u^{(2)}_{\rm II}-\tl{\bm u}^{(2)}_{\rm III}} \nn\\[-0.2em] 
& \hspace{2.2cm} \times \Bigg( \frac{f^+(\bm u^{(1)}_{\rm II},\bm u^{(2)}_{\rm II})}{\bm u^{(1)}_{\rm II} - \bm u^{(2)}_{\rm III}} \, s_{14}(\bm u^{(2)}_{\rm III}) + \frac{1}{\bm u^{(1)}_{\rm II} - \bm u^{(2)}_{\rm II}} \, s_{25}(\bm u^{(2)}_{\rm III}) \! \Bigg) \, \Psi(\bm u^{(1,2)}_{\rm I}) \nn\\[0.2em]
& \qu + \sum_{|\bm u^{(1,2)}_{\rm II}| = (2,1)} \!\! \frac{\Ga_2(\bm u^{(1)}_{\rm II}; \bm u^{(1,2)}_{\rm I})\,\Ga_3(\bm u^{(2)}_{\rm II}; \bm u^{(1,2)}_{\rm I})}{\bm u^{(2)}_{\rm II}-\tl{\bm u}^{(2)}_{\rm III}} \, K(\bm u^{(1)}_{\rm II}\,|\,\bm u^{(2)}_{\rm II, III} ) \,\, s_{15}(\bm u^{(2)}_{\rm III}) \, \Psi(\bm u^{(1,2)}_{\rm I}) \nn\\
& \qu + \sum_{|\bm u^{(2)}_{\rm II}| = 1} \! \frac{\Ga_3(\bm u^{(2)}_{\rm II}; \bm u^{(2)}_{\rm I})}{\bm u^{(2)}_{\rm II}-\tl{\bm u}^{(2)}_{\rm III}} \, s_{24}(\bm u^{(2)}_{\rm III}) \, \Psi(\bm u^{(1,2)}_{\rm I}) 
\label{RR5}
}
where $\bm u^{(2)}_{\rm III} = u^{(2)}_{j}$ for any $1\le j\le m_2$, and $\bm u^{(1)}_{\rm II} = \bm u^{(2)}_{\rm II}=\emptyset$ in the first sum, $\bm u^{(1)}_{\rm III} =\emptyset$ in the second sum, $\bm u^{(1)}_{\rm III} =\emptyset$ in the third sum and $\bm u^{(1)}_{\rm III} = \bm u^{(1)}_{\rm II}=\emptyset$ in the last sum. 
\end{exam}

The technical Lemma below will assist us in proving Proposition \ref{P:RR-odd}.

\begin{lemma} \label{L:Psi-j}
Let $\Psi_j(\bm u^{(1\ldots n)})$ denote a $Y^+(\mfgl_{2n+1})$-based Bethe vector with the reference vector $\eta^{\bm m}_j := (E^{(\hat n)}_{12})_{\da_j} \eta^{\bm m}$. Then
\equ{
\Psi_{j}(\bm u^{(1\ldots n)}) = \sum_{1\le i\le j} \frac{1}{u^{(n)}_j-u^{(n)}_i+1} \, \frac{\Ga_{\hat{n}}(u^{(n)}_i,\bm u^{(1\ldots n)}\backslash u^{(n)}_i)}{\prod_{k>j} f^+(u^{(n)}_k,u^{(n)}_i)} \,\, \Psi(\bm u^{(1\ldots n)}\backslash u^{(n)}_i) \,. 
\label{Psi-j}
}
\end{lemma}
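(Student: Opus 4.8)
The plan is to observe that acting with $(E^{(\hat n)}_{12})_{\da_j}$ on $\eta^{\bm m}$ replaces the $j$-th level-$n$ creation operator by an insertion of $s_{\hat n\hat n}(u^{(n)}_j)$, and then to carry that operator through the remaining creation operators down to $\eta$, where $s_{\hat n\hat n}(u)\cdot\eta=\mu_{\hat n}(u)\,\eta$. Concretely: by \eqref{b-n} and \eqref{S(u)} the operator $\mr b^{(n)}_{\da_i\dda_i}(u)$ pairs $(E^{(\hat n)}_k)^*_{\da_i}\ot(E^{(\hat n)}_l)^*_{\dda_i}$ with $[B^{(\hat n)}_a(u)]_{\hat n-k+1,\,l}=s_{\hat n-k+1,\,n+l}(u)$; for the unmodified reference the $j$-th slot carries $E^{(\hat n)}_{\hat 1}=E^{(\hat n)}_2$, selecting $k=2$ and thus the standard creation entry $s_{n\hat n}$, whereas $(E^{(\hat n)}_{12})_{\da_j}$ turns it into $E^{(\hat n)}_1$, selecting $k=1$ and thus $s_{\hat n\hat n}$. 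I would first check that this frozen slot is a spectator for everything else: every matrix element of $T^{(\hat n)}_a$ entering the lower operators $\mr B^{(n-1)},\dots,\mr B^{(1)}$ (cf.\ \eqref{Bk}--\eqref{bk} and Lemma~\ref{L:RTT}) has auxiliary index $\le n$, so $\wt R^{(\hat n,\hat n)}_{\da_j a}(u)=I-u^{-1}Q^{(\hat n,\hat n)}_{\da_j a}$ acts as the identity on $E^{(\hat n)}_1\ot E^{(\hat n)}_b$ with $b\le n$, since $Q^{(\hat n,\hat n)}_{\da_j a}\bigl(E^{(\hat n)}_1\ot E^{(\hat n)}_b\bigr)=\del_{b,\hat n}\sum_{r}E^{(\hat n)}_{r}\ot E^{(\hat n)}_{\hat n-r+1}$; moreover, after unwinding the $R$-matrices of \eqref{Bn} via \eqref{Bn-factor} so that they meet the reference decorations directly, they act there by scalars only (the $\dda$-slots being frozen at $E^{(\hat n)}_1$). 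One is thus left with $s_{\hat n\hat n}(u^{(n)}_j)$ standing to the left of $s_{n\hat n}(u^{(n)}_{j-1})\cdots s_{n\hat n}(u^{(n)}_1)\,\mr B^{(n-1)}\cdots\mr B^{(1)}\cdot\eta^{\bm m}$, times the collected scalars, which reassemble into the factor $\prod_{k>j}f^+(u^{(n)}_k,u^{(n)}_i)^{-1}$ in \eqref{Psi-j}.

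Next I would carry $s_{\hat n\hat n}(u^{(n)}_j)$ to the right through $s_{n\hat n}(u^{(n)}_{j-1}),\dots,s_{n\hat n}(u^{(n)}_1)$ using the exchange relation quoted before Lemma~\ref{L:snn} (the case $k=n$),
\[
s_{\hat{n} \hat{n}}(v)\,s_{n\hat{n}}(u) = f^-(v,u)\,f^+(v,\tl{u})\, s_{n\hat{n}}(u)\, s_{\hat{n} \hat{n}}(v) - \Bigl\{ \tfrac{p(v)}{u-v}\, s_{n \hat{n}}(v) \Bigr\}^v s_{\hat{n} \hat{n}}(u),
\]
then through $\mr B^{(n-1)}\cdots\mr B^{(1)}$, which it passes with no change on the relevant vacuum subspaces, since by the reflection equation its commutators with the $\mfgl_n$-sector entries of the lower monodromy matrices produce only operators lowering past the index $\hat n$, which annihilate $(L^{(n)})^{0}$ and hence $\eta$; finally $s_{\hat n\hat n}(u)\cdot\eta=\mu_{\hat n}(u)\,\eta$. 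Taking the ``straight'' branch at every step yields the $i=j$ summand, with coefficient $1=\tfrac1{u^{(n)}_j-u^{(n)}_j+1}$ and with $\mu_{\hat n}(u^{(n)}_j)$ dressed into $\Ga_{\hat n}(u^{(n)}_j;\bm u^{(1\ldots n)}\backslash u^{(n)}_j)$; the ``swapped'' branch at $s_{n\hat n}(u^{(n)}_i)$ leaves $s_{n\hat n}(u^{(n)}_j)$ at position $i$, replaces the travelling parameter by $u^{(n)}_i$, and supplies the pole $\tfrac1{u^{(n)}_j-u^{(n)}_i+1}$ (from $\tfrac{p(v)}{u-v}$ together with $f^+(u^{(n)}_j,\tl u^{(n)}_i)$), after which $s_{\hat n\hat n}(u^{(n)}_i)$ is carried on to $\eta$ and supplies the remaining $f^{\mp}$-factors and $\mu_{\hat n}(u^{(n)}_i)$ of $\Ga_{\hat n}(u^{(n)}_i;\bm u^{(1\ldots n)}\backslash u^{(n)}_i)$. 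By Lemma~\ref{L:BV-symm} the surviving creation operators, which use precisely the parameters $\bm u^{(n)}\backslash u^{(n)}_i$, reassemble into $\Psi(\bm u^{(1\ldots n)}\backslash u^{(n)}_i)$; and only $1\le i\le j$ occurs because $s_{\hat n\hat n}(u^{(n)}_j)$ meets only creation operators of index $<j$ before reaching $\eta$. This gives \eqref{Psi-j}.

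The main obstacle I anticipate is the coefficient bookkeeping: the symmetrisations $\{\,\cdot\,\}^v$ and the symmetry relation \eqref{symmB} generate contributions with $v\mapsto\tl v$ and with $s_{n\hat n}(\tl u^{(n)}_j)$, and one must check that, summing over all branchings (including several consecutive swaps) together with the scalars collected in the first step, everything collapses to the single rational factor in \eqref{Psi-j} --- a telescoping identity among the functions $f^\pm$ of the sort underlying Lemmas~\ref{L:pAB} and~\ref{L:TB}. A convenient cross-check is $n=1$: there are no lower-level operators, the first step is just \eqref{Bn-factor}, and the claim follows by induction on $m_1$ from $s_{22}(v)s_{12}(u)=f^-(v,u)f^+(v,\tl u)s_{12}(u)s_{22}(v)-\{\tfrac{p(v)}{u-v}s_{12}(v)\}^v s_{22}(u)$; already for $m_1=1$ it gives $\Psi_1(\bm u^{(1)})=s_{22}(u^{(1)}_1)\cdot\eta=\mu_2(u^{(1)}_1)\,\eta$, which is precisely the right-hand side of \eqref{Psi-j}.
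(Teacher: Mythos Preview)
Your approach differs from the paper's and carries a genuine gap. You try to reduce the $j$-th factor of $\mr B^{(n)}$ to $s_{\hat n\hat n}(u^{(n)}_j)$ and then commute it rightward through what you describe as ``$s_{n\hat n}(u^{(n)}_{j-1})\cdots s_{n\hat n}(u^{(n)}_1)$''. But the operators $\mr b^{(n)}_{\da_i\dda_i}$ for $i<j$ do \emph{not} collapse to a single matrix entry: by the time they act, the lower creation operators $\mr B^{(n-1)},\dots,\mr B^{(1)}$ have already moved the $\da_i$-slot (via the $\wt R$-factors in \eqref{Tn}) anywhere in $\{E^{(\hat n)}_2,\dots,E^{(\hat n)}_{\hat n}\}$, and the $R$-matrices $R_{\da_i\dda_k}$ inside $\mr B^{(n)}$ mix the slots further. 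So you must commute $s_{\hat n\hat n}$ past the full operators $\mr b^{(n)}_{\da_i\dda_i}$ together with the interlaced $R$-matrices, not past bare $s_{n\hat n}$'s. The exchange relation you quote covers $s_{k\hat n}$ for all $k\le n$, but tracking all branches through these compound operators is precisely the ``coefficient bookkeeping'' you flag as an obstacle and leave unresolved; nor do you explain how the factor $\prod_{k>j}f^+(u^{(n)}_k,u^{(n)}_i)^{-1}$, tied to the operators with $k>j$ sitting to the \emph{left}, actually emerges.

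The paper sidesteps all of this. It first proves the case $j=1$ directly: since $\mr b^{(n)}_{\da_1\dda_1}$ is the \emph{rightmost} factor in $\mr B^{(n)}$, one Yang--Baxter-moves the product $\prod_{k>1}R_{\da_1\dda_k}$ through to the reference (where it acts on $\eta^{\bm m}_1$ by the scalar $f^+(u^{(n)}_1,\tl{\bm u}^{(n)}\backslash\tl u^{(n)}_1)$), observes that the nested monodromy then reduces to $T^{(n)\prime}_a(v;\bm u^{(n)}\backslash u^{(n)}_1)$, and evaluates $\mr b^{(n)}_{\da_1\dda_1}(u^{(n)}_1)\cdot E^{(\hat n)}_1\ot E^{(\hat n)}_1=s_{\hat n\hat n}(u^{(n)}_1)$, which commutes with the lower $\mr B^{(k)}$ up to terms annihilating $\eta$. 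This yields a \emph{single} term. The general case is then obtained by induction on $j$: using Lemma~\ref{L:Beta-symm} and the elementary action of $\check R_{\da_{j+1}\da_j}\check R_{\dda_{j+1}\dda_j}$ on $\eta^{\bm m}_j$, one derives the two-term recurrence
\[
\Psi_{j+1}(\bm u^{(1\ldots n)}) = f^+(u^{(n)}_{j},u^{(n)}_{j+1}) \, \Psi_j(\bm u^{(1\ldots n)}_{u^{(n)}_j\leftrightarrow u^{(n)}_{j+1}}) + \frac{1}{u^{(n)}_{j+1}-u^{(n)}_{j}} \, \Psi_j(\bm u^{(1\ldots n)}),
\]
and \eqref{Psi-j} follows with Lemma~\ref{L:BV-symm}. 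The recursion absorbs all the combinatorics you were unable to close.
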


\begin{proof}
Recall \eqref{Bn} and consider level-$(n{-}1)$ vector
\equ{
\prod_{j>1}^{\rightarrow} R^{(\hat{n},\hat{n})}_{\da_1\dda_j}(\tl{u}^{(n)}_1-u^{(n)}_j) \, \Psi_1(\bm u^{(1\ldots n-1)}\,|\, \bm u^{(n)}) \,.
\label{R-Psi-1}
}
With the help of Yang-Baxter equation we can move the product of $R$-matrices all way to the reference vector $\eta^{\bm m}_1$. 
As a result of this, the level-$(n{-}1)$ nested monodromy matrix \eqref{Tn} takes the form
\equ{
\prod_{i>1}^{\leftarrow} \wh{R}^{(\hat{n},\hat{n})}_{\da_i a}(u^{(n)}_i-v)  \prod_{i>1}^{\leftarrow} \wh{R}^{(\hat{n},\hat{n})}_{\dda_i a}(\tl{u}^{(n)}_i-v) \, \wh{R}^{(\hat{n},\hat{n})}_{\dda_1 a}(u^{(n)}_1-v) \, \wh{R}^{(\hat{n},\hat{n})}_{\dda_1 a}(\tl{u}^{(n)}_1-v) \, A^{(\hat{n})}_a(v) .
}
In the space $L^{(n-1)\prime}$, it is equivalent to $T^{(n)\prime}_a(v;\bm u^{(n)}\backslash u^{(n)}_{1})$.
Next, recall~\eqref{eta-m} and note that
\ali{
\prod_{j>1}^{\rightarrow} R^{(\hat{n},\hat{n})}_{\da_1\dda_j}(\tl{u}^{(n)}_1-u^{(n)}_j) \cdot \eta^{\bm m}_1 = f^+(u^{(n)}_1, \tl{\bm u}^{(n)}\backslash \tl{u}^{(n)}_1)\,\, \eta^{\bm m}_1 .
}
Hence, vector \eqref{R-Psi-1} can be expanded in the space $V^{(\hat{n})}_{\da_{1}} \ot V^{(\hat{n})}_{\dda_{1}}$ as
\equ{
f^+(u^{(n)}_1, \tl{\bm u}^{(n)}\backslash \tl{u}^{(n)}_1)\cdot E^{(\hat{n})}_1 \ot E^{(\hat{n})}_1 \ot \Psi_1(\bm u^{(1\ldots n-1)}\,|\, \bm u^{(n)} \backslash u^{(n)}_1) \,.
}
From \eqref{b-n} note that $\mr{b}^{(n)}_{\da_1\dda_1}(v) \cdot E^{(\hat{n})}_1 \ot E^{(\hat{n})}_1 = s_{\hat{n}\hat{n}}(v)$. 
Defining relations of $Y^+(\mfgl_{2n+1})$ imply that
\[
s_{\hat{n}\hat{n}}(u^{(n)}_1) \prod_{i<n}^{\leftarrow} \mr{B}^{(i)}(\bm u^{(i)}; \bm u^{(i+1\ldots n)}\backslash u^{(n)}_1) = \prod_{i<n}^{\leftarrow} \mr{B}^{(i)}(\bm u^{(i)}; \bm u^{(i+1\ldots n)}\backslash u^{(n)}_1) \, s_{\hat{n}\hat{n}}(u^{(n)}_1) + UWT
\]
where $UWT$ denotes ``unwanted'' terms, all of which act by 0 on $\eta^{\bm m}_1$. We have thus shown that
\ali{
\Psi_1(\bm u^{(1\ldots n)}) &=  \mr{B}^{(n)}(\bm u^{(n)}\backslash u^{(n)}_1) \,\, \mr{b}^{(n)}_{\da_1\dda_1}(u^{(n)}_1) \prod_{j>1}^{\rightarrow} R^{(\hat{n},\hat{n})}_{\da_1\dda_j}(\tl{u}^{(n)}_1-u^{(n)}_j) \,  \Psi_1(\bm u^{(1\ldots n-1)}\,|\, \bm u^{(n)}) \nn\\
& = \mu_{\hat{n}}(v)\,f^+(u^{(n)}_1, \tl{\bm u}^{(n)}\backslash \tl{u}^{(n)}_1)\, \Psi(\bm u^{(1\ldots n)}\backslash u^{(n)}_1) \,.
\label{Psi-1}
}
This gives the $j=1$ case of the claim. 
Then, using Yang-Baxter equation, Lemma \ref{L:Beta-symm}, and the identity
\[
\eta^{\bm m}_{j+1}  = f^+(u^{(n)}_j,u^{(n)}_{j+1})\,\check{R}^{(\hat{n}, \hat{n})}_{\da_{j+1}\da_{j}}(u^{(n)}_{j+1}-u^{(n)}_{j}) \, \check{R}^{(\hat{n}, \hat{n})}_{\dda_{j+1}\dda_{j}}(u^{(n)}_{j} - u^{(n)}_{j+1})  \cdot \eta^{\bm m}_j + \frac{1}{u^{(n)}_{j+1} - u^{(n)}_{j}}\,\eta^{\bm m}_j 
\]
we find
\equ{
\Psi_{j+1}(\bm u^{(1\ldots n)}) = f^+(u^{(n)}_{j},u^{(n)}_{j+1}) \, \Psi_j(\bm u^{(1\ldots n)}_{u^{(n)}_j\leftrightarrow u^{(n)}_{j+1}}) + \frac{1}{u^{(n)}_{j+1}-u^{(n)}_{j}} \,\, \Psi_j(\bm u^{(1\ldots n)}) \,.
}
A simple induction on $j$ together with Lemma \ref{L:BV-symm} gives the wanted result.
\end{proof}

\begin{proof}[Proof of Proposition \ref{P:RR-odd}]
The main idea of the proof is similar to that of Proposition \ref{P:RR-even}.\linebreak We start from the level-$(n{-}1)$ vector \eqref{RBV} and move operator $\mr{R}^{(\hat{n})}(u^{(n)}_{m_n};\bm u^{(n)}\backslash u^{(n)}_{m_n})$ all way to the reference vector $\eta^{\bm m}$. In the odd case $E^{(\hat n)}_{\hat 1} = E^{(n+1)}_2$ giving (recall \eqref{Rn-prod}) 
\equ{
\mr{R}^{(\hat{n})}(u^{(n)}_{m_n},\bm u^{(n)}\backslash u^{(n)}_{m_n}) \cdot \eta^{\bm m} = \eta^{\bm m} + \sum_{j<m_n} \frac{\prod_{j<k< m_n} f^+(u^{(n)}_{k} , \tl{u}^{(n)}_{m_n})}{u^{(n)}_{j} - \tl{u}^{(n)}_{m_n}} \, P^{(\hat{n},\hat{n})}_{\da_j\dda_{m_n}} \eta^{\bm m} \,.
\label{Rn-action}
}
Hence, in the odd case we can rewrite \eqref{RBV} as 
\ali{
& \accentset{\bullet}{\Psi}_{2,1}(\bm u^{(1\ldots n-1)}\,|\, \bm u^{(n)}) + \sum_{j<m_n} \frac{\prod_{j<k<m_n} f^+(u^{(n)}_{k} , \tl{u}^{(n)}_{m_n})}{u^{(n)}_{j} - \tl{u}^{(n)}_{m_n}} \,\, \accentset{\bullet}{\Psi}_{2,2;\,j}(\bm u^{(1\ldots n-1)}\,|\, \bm u^{(n)})
\label{RBV-odd}
}
where $\accentset{\bullet}{\Psi}_{k,l}$ and $\accentset{\bullet}{\Psi}_{k,l;\,j}$ denote level-$(n{-}1)$ Bethe vectors based on the transfer matrix~\eqref{Tn-factor} and reference vectors $(E^{(\hat{n})}_{k,2})_{\da_{m_n}} (E^{(\hat{n})}_{l,1})_{\dda_{m_n}} \eta^{\bm m}$ and $(E^{(\hat{n})}_{k,2})_{\da_{m_n}} (E^{(\hat{n})}_{l,1})_{\dda_{m_n}} (E^{(\hat{n})}_{1,2})_{\da_{j}} \eta^{\bm m}$, respectively.

Consider the second term in \eqref{RBV-odd}.
Acting with $\mr{B}^{(n)}(\bm u^{(n)}\backslash u^{(n)}_{m_n})$ and applying Lemma~\ref{L:Psi-j} gives
\ali{
& \sum_{i\le j<m_n} \prod_{j<k<m_n} \frac{f^+(u^{(n)}_{k} , \tl{u}^{(n)}_{m_n})}{f^+(u^{(n)}_{k} , u^{(n)}_{i})} \cdot \frac{\Ga_{\hat{n}}(u^{(n)}_i,\bm u^{(1\ldots n)}\backslash (u^{(n)}_i,u^{(n)}_{m_n}))}{(u^{(n)}_{j} - u^{(n)}_{i}+1)(u^{(n)}_{j} - \tl{u}^{(n)}_{m_n})}  \nn\\[0.3em]
& \hspace{2.6cm} \times \mr{B}^{(n)}(\bm u^{(n)}\backslash (u^{(n)}_i,u^{(n)}_{m_n}))\,\,\accentset{\bullet}{\Psi}_{2,2}(\bm u^{(1\ldots n-1)}\,|\, \bm u^{(n)}\backslash u^{(n)}_i ) \,. \label{RBV-odd-2}
}
Using the identity
\ali{
\frac{1}{u^{(n)}_{i} - \tl{u}^{(n)}_{m_n}} = \sum_{i\le j<m_n} \frac{1}{(u^{(n)}_j - u^{(n)}_i + 1)(u^{(n)}_{j} - \tl{u}^{(n)}_{m_n})} \prod_{j<k<m_n} \frac{f^+(u^{(n)}_{k}, \tl{u}^{(n)}_{m_n})}{f^+(u^{(n)}_k,u^{(n)}_i)}
\label{RR-odd-sum-id}
}
which follows by a descending induction on $i$, expression \eqref{RBV-odd-2} becomes
\ali{
\sum_{i<m_n} \frac{\Ga_{\hat{n}}(u^{(n)}_{i}; \bm u^{(n)}\backslash (u^{(n)}_i,u^{(n)}_{m_n}))}{u^{(n)}_i - \tl{u}^{(n)}_{m_n}} \,\, \mr{B}^{(n)}(\bm u^{(n)}\backslash (u^{(n)}_i, u^{(n)}_{m_n}))\,\accentset{\bullet}{\Psi}_{2,2}(\bm u^{(1\ldots n-1)}\,|\,\bm u^{(n)}\backslash u^{(n)}_{i}) \,. \label{RBV-odd-3}
}
Thus, acting with $\mr{b}^{(n)}_{\da_{m_n}\dda_{m_n}}(u^{(n)}_{m_n}) \,\, \mr{B}^{(n)}(\bm u^{(n)}\backslash u^{(n)}_{m_n})$ on \eqref{RBV-odd} we obtain 
\ali{
\Psi(\bm u^{(1\ldots n)}) = \mr{b}^{(n)}_{\da_{m_n}\dda_{m_n}}(u^{(n)}_{m_n}) \, & \Bigg( \mr{B}^{(n)}(\bm u^{(n)}\backslash u^{(n)}_{m_n}) \, \accentset{\bullet}{\Psi}_{2,1}(\bm u^{(1\ldots n-1)}\,|\,\bm u^{(n)}) \nn\\[-0.3em]
 & \qu + \sum_{i<m_n} \frac{\Ga_{\hat{n}}(u^{(n)}_{i}; \bm u^{(1\ldots n)}\backslash (u^{(n)}_i,u^{(n)}_{m_n}))}{u^{(n)}_i - \tl{u}^{(n)}_{m_n}} \,\,  \nn \\[-0.3em]
  & \qq \times \mr{B}^{(n)}(\bm u^{(n)}\backslash (u^{(n)}_i, u^{(n)}_{m_n}))\,\accentset{\bullet}{\Psi}_{2,2}(\bm u^{(1\ldots n-1)}\,|\,\bm u^{(n)}\backslash u^{(n)}_{i}) \Bigg) \,.
\label{Psi-exp}
}
We will view vectors $\accentset{\bullet}{\Psi}_{2,1}$ and $\accentset{\bullet}{\Psi}_{2,2}$ as $Y(\mfgl_n)$-based Bethe vectors and apply $Y(\mfgl_n)$-based recurrence relations. 

First, consider vector $\accentset{\bullet}{\Psi}_{2,2}$. Its reference vector is annihilated by the $(j,i)$-th entries of the monodromy matrix \eqref{Tn-factor} satisfying the condition $i<j$. Hence, we may use \eqref{RR2-gln} to obtain an expansion in the space $V^{(\hat{n})}_{\da_{m_n}} \ot V^{(\hat{n})}_{\dda_{m_n}}$. Taking $\bm u^{(n)}_{\rm III} = u^{(n)}_{m_n}$, the second term inside the brackets of \eqref{Psi-exp} becomes (we have singled out the $i<j =n$ terms for further convenience)
\ali{ 
& \sum_{1\le i\le n} \sum_{\substack{|\bm u^{(r)}_{\rm II,III}|=(2,0)\\ i\le r<n}} \sum_{|\bm u^{(n)}_{\rm II}|=1}  \prod_{i<k\le n} K(\bm u^{(k-1)}_{\rm II} \,|\, {\bm u}^{(k)}_{\rm II,III})\,\, \Ga_{k}(\bm u^{(k-1)}_{\rm II}; \bm u^{(1\ldots n)}_{\rm I})  \nn\\[-1.2em] & \hspace{5.15cm} \times \frac{\Ga_{\hat{n}}(\bm u^{(n)}_{\rm II}; \bm u^{(1\ldots n)}_{\rm I})}{\bm u^{(n)}_{\rm II} - \tl{\bm u}^{(n)}_{\rm III}} \,\, E^{(\hat{n})}_{\ol{\imath}} \ot E^{(\hat{n})}_{\ol{\imath}} \ot \Psi(\bm u^{(1\ldots n)}_{\rm I}) \label{22-1} \\[0.5em]
& + \sum_{1\le i<n} \sum_{\substack{|\bm u^{(r)}_{\rm II}|=1\\ i\le r\le n}}  \prod_{i<k<n} \frac{\Ga_{k}(\bm u^{(k-1)}_{\rm II}; \bm u^{(1\ldots n)}_{\rm I})}{\bm u^{(k-1)}_{\rm II} - \bm u^{(k)}_{\rm II}} \cdot \frac{\Ga_{n}(\bm u^{(n-1)}_{\rm II}; \bm u^{(1\ldots n)}_{\rm I}) \,\Ga_{\hat{n}}(\bm u^{(n)}_{\rm II}; \bm u^{(1\ldots n)}_{\rm I})}{\bm u^{(n)}_{\rm II} - \tl{\bm u}^{(n)}_{\rm III}}  \nn\\[0.25em]
& \qu \times \Bigg( \frac{f^+(\bm u^{(n-1)}_{\rm II},\tl{\bm u}^{(n)}_{\rm III})}{\bm u^{(n-1)}_{\rm II}-\bm u^{(n)}_{\rm III}}\, E^{(\hat{n})}_{\ol{\imath}} \ot E^{(\hat{n})}_{2} + \frac{1}{\bm u^{(n-1)}_{\rm II}-\tl{\bm u}^{(n)}_{\rm III}}\,E^{(\hat{n})}_{2} \ot E^{(\hat{n})}_{\ol{\imath}} \Bigg) \ot \Psi(\bm u^{(1\ldots n)}_{\rm I}) \label{22-2} 
\\[0.25em]
& + \sum_{1\le i<j< n} \sum_{\substack{|\bm u^{(r)}_{\rm III}|=1\\ i\le r<n}} \sum_{\substack{|\bm u^{(s)}_{\rm II}|=1\\ j\le s\le n}} \prod_{i<k<j} \frac{\Ga_{k}(\bm u^{(k-1)}_{\rm III}; \bm u^{(1\ldots n)}_{\rm I})}{\bm u^{(k-1)}_{\rm III} - \bm u^{(k)}_{\rm III}} \cdot \Ga_{j}(\bm u^{(j-1)}_{\rm III}; \bm u^{(1\ldots n)}_{\rm I}) \nn\\[0.2em] 
& \qu \times \prod_{j<k<n} \frac{\Ga_{k}(\bm u^{(k-1)}_{\rm II}; \bm u^{(1\ldots n)}_{\rm I})\, \Ga_{k}(\bm u^{(k-1)}_{\rm III}; \bm u^{(1\ldots n)}_{\rm I,II})}{(\bm u^{(k-1)}_{\rm II} - \bm u^{(k)}_{\rm II})(\bm u^{(k-1)}_{\rm III} - \bm u^{(k)}_{\rm III})} \cdot \frac{\Ga_{n}(\bm u^{(n-1)}_{\rm II,III}; \bm u^{(1\ldots n)}_{\rm I})\,\Ga_{\hat{n}}(\bm u^{(n)}_{\rm II}; \bm u^{(1\ldots n)}_{\rm I})}{\bm u^{(n)}_{\rm II} - \tl{\bm u}^{(n)}_{\rm III}}   \nn\\[0.25em]
& \qu \times \frac{f^-(\bm u^{(n-1)}_{\rm III}, \bm u^{(n-1)}_{\rm II})\, f^+(\bm u^{(n-1)}_{\rm III}, \tl{\bm u}^{(n)}_{\rm III})}{(\bm u^{(n-1)}_{\rm II} - \tl{\bm u}^{(n)}_{\rm III})(\bm u^{(n-1)}_{\rm III} - \bm u^{(n)}_{\rm III})} \nn\\ 
& \qu \times \Bigg( \frac{f^+(\bm u^{(j-1)}_{\rm III},\bm u^{(j)}_{\rm II})}{\bm u^{(j-1)}_{\rm III}-\bm u^{(j)}_{\rm III}}\, E^{(\hat{n})}_{\ol{\imath}} \ot E^{(\hat{n})}_{\ol{\jmath}} + \frac{1}{\bm u^{(j-1)}_{\rm III}-\bm u^{(j)}_{\rm II}}\,E^{(\hat{n})}_{\ol{\jmath}} \ot E^{(\hat{n})}_{\ol{\imath}} \Bigg) \ot \Psi(\bm u^{(1\ldots n)}_{\rm I}) \,.
\label{22-3}
}

Next, consider vector $\accentset{\bullet}{\Psi}_{2,1}$.
This time we can not apply expansion \eqref{RR2-gln}. Instead, we will use the composite model approach to obtain the wanted expansion.
Set $L^{\ms{II}}:=V^{(\hat{n})}_{\da_{m_n}} \ot V^{(\hat{n})}_{\dda_{m_n}}$ and $L^{\ms{I}}:=W^{(\hat{n})}_{\dot{\bm a}\backslash \da_{m_n}} \ot W^{(\hat{n})}_{\ddot{\bm a}\backslash \dda_{m_n}} \ot (L^{(n)})^0$ so that $L^{(n-1)} \cong L^{\ms{II}} \ot L^{\ms{I}}$.
Recall \eqref{bk} and set
\gan{
\mr{a}^{\ms{II}}_{a^{n-1}_i,\,k}(v) := \sum_{j<n} (E^{(n-1)}_j)^*_{a^{n-1}_i} \ot \big[ R^{(\hat{n},\hat{n})}_{\da_{m_n}a}(v-u^{(n)}_{m_n})\,R^{(\hat{n},\hat{n})}_{\dda_{m_n}a}(v-\tl{u}^{(n)}_{m_n}) \big]_{n-j,k} \,, 
\\
\mr{b}^{\ms{I}}_{k}(v) := \big[ T^{(k+1)}_a(u^{(k)}_i; \bm u^{(n)}\backslash u^{(n)}_{m_n}) \big]_{k,n} \,.
}
The cases when $k=n,\hat{n}$ will be denoted by
\aln{
\mr{b}^{\ms{II}}_{a^{n-1}_i}(v) := \mr{a}^{\ms{II}}_{a^{n-1}_i,\,n}(v) \,, \qq 
\mr{p}^{\ms{II}}_{a^{n-1}_i}(v) := \mr{a}^{\ms{II}}_{a^{n-1}_i,\,\hat{n}}(v) \,, \qq
\mr{d}^{\ms{I}}(v) := \mr{b}^{\ms{I}}_{n}(v) \,, \qq 
\mr{c}^{\ms{I}}(v) := \mr{b}^{\ms{I}}_{\hat{n}}(v)
}
so that
\aln{
\mr{b}^{(n-1)}_{a^{n-1}_i}(v; \bm u^{(n)}) = \sum_{k<n} \mr{a}^{\ms{II}}_{a^{n-1}_i,\,k}(v)\, \mr{b}^{\ms{I}}_{k}(v) + \mr{b}^{\ms{II}}_{a^{n-1}_i}(v)\, \mr{d}^{\ms{I}}(v) + \mr{p}^{\ms{II}}_{a^{n-1}_i}(v)\, \mr{c}^{\ms{I}}(v)   \,.
}
This notation is reminiscent of the Bethe ansatz notation commonly used in the composite model approach only $\mr{p}^{\ms{II}}_{a^{n-1}_i}$ is an additional creation operator specific to the case at hand.
Consider the $\ms{II}$-labelled operators. Their action on the reference state $E^{(\hat{n})}_2 \ot E^{(\hat{n})}_1\in L^{\ms{II}}$ is given by
\aln{ 
\mr{a}^{\ms{II}}_{a^{n-1}_i,j}(v) \cdot E^{(\hat{n})}_2 \ot E^{(\hat{n})}_1 & = (E^{(n-1)}_{n-j})^*_{a^{n-1}_i}\cdot E^{(\hat{n})}_2 \ot E^{(\hat{n})}_1 ,
\\[0.3em]
\mr{b}^{\ms{II}}_{a^{n-1}_i}(v) \cdot E^{(\hat{n})}_2 \ot E^{(\hat{n})}_1 & = \frac{1}{v - u^{(n)}_{m_n}}\sum_{j<n} (E^{(n-1)}_{j})^*_{a^{n-1}_i}\cdot E^{(\hat{n})}_{j+2} \ot E^{(\hat{n})}_1 , 
\\
\mr{p}^{\ms{II}}_{a^{n-1}_i}(v) \cdot E^{(\hat{n})}_2 \ot E^{(\hat{n})}_1 & = \frac{1}{v - \tl{u}^{(n)}_{m_n}}\sum_{j<n} (E^{(n-1)}_{j})^*_{a^{n-1}_i} \Bigg(  \frac{1}{v-u^{(n)}_{m_n}}\,E^{(\hat{n})}_{j+2} \ot E^{(\hat{n})}_{2} + E^{(\hat{n})}_{2} \ot E^{(\hat{n})}_{j+2} \Bigg) , 
\\
\mr{p}^{\ms{II}}_{a^{n-1}_l}(w) \, \mr{b}^{\ms{II}}_{a^{n-1}_i}(v) \cdot E^{(\hat{n})}_2 \ot E^{(\hat{n})}_1 & = \frac{1}{(w - \tl{u}^{(n)}_{m_n})(v - u^{(n)}_{m_n})}\sum_{j,k<n} (E^{(n-1)}_{j})^*_{a^{n-1}_l} (E^{(n-1)}_{k})^*_{a^{n-1}_i} \\ & \hspace{3.75cm} \times \Bigg( \frac{1}{w-u^{(n)}_{m_n}}\,E^{(\hat{n})}_{j+2} \ot E^{(\hat{n})}_{k+2} + E^{(\hat{n})}_{k+2} \ot E^{(\hat{n})}_{j+2} \Bigg) .
}
The products $\mr{b}^{\ms{II}}_{a^{n-1}_j}(v) \, \mr{b}^{\ms{II}}_{a^{n-1}_i}(u)$, $\mr{p}^{\ms{II}}_{a^{n-1}_j}(v) \, \mr{p}^{\ms{II}}_{a^{n-1}_i}(u)$, and $\mr{p}^{\ms{II}}_{a^{n-1}_k}(w) \, \mr{p}^{\ms{II}}_{a^{n-1}_j}(v) \, \mr{b}^{\ms{II}}_{a^{n-1}_i}(u)$ act by zero on $E^{(\hat{n})}_2 \ot E^{(\hat{n})}_1$.
The homogeneous ($aa$ and $bb$, $pp$) exchange relations of the $\ms{II}$-labelled operators are analogous to \eqref{B=BRR} and \eqref{B=BR}, respectively. The mixed ($ab$, $ap$, $bp$) exchange relations have the form
\eqn{
\mr{a}^{\ms{II}}_{a^{n-1}_j}(v)\,\mr{b}^{\ms{II}}_{a^{n-1}_i}(u) = \mr{b}^{\ms{II}}_{a^{n-1}_i}(u)\,\, \mr{a}^{\ms{II}}_{a^{n-1}_j}(v)\, R^{(n-1,n-1)}_{a^{n-1}_i,a^{n-1}_j}(u-v) + \frac{1}{u-v}\,\, \mr{b}^{\ms{II}}_{a^{n-1}_i}(v)\,\, \mr{a}^{\ms{II}}_{a^{n-1}_j}(u)\,P^{(n-1,n-1)}_{a^{n-1}_j,a^{n-1}_i} .
}
Consider the $\ms{I}$-labelled operators. The $dc$, $cb$, $db$ exchange relations have the form 
\eqn{
\mr{d}^{\ms{I}}(v)\,\mr{c}^{\ms{I}}(u) = f^-(v,u)\, \mr{c}^{\ms{I}}(u)\,\mr{d}^{\ms{I}}(v) + \frac{1}{v-u}\, \mr{c}^{\ms{I}}(v) \, \mr{d}^{\ms{I}}(u) .
}
The standard Bethe ansatz arguments then imply 
\ali{
& \hspace{-0.5em} \prod^{\leftarrow}_{i} \mr{b}^{(n-1)}_{a^{n-1}_i}(u^{(n-1)}_{i}; \bm u^{(n)}) \cdot E^{(\hat{n})}_2 \ot E^{(\hat{n})}_1 \ot \Psi^{(n-2)}(\bm u^{(1\ldots n-2)} \,|\, \bm u^{(n-1,n)} \backslash u^{(n)}_{m_n}) 
\nn\\
& \; = \Bigg[ E^{(\hat{n})}_2 \ot E^{(\hat{n})}_1 \ot \prod^{\leftarrow}_{i} \mr{b}^{\ms{I}}_{a^{n-1}_i}(u^{(n-1)}_{i}) \label{21-1} 
\\ 
& \qq + \sum_{j} \frac{f^-(u^{(n-1)}_{j},\bm u^{(n-1)}\backslash u^{(n-1)}_{j})}{u^{(n-1)}_{j} - u^{(n)}_{m_n}} \sum_{k<n} E^{(\hat{n})}_{k+2} \ot E^{(\hat{n})}_1 \ot (E^{(n-1)}_{k})^*_{a^{n-1}_j} \el & \hspace{6.7cm} \times \prod^{\leftarrow}_{i\ne j} \mr{b}^{\ms{I}}_{a^{n-1}_i}(u^{(n-1)}_{i}) \, \mr{d}^{\ms{I}}(u^{(n-1)}_{j})
\label{21-2} 
\\
& \qq + \sum_{j} \frac{f^-(u^{(n-1)}_{j},\bm u^{(n-1)}\backslash u^{(n-1)}_{j})}{u^{(n-1)}_{j} - \tl{u}^{(n)}_{m_n}} \sum_{k<n} \Bigg( \frac{1}{u^{(n-1)}_{j} - u^{(n)}_{m_n}} \, E^{(\hat{n})}_{k+2} \ot E^{(\hat{n})}_2 + E^{(\hat{n})}_{2} \ot E^{(\hat{n})}_{k+2} \Bigg) \nn\\ & \hspace{6.7cm} \ot (E^{(n-1)}_{k})^*_{a^{n-1}_j} \prod^{\leftarrow}_{i\ne j} \mr{b}^{\ms{I}}_{a^{n-1}_i}(u^{(n-1)}_{i}) \, \mr{c}^{\ms{I}}(u^{(n-1)}_{j}) \label{21-3} 
\\
& \qq + \sum_{j<j'}  f^-((u^{(n-1)}_{j},u^{(n-1)}_{j'}),\bm u^{(n-1)}\backslash (u^{(n-1)}_{j}, u^{(n-1)}_{j'})) \nn
}
\ali{ \allowdisplaybreaks  \nn\\
& \qq\qu \times \sum_{k,l<n} \Bigg( \frac1\ga \Big( \al_{11} E^{(\hat{n})}_{k+2} \ot E^{(\hat{n})}_{l+2} + \al_{12} E^{(\hat{n})}_{l+2} \ot E^{(\hat{n})}_{k+2} \Big) \ot (E^{(n-1)}_{k})^*_{a^{n-1}_j} (E^{(n-1)}_{l})^*_{a^{n-1}_{j'}} \nn\\[-0.5em] 
& \hspace{6.7cm} \times \prod^{\leftarrow}_{i\ne j,j'} \mr{b}^{\ms{I}}_{a^{n-1}_i}(u^{(n-1)}_{i}) \, \mr{c}^{\ms{I}}(u^{(n-1)}_{j'})\,\mr{d}^{\ms{I}}(u^{(n-1)}_{j})
\nn\\
& \hspace{2.12cm} + \frac1\ga\Big( \al_{21} E^{(\hat{n})}_{k+2} \ot E^{(\hat{n})}_{l+2} + \al_{22} E^{(\hat{n})}_{l+2} \ot E^{(\hat{n})}_{k+2} \Big) \ot (E^{(n-1)}_{k})^*_{a^{n-1}_j} (E^{(n-1)}_{l})^*_{a^{n-1}_{j'}}  \nn\\[-0.1em] 
& \hspace{6.7cm} \times \prod^{\leftarrow}_{i\ne j,j'} \mr{b}^{\ms{I}}_{a^{n-1}_i}(u^{(n-1)}_{i}) \, \mr{c}^{\ms{I}}(u^{(n-1)}_{j})\,\mr{d}^{\ms{I}}(u^{(n-1)}_{j'}) \Bigg) \Bigg] 
\label{21-4}
\\
& \qq \times \Psi(\bm u^{(1\ldots n-2)} \,|\, \bm u^{(n-1,n)} \backslash u^{(n)}_{m_n}) \nn
}
where
\eqa{
\al_{11} &:= (u^{(n-1)}_{j'} - u^{(n)}_{m_n}) ( u^{(n-1)}_{j} - \tl{u}^{(n)}_{m_n} ) - (u^{(n-1)}_{j'} - u^{(n)}_{m_n})/(u^{(n-1)}_{j} - u^{(n-1)}_{j'}) \,, 
\\[0.2em]
\al_{12} &:= u^{(n-1)}_{j} - \tl{u}^{(n)}_{m_n} - ((u^{(n-1)}_{j}-u^{(n)}_{m_n}) (u^{(n-1)}_{j'}-\tl{u}^{(n)}_{m_n})+1)/(u^{(n-1)}_{j}-u^{(n-1)}_{j'}) \,, 
\\[0.2em]
\al_{21} &:= f^+(u^{(n-1)}_{j}, u^{(n-1)}_{j'}) \, (u^{(n-1)}_{j'} - u^{(n)}_{m_n} ) \,, 
\\[0.2em]
\al_{22} &:= f^+(u^{(n-1)}_{j}, u^{(n-1)}_{j'}) ((u^{(n-1)}_{j} - u^{(n)}_{m_n})(u^{(n-1)}_{j'} - \tl{u}^{(n)}_{m_n}) + 1 ) \,, 
\\[0.2em]
\ga &:= (u^{(n-1)}_{j} - u^{(n)}_{m_n})(u^{(n-1)}_{j} - \tl{u}^{(n)}_{m_n})(u^{(n-1)}_{j'} - u^{(n)}_{m_n})(u^{(n-1)}_{j'} - \tl{u}^{(n)}_{m_n}) \,.
\label{alpha}
}
We will consider the terms (\ref{21-1}--\ref{21-4}) individually. \smallskip

First, consider the term \eqref{21-1}. Acting with $\mr{b}^{(n)}_{\da_{m_n}\dda_{m_n}}( u^{(n)}_{m_n})\,\mr{B}^{(n)}(\bm u^{(n)}\backslash u^{(n)}_{m_n})$ gives the $i=n$ case of the first term on the right hand side of \eqref{RR-odd}.  \smallskip

Next, consider the term \eqref{21-2}. The operator $\mr{d}^{\ms{I}}(u^{(n-1)}_j)$ acts on $\Psi(\bm u^{(1\ldots n-2)} \,|\, \bm u^{(n-1,n)} \backslash u^{(n)}_{m_n})$ via multiplication by $f^+(u^{(n-1)}_j, \bm u^{(n)}\backslash u^{(n)}_{m_n})\,\,\mu_{n}(u^{(n-1)}_j) $ giving
\ali{
& \sum_{j} \frac{\Ga_n(u^{(n-1)}_{j};\bm u^{(1\ldots n)}\backslash (u^{(n-1)}_{j}, u^{(n)}_{m_n}))}{u^{(n-1)}_{j} - u^{(n)}_{m_n}}  \sum_{k<n} E^{(\hat{n})}_{k+2} \ot E^{(\hat{n})}_1 \ot (E^{(n-1)}_{k})^*_{a^{n-1}_j} \nn\\
& \hspace{6.4cm} \times \Psi(\bm u^{(1\ldots n-1)}\backslash u^{(n-1)}_j \,|\, u^{(n-1)}_j, \bm u^{(n)} \backslash u^{(n)}_{m_n}) \,.
\label{21-2a}
}
Using \eqref{RR-gln}, we expand $\Psi(\bm u^{(1\ldots n-1)}\backslash u^{(n-1)}_j \,|\, u^{(n-1)}_j, \bm u^{(n)} \backslash u^{(n)}_{m_n})$ in the space $V^{(n-1)}_{a^{n-1}_j}$: 
\ali{
\sum_{i<n} \sum_{\substack{|\bm u^{(r)}_{\rm III}|=1\\ i\le r<n-1}}  \prod_{i<k<n} \frac{\Ga_{k}(\bm u^{(k-1)}_{\rm III}; \bm u^{(1\ldots n)}_{\rm I})}{\bm u^{(k-1)}_{\rm III} - \bm u^{(k)}_{\rm III}} \, E^{(n-1)}_{n-i} \ot \Psi(\bm u^{(1\ldots n-1)}_{\rm I} \,|\, \bm u^{(n)}_{\rm I}) 
\label{21-2b}
}
where $\bm u^{(n-1)}_{\rm III} := u^{(n-1)}_j$ and $\bm u^{(n)}_{\rm I} := \bm u^{(n)}\backslash u^{(n)}_{m_n}$. Substituting \eqref{21-2b} into \eqref{21-2a} yields
\equ{
\sum_{i<n} \sum_{\substack{|\bm u^{(r)}_{\rm II}|=1\\ i\le r<n}}  \prod_{i<k\le n} \frac{\Ga_{k}(\bm u^{(k-1)}_{\rm III}; \bm u^{(1\ldots n)}_{\rm I})}{\bm u^{(k-1)}_{\rm III} - \bm u^{(k)}_{\rm III}}  \,\, E^{(\hat{n})}_{\hat{n}-i+1} \ot E^{(\hat{n})}_1 \ot \Psi(\bm u^{(1\ldots n-1)}_{\rm I}\,|\,\bm u^{(n)}_{\rm I}) \,.
\label{21-2c}
}
Acting with $\mr{b}^{(n)}_{\da_{m_n}\dda_{m_n}}( u^{(n)}_{m_n})\,\mr{B}^{(n)}(\bm u^{(n)}\backslash u^{(n)}_{m_n})$ gives the $i<n$ cases of the first term on the right hand side of \eqref{RR-odd}. 

We are now ready to consider the term \eqref{21-3}. Let $\eta^{\ms{I}}$ denote the restriction of $\eta^{\bm m}$ to the space $L^{\ms{I}}$. Set $\eta^{\ms{I}}_l := (E^{(\hat{n})}_{12})_{\dda_l} \cdot \eta^{\ms{I}}$. Using the explicit form of $\mr{c}^{\ms{I}}(u^{(n-1)}_j)$ we find
\equ{
\mr{c}^{\ms{I}}(u^{(n-1)}_j) \cdot \eta^{\ms{I}} = \sum_{l<m_n} \frac{\prod_{k<l} f^+(u^{(n-1)}_j, u^{(n)}_k) }{u^{(n-1)}_j - u^{(n)}_l} \,\, \mu_{n}(u^{(n-1)}_j) \,\, \eta^{\ms{I}}_l
}
giving 
\ali{
& \sum_{j} \frac{f^-(u^{(n-1)}_{j},\bm u^{(n-1)}\backslash u^{(n-1)}_{j})}{u^{(n-1)}_{j} - \tl{u}^{(n)}_{m_n}} \sum_{k<n} \Bigg( \frac{1}{u^{(n-1)}_{j} - u^{(n)}_{m_n}} \, E^{(\hat{n})}_{k+2} \ot E^{(\hat{n})}_2 + E^{(\hat{n})}_{2} \ot E^{(\hat{n})}_{k+2} \Bigg) \ot (E^{(n-1)}_{k})^*_{a^{n-1}_j} \nn\\ 
& \qu \times  \sum_{l<m_n} \frac{\prod_{k<l} f^+(u^{(n-1)}_j, u^{(n)}_k) }{u^{(n-1)}_j - u^{(n)}_l} \,\, \mu_{n}(u^{(n-1)}_j) \,\, \Psi_l(\bm u^{(1\ldots n-1)} \backslash u^{(n-1)}_j \,|\, u^{(n-1)}_j, \bm u^{(n)} \backslash u^{(n)}_{m_n}) \,.
\label{21-3a}
}
Acting with $\mr{B}^{(n)}(\bm u^{(n)}\backslash u^{(n)}_{m_n})$ and applying Lemma \ref{L:Psi-j} to the second line of \eqref{21-3a} gives 
\ali{
& \sum_{i\le l<m_n} \frac{\prod_{k<l} f^+(u^{(n-1)}_j, u^{(n)}_k)}{\prod_{l<k<m_n} f^+(u^{(n)}_k,u^{(n)}_i)} \cdot \frac{\Ga_{\hat{n}}(u^{(n)}_i,\bm u^{(1\ldots n)}\backslash (u^{(n)}_i,u^{(n)}_{m_n}))}{(u^{(n)}_l-u^{(n)}_i+1)(u^{(n-1)}_j - u^{(n)}_l)} \nn\\[0.2em] 
& \hspace{3cm} \times \mu_{n}(u^{(n-1)}_j) \,\, \Psi(\bm u^{(1\ldots n)} \backslash (u^{(n-1)}_j, u^{(n)}_{i}, u^{(n)}_{m_n}) \,|\, u^{(n-1)}_j) \,. 
\label{21-3b}
}
Using the identity 
\aln{
\frac{f^+(u^{(n-1)}_j, \bm u^{(n)}\backslash (u^{(n)}_i,u^{(n)}_{m_n}))}{u^{(n-1)}_j - u^{(n)}_i} = \sum_{i\le l<m_n} \frac{\prod_{k<l} f^+(u^{(n-1)}_j, u^{(n)}_k)}{\prod_{l<k<m_n} f^+(u^{(n)}_k,u^{(n)}_i)} \cdot \frac{1}{(u^{(n)}_l-u^{(n)}_i+1)(u^{(n-1)}_j - u^{(n)}_l)} 
}
which follows by a descending induction on $i$, expression \eqref{21-3b} becomes
\ali{
& \sum_{i<m_n} \frac{f^+(u^{(n-1)}_j, \bm u^{(n)}\backslash (u^{(n)}_i,u^{(n)}_{m_n}))}{u^{(n-1)}_j - u^{(n)}_i}  \,\, \Ga_{\hat{n}}(u^{(n)}_i,\bm u^{(1\ldots n)}\backslash (u^{(n)}_i,u^{(n)}_{m_n}))\nn\\ & \hspace{2.8cm} \times \mu_{n}(u^{(n-1)}_j)\,\, \Psi(\bm u^{(1\ldots n)} \backslash (u^{(n-1)}_j,u^{(n)}_{i},u^{(n)}_{m_n}) \,|\, u^{(n-1)}_j) \,. 
\label{21-3c}
}
Therefore, action of $\mr{B}^{(n)}(\bm u^{(n)}\backslash u^{(n)}_{m_n})$ on \eqref{21-3a} gives
\ali{
&  \sum_{j} \sum_{i<m_n} \frac{\Ga_n(u^{(n-1)}_j; \bm u^{(1\ldots n)}\backslash (u^{(n-1)}_{j}, u^{(n)}_i,u^{(n)}_{m_n}))\, \Ga_{\hat{n}}( u^{(n)}_i ; \bm u^{(1\ldots n)}\backslash (u^{(n)}_i,u^{(n)}_{m_n}))}{(u^{(n-1)}_j - u^{(n)}_i)(u^{(n-1)}_{j} - \tl{u}^{(n)}_{m_n})} \nn\\
& \qq \times \sum_{k<n} \Bigg( \frac{1}{u^{(n-1)}_{j} - u^{(n)}_{m_n}} E^{(\hat{n})}_{k+2} \ot E^{(\hat{n})}_2 + E^{(\hat{n})}_{2} \ot E^{(\hat{n})}_{k+2} \Bigg) \ot (E^{(n-1)}_{k})^*_{a^{n-1}_j} \nn\\[0.2em]
& \hspace{5.6cm} \times \Psi(\bm u^{(1\ldots n)}\backslash (u^{(n-1)}_j, u^{(n)}_i,u^{(n)}_{m_n}) \,|\, u^{(n-1)}_j) \,.
\label{21-3d}
}
Finally, we expand $\Psi(\bm u^{(1\ldots n)}\backslash (u^{(n-1)}_j, u^{(n)}_i,u^{(n)}_{m_n}) \,|\, u^{(n-1)}_j)$ in the space $V^{(n-1)}_{a^{n-1}_j}$ analogously to \eqref{21-2b}. This gives
\ali{
&  \sum_{i<n} \sum_{\substack{|\bm u^{(r)}_{\rm II}|=1\\ i\le r\le n}}  \prod_{i<k<n} \frac{\Ga_{k}(\bm u^{(k-1)}_{\rm II}; \bm u^{(1\ldots n)}_{\rm I})}{\bm u^{(k-1)}_{\rm II} - \bm u^{(k)}_{\rm II}} \cdot \frac{\Ga_n(\bm u^{(n-1)}_{\rm II}; \bm u^{(1\ldots n)}_{\rm I})\, \Ga_{\hat{n}}( \bm u^{(n)}_{\rm II} ; \bm u^{(1\ldots n)}_{\rm I})}{(\bm u^{(n-1)}_{\rm II} - \bm u^{(n)}_{\rm II})(\bm u^{(n-1)}_{\rm II} - \tl{\bm u}^{(n)}_{\rm III})} \nn\\[-0.5em]
& \hspace{2.5cm} \times \Bigg( \frac{1}{\bm u^{(n-1)}_{\rm II} - \bm u^{(n)}_{\rm III}}\, E^{(\hat{n})}_{\ol{\imath}} \ot E^{(\hat{n})}_2 + E^{(\hat{n})}_{2} \ot E^{(\hat{n})}_{\ol{\imath}} \Bigg) \ot \Psi(\bm u^{(1\ldots n)}_{\rm I}) \,.
\label{21-3e}
}
Combining \eqref{21-3e} with \eqref{22-2} and acting with $\mr{b}^{(n)}_{\da_{m_n}\dda_{m_n}}( u^{(n)}_{m_n})$ gives the second term on the right hand side of \eqref{RR-odd}.

It remains to consider the term \eqref{21-4}. Using the same arguments as above, and renaming $j\to p$, $j'\to p'$, we obtain
\if0
\aln{
& \sum_{i<m_n} \sum_{p<p'} \frac{ \Ga_n((u^{(n-1)}_{p},u^{(n-1)}_{p'});\bm u^{(1\ldots n)}\backslash (u^{(n-1)}_{p}, u^{(n-1)}_{p'}, u^{(n)}_{i}, u^{(n)}_{m_n}))\, \Ga_{\hat{n}}( u^{(n)}_i ; \bm u^{(1\ldots n)}\backslash (u^{(n)}_i,u^{(n)}_{m_n})) }{(u^{(n-1)}_{p} - u^{(n)}_{m_n})(u^{(n-1)}_{j} - \tl{u}^{(n)}_{m_n})(u^{(n-1)}_{p'} - u^{(n)}_{m_n})(u^{(n-1)}_{p'} - \tl{u}^{(n)}_{m_n})} \nn\\ 
& \times \sum_{k,l<n} \Bigg[ 
\frac{ f^+(u^{(n-1)}_{p},u^{(n)}_{i})  }{u^{(n-1)}_{p'} - u^{(n)}_i} \Bigg( \! \Bigg( u^{(n-1)}_{j} -\tl{u}^{(n)}_{m_n} - \frac{(u^{(n-1)}_{p}-u^{(n)}_{m_n}) (u^{(n-1)}_{p'}-\tl{u}^{(n)}_{m_n})+1}{u^{(n-1)}_{p}-u^{(n-1)}_{p'}} \Bigg)  E^{(\hat{n})}_{k+2} \ot E^{(\hat{n})}_{l+2} \nn\\ & \hspace{5cm}+ (u^{(n-1)}_{p'} - u^{(n)}_{m_n}) \Bigg( u^{(n-1)}_{p} - \tl{u}^{(n)}_{m_n} - \frac{1}{u^{(n-1)}_{p} - u^{(n-1)}_{p'}} \Bigg) E^{(\hat{n})}_{l+2} \ot E^{(\hat{n})}_{k+2} \Bigg) 
\\
& \qq\qq + \frac{f^+(u^{(n-1)}_{p}, u^{(n-1)}_{p'}) \, f^+(u^{(n-1)}_{p'},u^{(n)}_{i})}{u^{(n-1)}_{p} - u^{(n)}_i} \Bigg( ((u^{(n-1)}_{p} - u^{(n)}_{m_n})(u^{(n-1)}_{p'} - \tl{u}^{(n)}_{m_n}) + 1 )\, E^{(\hat{n})}_{k+2} \ot E^{(\hat{n})}_{l+2} \nn\\ & \hspace{10.45cm}+ (u^{(n-1)}_{p'} - u^{(n)}_{m_n} )\, E^{(\hat{n})}_{l+2} \ot E^{(\hat{n})}_{k+2} \Bigg) \Bigg] 
\\
& \ot \mr{B}^{(n)}(\bm u^{(n)}\backslash u^{(n)}_{m_n})\,\, (E^{(n-1)}_{k})^*_{a^{n-1}_{p}} (E^{(n-1)}_{l})^*_{a^{n-1}_{p'}} \,\Psi(\bm u^{(1\ldots n-1)}\backslash (u^{(n-1)}_{p}, u^{(n-1)}_{p'}) \,|\, u^{(n-1)}_{p}, u^{(n-1)}_{p'},\bm u^{(n)} \backslash u^{(n)}_{m_n})
}
\fi
\ali{
& \sum_{i<m_n} \sum_{p<p'} \Ga_n((u^{(n-1)}_{p},u^{(n-1)}_{p'});\bm u^{(1\ldots n)}\backslash (u^{(n-1)}_{p}, u^{(n-1)}_{p'}, u^{(n)}_{i}, u^{(n)}_{m_n}))\,\, \Ga_{\hat{n}}( u^{(n)}_i ; \bm u^{(1\ldots n)}\backslash (u^{(n)}_i,u^{(n)}_{m_n}))  \nn\\ 
& \times \sum_{k,l<n} \frac1\ga \, \Big( \be_1\, E^{(\hat{n})}_{k+2} \ot E^{(\hat{n})}_{l+2} + \be_2\, E^{(\hat{n})}_{l+2} \ot E^{(\hat{n})}_{k+2} \Big) \ot (E^{(n-1)}_{k})^*_{a^{n-1}_p} (E^{(n-1)}_{l})^*_{a^{n-1}_{p'}} \nn\\ 
& \hspace{4.5cm} \times \Psi(\bm u^{(1\ldots n)}\backslash (u^{(n-1)}_p, u^{(n-1)}_{p'}, u^{(n)}_i,u^{(n)}_{m_n}) \,|\, u^{(n-1)}_p, u^{(n-1)}_{p'}) 
\label{21-4a}
}
where
\eqa{
\be_1 & := \frac{f^+(u^{(n-1)}_{p},u^{(n)}_{i})}{u^{(n-1)}_{p'}-u^{(n)}_{i}}\,\al_{11} + \frac{f^+(u^{(n-1)}_{p'},u^{(n)}_{i})}{u^{(n-1)}_{p}-u^{(n)}_{i}}\,\al_{21} \\ 
& \;=  \frac{u^{(n-1)}_{p'}-u^{(n)}_{m_n}}{u^{(n-1)}_{p'}-u^{(n)}_{i}}  \Bigg( u^{(n-1)}_{p} - \tl{u}^{(n)}_{m_n} + 1 + \frac{u^{(n-1)}_{p'} - \tl{u}^{(n)}_{m_n}}{u^{(n-1)}_{p}-u^{(n)}_{i}} \Bigg) \,, \\
\be_2 & := \frac{f^+(u^{(n-1)}_{p},u^{(n)}_{i})}{u^{(n-1)}_{p'}-u^{(n)}_{i}}\,\al_{12} + \frac{f^+(u^{(n-1)}_{p'},u^{(n)}_{i})}{u^{(n-1)}_{p}-u^{(n)}_{i}}\,\al_{22} \\ 
& \;= f^+(u^{(n-1)}_p,u^{(n)}_i)\,\frac{u^{(n-1)}_{p}-\tl{u}^{(n)}_{m_n}}{u^{(n-1)}_{p'} - u^{(n)}_{i}} + \frac{(u^{(n-1)}_{p}-u^{(n)}_{m_n})(u^{(n-1)}_{p'}-\tl{u}^{(n)}_{m_n})+1}{u^{(n-1)}_{p}-u^{(n)}_{i}} \,.
\label{beta}
}
Note that
\equ{
\be_1 + \be_2 = \frac{\ga}{u^{(n)}_i - \tl{u}^{(n)}_{m_n}} \Big( K(u^{(n-1)}_p,u^{(n-1)}_{p'} \,|\, u^{(n)}_i, u^{(n)}_{m_n} ) - K(u^{(n-1)}_p,u^{(n-1)}_{p'} \,|\, \tl{u}^{(n)}_{m_n}, u^{(n)}_{m_n} ) \Big) \,.
\label{b+b}
}
We can now use \eqref{RR2-gln} to expand vector
\[
\Psi(\bm u^{(1\ldots n)}\backslash (u^{(n-1)}_{p}, u^{(n-1)}_{p'},u^{(n)}_i,u^{(n)}_{m_n}) \,|\, u^{(n-1)}_{p}, u^{(n-1)}_{p'})
\]
in the space $V^{(n-1)}_{a^{n-1}_{p'}} \ot V^{(n-1)}_{a^{n-1}_{p}}$:
\ali{
& \sum_{1\le i<n} \sum_{\substack{|\bm u^{(r)}_{\rm II,III}|=(2,0)\\ i\le r<n-1}} \prod_{i<k<n} \Ga_{k}(\bm u^{(k-1)}_{\rm II}; \bm u^{(1\ldots n)}_{\rm I}) \,\, K(\bm u^{(k-1)}_{\rm II} \,|\, {\bm u}^{(k)}_{\rm II,III}) \,\, E^{(n-1)}_{n-i} \ot E^{(n-1)}_{n-i} \ot \Psi(\bm u^{(1\ldots n)}_{\rm I} ) \label{21-4b} \\[-0.4em]
& + \sum_{1\le i<j<n} \sum_{\substack{|\bm u^{(r)}_{\rm III}|=1\\ i\le r<n-1}} \sum_{\substack{|\bm u^{(s)}_{\rm II}|=1\\ j\le s<n-1}} \prod_{i<k<j} \frac{\Ga_{k}(\bm u^{(k-1)}_{\rm III}; \bm u^{(1\ldots n)}_{\rm I})}{\bm u^{(k-1)}_{\rm III} - \bm u^{(k)}_{\rm III}} \cdot \Ga_{j}(\bm u^{(j-1)}_{\rm III}; \bm u^{(1\ldots n)}_{\rm I})  \nn\\[0.2em] 
& \qu \times \prod_{j<k<n} \frac{\Ga_{k}(\bm u^{(k-1)}_{\rm II}; \bm u^{(1\ldots n)}_{\rm I})\, \Ga_{k}(\bm u^{(k-1)}_{\rm III}; \bm u^{(1\ldots n)}_{\rm I,II})}{(\bm u^{(k-1)}_{\rm II} - \bm u^{(k)}_{\rm II})(\bm u^{(k-1)}_{\rm III} - \bm u^{(k)}_{\rm III})} \nn\\[0.25em]
& \qu \times \Bigg( \frac{f^+(\bm u^{(j-1)}_{\rm III},\bm u^{(j)}_{\rm II})}{\bm u^{(j-1)}_{\rm III}-\bm u^{(j)}_{\rm III}}\, E^{(n-1)}_{n-i}\! \ot E^{(n-1)}_{n-j} + \frac{1}{\bm u^{(j-1)}_{\rm III}-\bm u^{(j)}_{\rm II}}\,E^{(n-1)}_{n-j}\! \ot E^{(n-1)}_{n-i} \Bigg) \ot \Psi(\bm u^{(1\ldots n)}_{\rm I}) 
\label{21-4c}
}  
where $\bm u^{(n-1)}_{\rm II} := u^{(n-1)}_{p}$, $\bm u^{(n-1)}_{\rm III} := u^{(n-1)}_{p'}$ and $\bm u^{(n)}_{\rm I} := \bm u^{(n)}\backslash (u^{(n)}_i, u^{(n)}_{m_n})$. 

\noindent Substituting the term \eqref{21-4b} into \eqref{21-4a} and applying \eqref{b+b} gives
\ali{
& \sum_{1\le i<n} \sum_{\substack{|\bm u^{(r)}_{\rm II}|=2\\ i\le r<n}} \sum_{|\bm u^{(n)}_{\rm II}|=1} \prod_{i<k\le n} \Ga_{k}(\bm u^{(k-1)}_{\rm II}; \bm u^{(1\ldots n)}_{\rm I}) \prod_{i<k<n} K(\bm u^{(k-1)}_{\rm II} \,|\, {\bm u}^{(k)}_{\rm II}) \nn\\
& \times \frac{\Ga_{\hat{n}}(\bm u^{(n)}_{\rm II}; \bm u^{(1\ldots n)}_{\rm I})}{\bm u^{(n)}_{\rm II} - \tl{\bm u}^{(n)}_{\rm III}} \,\Big( K(\bm u^{(n-1)}_{\rm II} \,|\, \bm u^{(n)}_{\rm II, III} ) - K(\bm u^{(n-1)}_{\rm II} \,|\, \tl{\bm u}^{(n)}_{\rm III}, {\bm u}^{(n)}_{\rm III} ) \Big) \, E^{(\hat{n})}_{\ol{\imath}} \ot E^{(\hat{n})}_{\ol{\imath}}  \ot \Phi(\bm u^{(1\ldots n)}_{\rm I} ) \,.
\label{21-4d}
}
Upon combining \eqref{21-4d} with \eqref{22-1} and acting with $\mr{b}^{(n)}_{\da_{m_n}\dda_{m_n}}(u^{(n)}_{m_n})$ gives the third term on the right hand side of \eqref{RR-odd}. 

Finally, substituting \eqref{21-4c} into \eqref{21-4a} and exploiting symmetry of Bethe vectors gives
\ali{
& \sum_{1\le i<j<n} \sum_{\substack{|\bm u^{(r)}_{\rm III}|=1\\ i\le r<n}} \sum_{\substack{|\bm u^{(s)}_{\rm II}|=1\\ j\le s\le n}} \prod_{i<k<j} \frac{\Ga_{k}(\bm u^{(k-1)}_{\rm III}; \bm u^{(1\ldots n)}_{\rm I})}{\bm u^{(k-1)}_{\rm III} - \bm u^{(k)}_{\rm III}} \cdot \Ga_{j}(\bm u^{(j-1)}_{\rm III}; \bm u^{(1\ldots n)}_{\rm I})  \nn\\[0.2em] 
& \qu \times \prod_{j<k<n} \frac{\Ga_{k}(\bm u^{(k-1)}_{\rm II}; \bm u^{(1\ldots n)}_{\rm I})\, \Ga_{k}(\bm u^{(k-1)}_{\rm III}; \bm u^{(1\ldots n)}_{\rm I,II})}{(\bm u^{(k-1)}_{\rm II} - \bm u^{(k)}_{\rm II})(\bm u^{(k-1)}_{\rm III} - \bm u^{(k)}_{\rm III})} \cdot \Ga_{n}(\bm u^{(n-1)}_{\rm II,III}; \bm u^{(1\ldots n)}_{\rm I})\,\Ga_{\hat{n}}(\bm u^{(n)}_{\rm II}; \bm u^{(1\ldots n)}_{\rm I}) \nn\\[0.25em]
& \qu \times \frac{1}{2\ga}\Bigg[ \Bigg(\!\Bigg(\be_2 \frac{f^+(\bm u^{(j-1)}_{\rm III},\bm u^{(j)}_{\rm II})}{\bm u^{(j-1)}_{\rm III}-\bm u^{(j)}_{\rm III}} + \be_1 \frac{1}{\bm u^{(j-1)}_{\rm III}-\bm u^{(j)}_{\rm II}} \Bigg)\, E^{(\hat{n})}_{\ol{\imath}} \ot E^{(\hat{n})}_{\ol{\jmath}} \nn\\
& \qq\qq + \Bigg( \be_1 \frac{f^+(\bm u^{(j-1)}_{\rm III},\bm u^{(j)}_{\rm II})}{\bm u^{(j-1)}_{\rm III}-\bm u^{(j)}_{\rm III}} + \be_2 \frac{1}{\bm u^{(j-1)}_{\rm III}-\bm u^{(j)}_{\rm II}} \Bigg)\, E^{(\hat{n})}_{\ol{\jmath}} \ot E^{(\hat{n})}_{\ol{\imath}} \Bigg] \ot \Psi(\bm u^{(1\ldots n)}_{\rm I}) .
\label{21-4f}
}
Combining \eqref{21-4f} with \eqref{22-3} and acting with $\mr{b}^{(n)}_{\da_{m_n}\dda_{m_n}}(u^{(n)}_{m_n})$ gives the last term on the right hand side of \eqref{RR-odd}. 
\end{proof}


\subsection{Proof of Lemma \ref{L:snn}} \label{sec:snn-proof}

The idea of the proof is to construct a certain Bethe vector and evaluate this vector in two different ways. Equating the resulting expressions will yield the claim of the Lemma.

We begin by rewriting the wanted relation in a more convenient way. From \eqref{symmDB} and \eqref{b-n} we find that
\equ{
\left\{ \frac{p(v)}{u^{(n)}_i-v} \,\, \mr{b}^{(n)}_{\da_{m_n} \dda_{m_n}}\!(v) \right\}^v = \mr{b}^{(n)}_{\da_{m_n} \dda_{m_n}}(v) \, \Bigg(\frac{f^+(u^{(n)}_i,\tl{v})}{u^{(n)}_i-v}\,\, + \frac{1}{u^{(n)}_i-\tl{v}}\,\,P^{(\hat{n},\hat{n})}_{\da_{m_n} \dda_{m_n}} \Bigg) .
\label{symm-b}
}
Repeating the steps used in deriving \eqref{Psi-exp} and applying \eqref{symm-b} we rewrite \eqref{snn} as \enlargethispage{0.5em}
\ali{
\label{snn-new}
s_{\hat{n}\hat{n}}(v)\,\Psi(\bm u^{(1\ldots n)}) & = \Ga_{\hat{n}}(v, \bm u^{(1\ldots n)}) \,\Psi(\bm u^{(1\ldots n)}) \nn\\ 
& \qu - \sum_{i} \mr{b}^{(n)}_{\da_{m_n} \dda_{m_n}}(v) \, \Bigg(\frac{f^+(u^{(n)}_i,\tl{v})}{u^{(n)}_i-v}\,\, + \frac{1}{u^{(n)}_i-\tl{v}}\,\,P^{(\hat{n},\hat{n})}_{\da_{m_n} \dda_{m_n}} \Bigg)  \nn\\[0.2em] 
& \qq \times  \Ga_{\hat{n}}(u^{(n)}_i, \bm u^{(1\ldots n)}\backslash u^{(n)}_i) \,\, \mr{B}^{(n)}(\bm u^{(n)}_{\si_i}\backslash u^{(n)}_{i}) \,\, \accentset{\bullet}{\Psi}_{2,1}(\bm u^{(1\ldots n-1)}\,|\,\bm u^{(n)}_{\si_i})  \nn\\[0.2em]
& \qu - \sum_{i\ne i'} \mr{b}^{(n)}_{\da_{m_n-1} \dda_{m_n-1}}(v) \, \Bigg(\frac{f^+(u^{(n)}_i,\tl{v})}{u^{(n)}_i-v}\,\, + \frac{1}{u^{(n)}_i-\tl{v}}\,\,P^{(\hat{n},\hat{n})}_{\da_{m_n-1} \dda_{m_n-1}} \Bigg)  \nn\\
 & \qq \times \Ga_{\hat{n}}((u^{(n)}_i,u^{(n)}_{i'}); \bm u^{(1\ldots n)}\backslash (u^{(n)}_{i},u^{(n)}_{i'})) \,\, \frac{f^-(u^{(n)}_i,u^{(n)}_{i'}) \, f^+(u^{(n)}_i,\tl{u}^{(n)}_{i'})}{u^{(n)}_{i'} - \tl{u}^{(n)}_{i}}  \nn \\ 
 & \hspace{2.7cm} \times \mr{B}^{(n)}(\bm u^{(n)}\backslash (u^{(n)}_{i},u^{(n)}_{i'}))\,\accentset{\bullet}{\Psi}_{2,2}(\bm u^{(1\ldots n-1)}\,|\,\bm u^{(n)}_{\si_i}\backslash u^{(n)}_{i'}) \,.
}
Let $\Psi_{m_n+1}(\bm u^{(1\ldots n)}\cup v)$ denote a Bethe vector with $m_n{+}1$ level-$n$ excitations and the reference vector $\eta^{\bm m}_{m_n+1} := (E^{(\hat{n})}_{12})_{\da_{m_n+1}}\eta^{\bm m}$; here $v$ denotes the $(m_n{+}1)$-st level-$n$ Bethe root.
Applying \eqref{Psi-j} and \eqref{Psi-exp} to this Bethe vector we obtain
\ali{
\Psi_{m_n+1}(\bm u^{(1\ldots n)}\cup v) &= \Ga_{\hat{n}}(v,\bm u^{(1\ldots n)}) \, \Psi(\bm u^{(1\ldots n)}) 
\nn\\ 
& \qu - \sum_{i} \frac{f^+(u^{(n)}_i, \tl{v})}{u^{(n)}_i - v}\,\, \Ga_{\hat{n}}(u^{(n)}_i; \bm u^{(1\ldots n)}\backslash u^{(n)}_i) \nn\\
& \qq \times \mr{b}^{(n)}_{\da_{m_n}\dda_{m_n}}(v) \,\, \mr{B}^{(n)}(\bm u^{(n)}\backslash u^{(n)}_i)\,\accentset{\bullet}{\Psi}_{2,1}(\bm u^{(1\ldots n-1)}\,|\,\bm u^{(n)}\backslash u^{(n)}_i \cup v) 
\nn\\[0.7em]
& \qu - \sum_{i'\ne i} \Ga_{\hat{n}}((u^{(n)}_i,u^{(n)}_{i'}); \bm u^{(1\ldots n)}\backslash (u^{(n)}_{i},u^{(n)}_{i'})) \, K(u^{(n)}_i,u^{(n)}_{i'}\,|\,v,\tl{v})\, f^+(u^{(n)}_i,u^{(n)}_{i'}) 
\nn\\
& \qq \times \mr{b}^{(n)}_{\da_{m_n-1}\dda_{m_n-1}}(v) \,\, \mr{B}^{(n)}(\bm u^{(n)}\backslash (u^{(n)}_{i},u^{(n)}_{i'}))  \nn\\
& \hspace{4.2cm} \times \accentset{\bullet}{\Psi}_{2,2}(\bm u^{(1\ldots n-1)}\,|\,\bm u^{(n)}\backslash (u^{(n)}_{i},u^{(n)}_{i'})\cup v) \,.
\label{Psi-n-1}
}
Next, recall \eqref{Rn-action} and note that $P^{(\hat{n},\hat{n})}_{\da_{i}\dda_{m_n}} \eta^{\bm m}_{m_n} = P^{(\hat{n},\hat{n})}_{\da_{m_n}\dda_{m_n}} \eta^{\bm m}_{i}$ giving
\equ{
\mr{R}^{(\hat{n})}(u^{(n)}_{m_n}; \bm u^{(n)} \backslash u^{(n)}_{m_n})\cdot \eta^{\bm m}_{m_n} =  
\eta^{\bm m}_{m_n} + \sum_{i<m_n} \frac{\prod_{i<k<m_n} f^+(u^{(n)}_{k} , \tl{u}^{(n)}_{m_n})}{u^{(n)}_{i} - \tl{u}^{(n)}_{m_n}} \,\, P^{(\hat{n},\hat{n})}_{\da_{m_n}\dda_{m_n}} \eta^{\bm m}_{i} \,.
}
This yields an analogue of \eqref{Psi-exp} for $\Psi_{m_n+1}(\bm u^{(1\ldots n)}\cup v)$:
\ali{
\Psi_{m_n+1}(\bm u^{(1\ldots n)}\cup v) &= \mr{b}^{(n)}_{\da_{m_n+1}\dda_{m_n+1}}(v) \,\, \mr{B}^{(n)}(\bm u^{(n)})\,\accentset{\bullet}{\Psi}_{1,1}(\bm u^{(1\ldots n-1)}\,|\,\bm u^{(n)} \cup v)\nn\\
 & \qu + \sum_{i} \frac{\Ga_{\hat{n}}(u^{(n)}_i; \bm u^{(1\ldots n)}\backslash u^{(n)}_{i})}{u^{(n)}_{i} - \tl{v}} \,\,\nn \\[-0.3em]
 & \qq \times \mr{b}^{(n)}_{\da_{m_n}\dda_{m_n}}(v) \,\, \mr{B}^{(n)}(\bm u^{(n)}\backslash u^{(n)}_{i})\,\accentset{\bullet}{\Psi}_{1,2}(\bm u^{(1\ldots n-1)}\,|\,\bm u^{(n)}\backslash u^{(n)}_{i} \cup v) \,.
\label{Psi-n-2}
}
The next step is to evaluate products of creation operators $\mr{B}^{(n)}$ and the dotted Bethe vectors $\accentset{\bullet}{\Psi}$. 
This is done by applying the same techniques used in the proof of Proposition \ref{P:RR-odd}. Hence, we will skip the technical details and state the final expressions only.


Evaluating the named products in \eqref{Psi-n-1} and \eqref{Psi-n-2} gives 
\ali{
& \mr{B}^{(n)}(\bm u^{(n)}\backslash u^{(n)}_i)\,\accentset{\bullet}{\Psi}_{2,1}(\bm u^{(1\ldots n-1)}\,|\,\bm u^{(n)} \backslash u^{(n)}_{i} \cup v) \nn\\[0.3em]
& \qu = E^{(\hat{n})}_2 \ot E^{(\hat{n})}_1 \ot \Psi(\bm u^{(1\ldots n)} \backslash u^{(n)}_{i}) 
\nn\\[0.3em] 
& \qq + \sum_{j} \frac{\Ga_n(u^{(n-1)}_{j}; \bm u^{(1\ldots n)}\backslash (u^{(n-1)}_{j}, u^{(n)}_{i}))}{u^{(n-1)}_{j} - v} \nn\\ 
& \qq\qu \times \sum_{1\le k<n} E^{(\hat{n})}_{k+2} \ot E^{(\hat{n})}_1 \ot (E^{(n-1)}_{k})^*_{a^{n-1}_j} \,  \Psi(\bm u^{(1\ldots n)}\backslash (u^{(n-1)}_j, u^{(n)}_{i}|\, u^{(n-1)}_j)
\nn\\[0.2em]
& \qq + \sum_{j} \sum_{i'\ne i} \frac{\Ga_n(u^{(n-1)}_{j};\bm u^{(1\ldots n)}\backslash (u^{(n-1)}_{j},u^{(n)}_{i},u^{(n)}_{i'}))\, \Ga_{\hat{n}}(u^{(n)}_{i'};\bm u^{(1\ldots n)}\backslash (u^{(n)}_{i},u^{(n)}_{i'}))}{(u^{(n-1)}_{j} - \tl{v})(u^{(n-1)}_{j} - u^{(n)}_{i'})} \nn\\
& \qq\qu \times \sum_{1\le k<n} \Bigg( \frac{1}{u^{(n-1)}_{j} - v} \, E^{(\hat{n})}_{k+2} \ot E^{(\hat{n})}_2 + E^{(\hat{n})}_{2} \ot E^{(\hat{n})}_{k+2} \Bigg) \ot (E^{(n-1)}_{k})^*_{a^{n-1}_j} \nn\\[-0.2em] 
& \hspace{5cm} \times \Psi(\bm u^{(1\ldots n)}\backslash (u^{(n-1)}_j,u^{(n)}_i,u^{(n)}_{i'}) \,|\, u^{(n-1)}_j ) \hspace{3.7cm}\nn
}
\ali{
& \qq + \sum_{j<j'} \sum_{i'\ne i}\Ga_n((u^{(n-1)}_{j},u^{(n-1)}_{j'});\bm u^{(1\ldots n)}\backslash (u^{(n-1)}_{j},u^{(n-1)}_{j'},u^{(n)}_{i},u^{(n)}_{i'}))\,\, \Ga_{\hat{n}}(u^{(n)}_{i'};\bm u^{(1\ldots n)}\backslash (u^{(n)}_{i},u^{(n)}_{i'})) \nn\\ 
& \qq\qu \times \sum_{1\le k,l<n} \frac{1}{\ga} \Big( \be^{(21)}_{1} E^{(\hat{n})}_{k+2} \ot E^{(\hat{n})}_{l+2} + \be^{(21)}_{2} E^{(\hat{n})}_{l+2} \ot E^{(\hat{n})}_{k+2} \Big) \ot (E^{(n-1)}_{k})^*_{a^{n-1}_j} (E^{(n-1)}_{l})^*_{a^{n-1}_{j'}} \nn\\
& \hspace{5cm} \times \Psi(\bm u^{(1\ldots n)}\backslash (u^{(n-1)}_j,u^{(n-1)}_{j'},u^{(n)}_i,u^{(n)}_{i'}) \,|\, u^{(n-1)}_j, u^{(n-1)}_{j'} )
\label{f21}
}
and
\ali{ 
& \mr{B}^{(n)}(\bm u^{(n)}\backslash u^{(n)}_i)\,\accentset{\bullet}{\Psi}_{1,2}(\bm u^{(1\ldots n-1)}\,|\,\bm u^{(n)} \backslash u^{(n)}_{i} \cup v) \nn\\[0.3em]
& \qu = E^{(\hat{n})}_1 \ot E^{(\hat{n})}_2 \ot \Psi(\bm u^{(1\ldots n)} \backslash u^{(n)}_{i}) 
\nn\\[0.3em] 
& \qq + \sum_{j} \frac{\Ga_n(u^{(n-1)}_{j};\bm u^{(1\ldots n)}\backslash (u^{(n-1)}_{j},u^{(n)}_{i}))}{u^{(n-1)}_{j} - \tl{v}} \nn\\[-0.2em]
& \qq\qu \times \sum_{1\le k<n} \Bigg( \frac{1}{u^{(n-1)}_{j} - v} \, E^{(\hat{n})}_{k+2} \ot E^{(\hat{n})}_1 + E^{(\hat{n})}_{1} \ot E^{(\hat{n})}_{k+2} \Bigg)  \ot (E^{(n-1)}_{k})^*_{a^{n-1}_j} \nn\\[-0.2em] 
& \hspace{5cm} \times \Psi(\bm u^{(1\ldots n)} \backslash (u^{(n-1)}_{j},u^{(n)}_{i}) \,|\, u^{(n-1)}_{j} ) \allowdisplaybreaks
\nn\\[0.2em]
& \qq + \sum_{j} \sum_{i'\ne i} \frac{\Ga_n(u^{(n-1)}_{j};\bm u^{(1\ldots n)}\backslash (u^{(n-1)}_{j},u^{(n)}_{i},u^{(n)}_{i'}))\, \Ga_{\hat{n}}(u^{(n)}_{i'};\bm u^{(1\ldots n)}\backslash (u^{(n)}_{i},u^{(n)}_{i'}))}{(u^{(n-1)}_{j} - v)(u^{(n-1)}_{j} - u^{(n)}_{i'})} \nn\\
& \qq\qu \times \sum_{1\le k<n} E^{(\hat{n})}_{k+2} \ot E^{(\hat{n})}_2 \ot (E^{(n-1)}_{k})^*_{a^{n-1}_j} \,  \Psi(\bm u^{(1\ldots n)}\backslash (u^{(n-1)}_j,u^{(n)}_{i},u^{(n)}_{i'}) \,|\, u^{(n-1)}_j )
\nn\\[0.2em]
& \qq + \sum_{j<j'} \sum_{i'\ne i}\Ga_n((u^{(n-1)}_{j},u^{(n-1)}_{j'});\bm u^{(1\ldots n)}\backslash (u^{(n-1)}_{j},u^{(n-1)}_{j'},u^{(n)}_{i},u^{(n)}_{i'}))\, \Ga_{\hat{n}}(u^{(n)}_{i'};\bm u^{(1\ldots n)}\backslash (u^{(n)}_{i},u^{(n)}_{i'})) \nn\\ 
& \qq\qu \times \sum_{1\le k,l<n} \frac1\ga \Big( \be^{(12)}_{1} E^{(\hat{n})}_{k+2} \ot E^{(\hat{n})}_{l+2} + \be^{(12)}_{12} E^{(\hat{n})}_{l+2} \ot E^{(\hat{n})}_{k+2} \Big) \ot (E^{(n-1)}_{k})^*_{a^{n-1}_j} (E^{(n-1)}_{l})^*_{a^{n-1}_{j'}} \nn\\
& \hspace{5cm} \times \Psi(\bm u^{(1\ldots n)}\backslash (u^{(n-1)}_j,u^{(n-1)}_{j'},u^{(n)}_{i},u^{(n)}_{i'}) \,|\, u^{(n-1)}_j, u^{(n-1)}_{j'} )
\label{f12}
}
and 
\ali{ 
& \mr{B}^{(n)}(\bm u^{(n)})\,\accentset{\bullet}{\Psi}_{1,1}(\bm u^{(1\ldots n-1)}\,|\,\bm u^{(n)}  \cup v) \nn\\[0.3em]
& \qu = E^{(\hat{n})}_1 \ot E^{(\hat{n})}_1 \ot  \Psi(\bm u^{(1\ldots n)}) 
\nn\\[0.3em] 
& \qq + \sum_{j} \sum_{i} \frac{\Ga_n(u^{(n-1)}_{j};\bm u^{(1\ldots n)}\backslash (u^{(n-1)}_{j},u^{(n)}_{i}))\, \Ga_{\hat{n}}(u^{(n)}_{i};\bm u^{(1\ldots n)}\backslash u^{(n)}_{i})}{u^{(n-1)}_{j} - u^{(n)}_{i}} \nn\\
& \qq\qu \times \sum_{1\le k<n} \Bigg( \frac{f^+(u^{(n-1)}_{j}, \tl{v})}{u^{(n-1)}_{j} - v} \, E^{(\hat{n})}_{k+2} \ot E^{(\hat{n})}_1 + \frac{1}{u^{(n-1)}_{j} - \tl{v}} \, E^{(\hat{n})}_{1} \ot E^{(\hat{n})}_{k+2} \Bigg)  \ot (E^{(n-1)}_{k})^*_{a^{n-1}_j} \nn\\[0.3em]
& \hspace{5cm} \times \Psi(\bm u^{(1\ldots n)}\backslash (u^{(n-1)}_j,u^{(n)}_i) \,|\, u^{(n-1)}_j ) 
\nn\\[0.4em]
& \qq + \sum_{j<j'} \sum_{i < i'} \Ga_n((u^{(n-1)}_{j},u^{(n-1)}_{j'});\bm u^{(1\ldots n)}\backslash (u^{(n-1)}_{j},u^{(n-1)}_{j'},u^{(n)}_{i},u^{(n)}_{i'})) \nn\\
& \qq\qu \times \Ga_{\hat{n}}((u^{(n)}_{i},u^{(n)}_{i'});\bm u^{(1\ldots n)}\backslash (u^{(n)}_{i},u^{(n)}_{i'})) \, K(u^{(n-1)}_{j},u^{(n-1)}_{j'}\,|\,u^{(n)}_{i},u^{(n)}_{i'})\, f^+(u^{(n)}_{i},\tl{u}^{(n)}_{i'})\nn\\[0.4em]
& \qq\qu \times \sum_{1\le k,l<n} \Big( \be^{(11)}_{1} E^{(\hat{n})}_{k+2} \ot E^{(\hat{n})}_{l+2} + \be^{(11)}_{2} E^{(\hat{n})}_{l+2} \ot E^{(\hat{n})}_{k+2} \Big) \ot (E^{(n-1)}_{k})^*_{a^{n-1}_j} (E^{(n-1)}_{l})^*_{a^{n-1}_{j'}} \nn\\
& \hspace{5cm} \times \Psi(\bm u^{(1\ldots n)}\backslash (u^{(n-1)}_j,u^{(n-1)}_{j'},u^{(n)}_{i},u^{(n)}_{i'})  \,|\, u^{(n-1)}_j,u^{(n-1)}_{j'})
\label{f11}
}
and
\ali{
& \mr{B}^{(n)}(\bm u^{(n)}\backslash (u^{(n)}_{i},u^{(n)}_{i'}))\,\accentset{\bullet}{\Psi}_{2,2}(\bm u^{(1\ldots n-1)}\,|\,\bm u^{(n)} \backslash (u^{(n)}_{i},u^{(n)}_{i'}) \cup v) \nn\\[0.3em]
& \qu = E^{(\hat{n})}_2 \ot E^{(\hat{n})}_2 \ot \Psi(\bm u^{(1\ldots n)} \backslash (u^{(n)}_{i},u^{(n)}_{i'}) ) 
\nn\\[0.3em]
& \qq + \sum_{j} \Ga_{n}(u^{(n-1)}_{j};\bm u^{(1\ldots n)}\backslash (u^{(n-1)}_{j},u^{(n)}_{i},u^{(n)}_{i'})) \nn\\[-0.4em] 
& \qq\qu \times \sum_{1\le k<n} \Bigg( \frac{f^+(u^{(n-1)}_{j}, \tl{v})}{u^{(n-1)}_{j} - v} \, E^{(\hat{n})}_{k+2} \ot E^{(\hat{n})}_2 + \frac{1}{u^{(n-1)}_{j} - \tl{v}} \, E^{(\hat{n})}_2 \ot E^{(\hat{n})}_{k+2} \Bigg) \ot (E^{(n-1)}_{k})^*_{a^{n-1}_j} \nn\\[0.2em]
& \hspace{4.5cm} \times \Psi(\bm u^{(1\ldots n)}\backslash (u^{(n-1)}_{j},u^{(n)}_{i},u^{(n)}_{i'}) \,|\, u^{(n-1)}_{j} )
\nn\\[0.3em]
& \qq + \sum_{j<j'} \Ga_{n}((u^{(n-1)}_{j},u^{(n-1)}_{j'}); \bm u^{(1\ldots n)}\backslash (u^{(n-1)}_{j}, u^{(n-1)}_{j'},u^{(n)}_{i},u^{(n)}_{i'})) \nn\\ 
& \qq\qu \times \sum_{1\le k,l<n} \Big( \be^{(11)}_{1} E^{(\hat{n})}_{k+2} \ot E^{(\hat{n})}_{l+2} + \be^{(11)}_{2} E^{(\hat{n})}_{l+2} \ot E^{(\hat{n})}_{k+2} \Big) \ot (E^{(n-1)}_{k})^*_{a^{n-1}_j} (E^{(n-1)}_{l})^*_{a^{n-1}_{j'}} \nn\\
& \hspace{4.5cm} \times \Psi(\bm u^{(1\ldots n)}\backslash (u^{(n-1)}_{j},u^{(n-1)}_{j'},u^{(n)}_{i},u^{(n)}_{i'}) \,|\, u^{(n-1)}_{j},u^{(n-1)}_{j'}, \bm u^{(n)} ) 
\label{f22}
}
where $\be^{(21)}_{1}$, $\be^{(21)}_{2}$ and $\ga$ are given by \eqref{beta} and \eqref{alpha} except $u^{(n)}_{m_n}$ should be replaced by $v$, and 
\eqa{
\be^{(12)}_{1} &:= \frac{u^{(n-1)}_{j'}-v}{u^{(n-1)}_{j'}-u^{(n)}_{i'}} \Bigg( f^+(u^{(n-1)}_{j},u^{(n)}_{i'}) + \frac{(u^{(n-1)}_{j'}-u^{(n)}_{i'})(u^{(n-1)}_{j}-\tl{v})}{u^{(n-1)}_{j}-u^{(n)}_{i'}} \Bigg) , 
\\
\be^{(12)}_{2} &:= \frac{u^{(n-1)}_{j}-\tl{v}}{u^{(n-1)}_{j}-u^{(n)}_{i'}} \, f^+(u^{(n-1)}_{j'},u^{(n)}_{i'}) \, f^+(u^{(n-1)}_{j},u^{(n-1)}_{j'}) \\ & \qu\; + \frac{u^{(n-1)}_{j'}-\tl{v}}{u^{(n-1)}_{j'}-u^{(n)}_{i'}} \, f^+(u^{(n-1)}_{j},u^{(n)}_{i'}) \, \Bigg( u^{(n-1)}_{j} - v - \frac{1}{u^{(n-1)}_{j}-u^{(n-1)}_{j'}} \Bigg) \,,
\\
\be^{(11)}_{1} &:= \frac{f^+(u^{(n-1)}_{j},\tl{v})}{(u^{(n-1)}_{j}-v)(u^{(n-1)}_{j'}-\tl{v})} , \qq
\be^{(11)}_{2} := \frac{1}{u^{(n-1)}_{j'}-v} \Bigg( \be^{(11)}_1 + \frac{1}{u^{(n-1)}_{j}-\tl{v}} \Bigg) \,.
}
%
%
Adapting \eqref{f21} and \eqref{f22} to the relevant products in \eqref{snn-new} allows us to rewrite the latter as
\ali{
& \Ga_{\hat{n}}(u^{(n)}_i, \bm u^{(1\ldots n)}\backslash u^{(n)}_i) \,\,E^{(\hat{n})}_2 \ot E^{(\hat{n})}_1 \ot \Psi(\bm u^{(1\ldots n)} \backslash u^{(n)}_{i}) 
\nn\\[0.2em] 
& + \sum_{j} \frac{\Ga_{\hat{n}}(u^{(n)}_i, \bm u^{(1\ldots n)}\backslash u^{(n)}_i) \, \Ga_n(u^{(n-1)}_{j}; \bm u^{(1\ldots n)}\backslash (u^{(n-1)}_{j}, u^{(n)}_{i}))}{u^{(n-1)}_{j} - u^{(n)}_{i}} \nn\\ 
& \qu \times \sum_{1\le k<n} E^{(\hat{n})}_{k+2} \ot E^{(\hat{n})}_1 \ot (E^{(n-1)}_{k})^*_{a^{n-1}_j} \, \Psi(\bm u^{(1\ldots n)}\backslash (u^{(n-1)}_j,u^{(n)}_i) \,|\, u^{(n-1)}_j)  
\nn\\[0.2em]
& + \sum_{i'\ne i} \Ga_{\hat{n}}((u^{(n)}_i,u^{(n)}_{i'}); \bm u^{(1\ldots n)}\backslash (u^{(n)}_{i},u^{(n)}_{i'})) \, \frac{f^-(u^{(n)}_i,u^{(n)}_{i'}) \, f^+(u^{(n)}_i,\tl{u}^{(n)}_{i'})}{u^{(n)}_{i'} - \tl{u}^{(n)}_{i}} \nn\\
& \qu \times \Bigg(  E^{(\hat{n})}_2 \ot E^{(\hat{n})}_2 \ot \Psi(\bm u^{(1\ldots n)} \backslash (u^{(n)}_{i},u^{(n)}_{i'})) + A \bigg) \label{snn-f}
}
where
\aln{
A & := \sum_{j} \Ga_n(u^{(n-1)}_{j};\bm u^{(1\ldots n)}\backslash (u^{(n-1)}_{j},u^{(n)}_{i},u^{(n)}_{i'})) \nn\\[-0.3em] 
& \qq\; \times \sum_{1\le k<n} \Bigg( \frac{f^+(u^{(n-1)}_{j},\tl{u}^{(n)}_{i'})}{u^{(n-1)}_{j} - u^{(n)}_{i}} \, E^{(\hat{n})}_{k+2} \ot E^{(\hat{n})}_2 + \frac{1}{u^{(n-1)}_{j}-u^{(n)}_{i'}} \, E^{(\hat{n})}_{2} \ot E^{(\hat{n})}_{k+2} \Bigg) \ot (E^{(n-1)}_{k})^*_{a^{n-1}_j} \nn\\[0.2em] 
& \hspace{7.5cm} \times \Psi^{(n-1)}(\bm u^{(1\ldots n)}\backslash (u^{(n-1)}_j,u^{(n)}_i,u^{(n)}_{i'}) \,|\, u^{(n-1)}_j ) 
\nn\\[0.3em]
& \qu\; + \sum_{j<j'}  \frac{\Ga_n((u^{(n-1)}_{j}\!,u^{(n-1)}_{j'});\bm u^{(1\ldots n)}\backslash (u^{(n-1)}_{j}\!,u^{(n-1)}_{j'}\!,u^{(n)}_{i}\!,u^{(n)}_{i'}))}{(u^{(n-1)}_{j} - u^{(n)}_{i})(u^{(n-1)}_{j'} - u^{(n)}_{i'})} \nn\\[0.2em] 
& \qq\; \times \sum_{1\le k,l<n} \Big( f^+(u^{(n-1)}_{j}, u^{(n)}_{i})\, E^{(\hat{n})}_{k+2} \ot E^{(\hat{n})}_{l+2} + \theta \, E^{(\hat{n})}_{l+2} \ot E^{(\hat{n})}_{k+2} \Big) \ot (E^{(n-1)}_{k})^*_{a^{n-1}_j} (E^{(n-1)}_{l})^*_{a^{n-1}_{j'}} \nn\\[-0.5em]
& \hspace{6cm} \times \Psi(\bm u^{(1\ldots n)}\backslash (u^{(n-1)}_j,u^{(n-1)}_{j'},u^{(n)}_i,u^{(n)}_{i'}) \,|\, u^{(n-1)}_j, u^{(n-1)}_{j'} ) 
}
and 
\[
\theta := \frac{(u^{(n-1)}_{j} - u^{(n)}_{i})(u^{(n-1)}_{j'} - u^{(n)}_{i'}) + u^{(n-1)}_{j} - u^{(n)}_{i'}+1}{(u^{(n-1)}_{j} - u^{(n)}_{i'})(u^{(n-1)}_{j'} - u^{(n)}_{i})} \,.
\]
The final step is to substitute \eqref{f21}--\eqref{f22} into the difference of \eqref{Psi-n-2} and \eqref{Psi-n-1}, and \eqref{snn-f} into \eqref{snn-new}, and equate the resulting expressions.


\section{Conclusions}  

This paper is a continuation of \cite{GMR19}, where twisted Yangian based models, known as one-dimensional ``soliton non-preserving'' open spin chains, were studied by means of algebraic Bethe ansatz. The present paper extends the results of \cite{GMR19} to the odd case, when the bulk symmetry is $\mfgl_{2n+1}$ and the boundary symmetry is $\mfso_{2n+1}$. Theorem \ref{T:eig} states that Bethe vectors, defined by formula \eqref{BV}, are eigenvectors of the transfer matrix, defined by formula \eqref{tm}, provided Bethe equations \eqref{BE1} and \eqref{BE2} hold. It is important to note that Bethe equations for $Y^\pm(\mfgl_N)$-based models were first considered in \cite{Doi00,AA05}. However, the completeness of solutions of such Bethe equations is still an open question. Investigation of higher-order transfer matrices and $Q$-operators might help to shed more light on this problem. 

In Proposition~\ref{P:tf} we presented a more symmetric form of the trace formula for Bethe vectors than the one found in \cite{GMR19}. This formula can be used to obtain Bethe vectors when the number of excitations is not large since the complexity of the ``master'' creation operator grows rapidly when the total excitation number increases. This is a well-known issue of trace formulas for both closed and open spin chains. Low rank examples of the ``master'' creation operator are given in Example \ref{B-exam}. 

We also obtained recurrence relations for twisted Yangian based Bethe vectors. They are given in Propositions \ref{P:RR-even}~and~\ref{P:RR-odd} for even and odd cases, respectively. Repeated application of these relations allow us to express $Y^\pm(\mfgl_N)$-based Bethe vectors in terms of $Y(\mfgl_n)$-based Bethe vectors obeying recurrence relations found in \cite{HL17b} and recalled in Appendix \ref{app:gln}.
The recurrence relations found in this paper provide elegant expressions when the rank is small, see Examples \ref{ex:rec-even} and \ref{ex:rec-odd}. The $n=2$ even case in Example \ref{ex:rec-even} may help investigating the open fishchain studied in \cite{GJP21}.
 However, recurrence relations become rather complex when the rank is not small, especially in the odd case. This raises a natural question, if there exists an alternative (simpler) method of constructing Bethe vectors for open spin chains.
For closed spin chains the current (``Drinfeld New'') presentation of Yangians and quantum loop algebras \cite{Dri88} has played a significant role in obtaining not only recurrence relations, but also action relations, scalar products and norms of Bethe vectors, see \cite{HL17a,HL17b,HL18a,HL18b,HL20}. Thus, it is natural to expect that a current presentation of twisted Yangians could pave a fruitful path for open spin chains analysis. 

A current presentation of twisted Yangian $Y^+(\mfgl_N)$ was recently obtained in \cite{LWZ23}. (The~rank~2 case was considered earlier in \cite{Brw16}.) However, in \cite{LWZ23} a different, the so-called non-split, presentation of twisted Yangian is considered (see Chapter 2 in \cite{Mo07}), which is based on the Chevalley involution of $\mfgl_N$ and is not compatible (at least in a natural way) with the Bethe vacuum state. Nonetheless, we believe that the presentation obtained in \cite{LWZ23} may have applications in open spin chain analysis and deserves attention. For example, integrable overlaps for twisted boundary states are constructed using the non-split presentation of twisted Yangians \cite{Go24}.

Overall, the approach presented in this paper does open a door to an exploration of scalar products and norms of Bethe vectors for twisted Yangian based models. However, developing Bethe ansatz techniques in the current presentation of twisted Yangians might open a broader path to open spin chain analysis. An alternative path could be a development of separation of variable techniques along the lines of e.g.~\cite{GLMS17,RV21}.


\subsection*{Acknowledgements}

The author thanks Allan Gerrard for collaboration at an early stage of this paper and for his contribution to Section \ref{sec:tf}. The author also thanks Andrii Liashyk for explaining $Y(\mfgl_n)$-type recurrence relations and Paul Ryan for helpful discussions on applications of twisted Yangian based models.


\appendix

\section{Appendix} \label{sec:app}


\subsection{Weight grading of $Y^\pm(\mfgl_N)$} \label{app:wt}

Define an $n$-tuple $\om_i \in \Z^n$ by $(\om_{i})_{j} := \del_{ij}$ and recall the notation $\ol{\jmath} = N-j+1$. Then define weights of the elements $s_{ij}[r]$ using the following rule
\equ{
{\rm wt}\,(s_{ij}[r]) := \sum_{i\le k <j} \om_k + \sum_{\ol{\jmath}\le k < \hat{n}} \om_k \qu\text{when}\qu i<j, \;\; i+j\le N+1 
}
and require
\equ{
{\rm wt}\,(s_{\ol{\jmath}\,\ol{\imath}}[r]) = {\rm wt}\,(s_{ij}[r]) , \qu\;
{\rm wt}\,(s_{ji}[r]) = - {\rm wt}\,(s_{ij}[r]) 
}
for all $1\le i ,j \le N$. Note that ${\rm wt}\,(s_{ii}[r]) = (0,\ldots,0)\in\Z^n$. Extending linearly on all monomials this defines a weight grading on $Y^\pm(\mfgl_N)$.

The recurrence relations \eqref{RR-even} and \eqref{RR-odd} are compatible with this grading.
The master creation operator \eqref{wf} has the weight
\equ{
\bm\om := {\rm wt}\, \big(\mr{B}_N(\bm u^{(1\ldots n)})\big) = 
\begin{cases} 
(m_1, \ldots, m_{n-1}, m_n)  & \text{when}\qu \hat{n}=n , \\
(m_1, \ldots, m_{n-1}, 2m_n) & \text{when}\qu \hat{n}=n+1 
\end{cases}
}
which we assign to the corresponding Bethe vector. Then \eqref{RR-even} and \eqref{RR-odd} can be schematically written as
\equ{
\Psi^{\bm\om} = \sum_{\bm\om' \in W} s_{\bm\om'}\, \Psi^{\bm \om - \bm \om'}  
}
where $W$ is the set of weights of $s_{i,n+j}[r]$ with $1\le i\le n$ and $1\le j\le \hat{n}$, the $s_{\bm\om'}$ is a generating series of $Y^\pm(\mfgl_N)$ of weight $\bm\om'$, and all scalar factors and spectral parameter dependencies are omitted, as in \eqref{RR-even-0} and \eqref{RR-odd-0}.


\subsection{Commutativity of transfer matrices} \label{app:tm}

\begin{lemma}
Transfer matrices $\tau(u)$ defined by \eqref{tm} form a commuting family of operators.
\end{lemma}

\begin{proof}
We follow arguments in the Proof of Theorem 2.4 in \cite{Vl15}. In this proof, we will write $S_a(u)$ instead of $S^{(N)}_a(u)$ and $R_{ab}(u)$ instead of $R^{(N,N)}_{ab}(u)$. 
Then
\equ{
\tau(u)\, \tau(v) = \tr_a M^{t_a}_a(u)\, S^{t_a}_a(u) \,\tr_b M_b(v)\, S_b(v) = \tr_{ab} M^{t_a}_a(u)\, M_b(v)\, S^{t_a}_a(u)\, S_b(v) \label{A1}
}
where $t_a$ denotes the usual matrix transposition in the space labelled $a$.
Upon inserting a resolution of identity in terms of $\wh{R}$-matrices and using properties of matrix transposition and the trace (see Appendix~A in \cite{Vl15}) we rewrite the right hand side of \eqref{A1} as
\ali{
& \tr_{ab} M^{t_a}_a(u)\, M_b(v)\, (\wh{R}_{ab}^{t_a}(\tl{v}-u))^{-1} \,\wh{R}_{ab}^{t_a}(\tl{v}-u)\,  S_a^{t_a}(u)\, S_b(v) \nn\\
&= \tr_{ab} \big( M^{t_a}_a(u)\, ((\wh{R}_{ab}^{t_a}(\tl{v}-u))^{-1})^{t_b} M^{t_b}_b(v) \big)^{t_b}  \big(S_a(u)\, \wh{R}_{ab}(\tl{v}-u)\, S_b(v)\big)^{t_a} \nn\\
&= \tr_{ab} \big( M^{t_a}_a(u)\, ((\wh{R}_{ab}^{t_a}(\tl{v}-u))^{-1})^{t_b} M^{t_b}_b(v) \big)^{t_b t_a}  S_a(u)\, \wh{R}_{ab}(\tl{v}-u)\, S_b(v) .
\label{A2}
}
We insert a resolution identity in terms of $R$-matrices and use properties of matrix transposition and the trace once again. This gives
\ali{
& \tr_{ab} \big( M^{t_a}_a(u)\, ((\wh{R}_{ab}^{t_a}(\tl{v}-u))^{-1})^{t_b} M^{t_b}_b(v) \big)^{t_b t_a} \nn\\
& \hspace{4.4cm} \times (R_{ab}(u-v))^{-1}\, R_{ab}(u-v)\, S_a(u)\, \wh{R}_{ab}(\tl{v}-u)\, S_b(v) \nn\\
&= \tr_{ab} \big( ((R_{ab}(u-v))^{-1})^{t_a t_b}  M^{t_a}_a(u)\, ((\wh{R}_{ab}^{t_a}(\tl{v}-u))^{-1})^{t_b} M^{t_b}_b(v) \big)^{t_b t_a} \nn\\
& \hspace{4.4cm} \times R_{ab}(u-v)\, S_a(u)\, \wh{R}_{ab}(\tl{v}-u)\, S_b(v) . \label{A3}
}
The $R$-matrix \eqref{R(u)} satisfies
\equ{
((R_{ab}(u))^{-1})^{t_a t_b} = r(u)\, R_{ab}(-u), \qu  ((\wh{R}_{ab}^{t_a}(u))^{-1})^{t_b}  = r(u)\, \wh{R}_{ab}(-u)  \label{R=R}
}
where $r(u) := u^2/(u^2-1)$. Relations \eqref{R=R} and the dual twisted reflection equation \eqref{dtRE} imply 
\ali{
& \big( ((R_{ab}(u-v))^{-1})^{t_a t_b}  M^{t_a}_a(u)\, ((\wh{R}_{ab}^{t_a}(\tl{v}-u))^{-1})^{t_b} M^{t_b}_b(v) \big)^{t_b t_a} \nn\\
& \qu = r(u-v)\,r(\tl{v}-u) \, \big( R_{ab}(v-u)\, M^{t_a}_a(u)\, \wh{R}_{ab}(u-\tl{v})\, M^{t_b}_b(v) \big)^{t_b t_a} \nn\\
& \qu = r(u-v)\,r(\tl{v}-u)\, \big( M^{t_b}_b(v)\, \wh{R}_{ab}(u-\tl{v})\,  M^{t_a}_a(u)\,R_{ab}(v-u)  \big)^{t_b t_a} \nn\\
& \qu = \big(  M^{t_b}_b(v)\, ((\wh{R}_{ab}^{t_a}(\tl{v}-u))^{-1})^{t_b}   M^{t_a}_a(u)\, ((R_{ab}(u-v))^{-1})^{t_a t_b}  \big)^{t_b t_a} . \label{A4}
}
Applying \eqref{A4} to the right hand side of \eqref{A3} gives
\ali{
& \tr_{ab} \big( M^{t_b}_b(v) \, ((R_{ab}^{t_a}(\tl{v}-u))^{-1})^{t_b} M^{t_a}_a(u)\, ((R_{ab}(u-v))^{-1})^{t_a t_b} \big)^{t_b t_a} \nn\\
& \hspace{4.4cm} \times  S_b(v) \, \wh{R}_{ab}(\tl{v}-u) \, S_a(u)\, R_{ab}(u-v) . \label{A5}
}
It remains to repeat similar steps as above in reversed order and use cyclicity of the trace. The~\eqref{A5} then becomes
\ali{
& \tr_{ab} (R_{ab}(u-v))^{-1} \big( M^{t_b}_b(v) \, ((R_{ab}^{t_a}(\tl{v}-u))^{-1})^{t_b} M^{t_a}_a(u) \big)^{t_b t_a} \nn\\
& \hspace{4.5cm} \times S_b(v) \, \wh{R}_{ab}(\tl{v}-u) \, S_a(u)\, R_{ab}(u-v) \nn\\
& = \tr_{ab} \big( M^{t_b}_b(v) \, ((R_{ab}^{t_a}(\tl{v}-u))^{-1})^{t_b} M^{t_a}_a(u) \big)^{t_b t_a} \, S_b(v) \, \wh{R}_{ab}(\tl{v}-u) \, S_a(u) \nn\\
& = \tr_{ab} \big( M^{t_b}_b(v) \, ((R_{ab}^{t_a}(\tl{v}-u))^{-1})^{t_b} M^{t_a}_a(u) \big)^{t_b} \, \big(S_b(v) \, \wh{R}_{ab}(\tl{v}-u) \, S_a(u) \big)^{t_a} \nn\\
& = \tr_{ab} (R_{ab}^{t_a}(\tl{v}-u))^{-1} M^{t_b}_b(v) \,  M^{t_a}_a(u)  \, S_b(v) \,  S_a(u)^{t_a} \, \wh{R}^{t_a}_{ab}(\tl{v}-u)  \nn\\
& = \tr_b  M^{t_b}_b(v) \,  S_b(v) \, \tr_a M^{t_a}_a(u)  \, S_a(u)^{t_a} \, \wh{R}^{t_a}_{ab}(\tl{v}-u) = \tau(v)\,\tau(u) 
}
as required.
\end{proof}


\subsection{A recurrence relation for $Y(\mfgl_n)$-based models} \label{app:gln}

The Proposition below is a restatement of Proposition 4.2 in \cite{HL17b} in terms of notation introduced in Section \ref{sec:not} and Proposition \ref{P:RR-gln}. Recall \eqref{Lak}:
\[
\La_k(z; \bm v^{(1\ldots n-1)}) = f^-(z, \bm v^{(k-1)})\,f^+(z, \bm v^{(k)}) \,\la_{k}(z) \,.
\]
Let $t_{ij}(z)$ denote the standard generating series of $Y(\mfgl_n)$.

\begin{prop}
$Y(\mfgl_n)$-based Bethe vectors satisfy the recurrence relation
\equ{
\Phi(\bm v^{(1\ldots n-1)}) = \sum_{1\le i<n} \sum_{\substack{|\bm v^{(r)}_{\rm II}|=1\\i\le r < n-1}} \prod_{i< k<n} 
\frac{\La_{k}(\bm v^{(k-1)}_{\rm II}; \bm v^{(1\ldots n-1)}_{\rm I})}{\bm v^{(k-1)}_{\rm II}-\bm v^{(k)}_{\rm II}} \; t_{in}(\bm v^{(n-1)}_{\rm II})\, \Phi(\bm v^{(1\ldots n-1)}_{\rm I})  
\label{gln-rec}
}
where $\bm v^{(n-1)}_{\rm II} = v^{(n-1)}_j$ for any $1 \le j \le m_{n-1}$ and $\bm v^{(s)}_{\rm II} = \emptyset$ for all $1\le s< i$ so that
\eqn{
\bm v^{(1\ldots n-1)}_{\rm I} = (\bm v^{(1)}, \ldots, \bm v^{(i-1)}, \bm v^{(i)}_{\rm I}, \ldots , \bm v^{(n-1)}_{\rm I}) .
}

\end{prop}

\begin{exam}
When $n=4$, the recurrence relation \eqref{gln-rec} gives
\ali{
\Phi(\bm v^{(1,2,3)}) &= 
t_{34}(\bm v^{(3)}_{\rm II}) \, \Phi((\bm v^{(1)},\bm v^{(2)}, \bm v^{(3)}_{\rm I})) \nn\\ 
& \qu + \sum_{|\bm v^{(2)}_{\rm II}|=1} t_{24}(\bm v^{(3)}_{\rm II}) \, \Phi((\bm v^{(1)},\bm v^{(2)}_{\rm I}, \bm v^{(3)}_{\rm I})) \, \frac{f^-(\bm v^{(2)}_{\rm II},\bm v^{(2)}_{\rm I}) \, f^+(\bm v^{(2)}_{\rm II},\bm v^{(3)}_{\rm I})\, \la_3(\bm v^{(2)}_{\rm II})}{\bm v^{(2)}_{\rm II} - \bm v^{(3)}_{\rm II}} \nn\\
& \qu + \sum_{\substack{|\bm v^{(r)}_{\rm II}|=1\\r=1,2}} t_{14}(\bm v^{(3)}_{\rm II}) \, \Phi((\bm v^{(1)}_{\rm I},\bm v^{(2)}_{\rm I}, \bm v^{(3)}_{\rm I})) \, \frac{f^-(\bm v^{(1)}_{\rm II},\bm v^{(1)}_{\rm I}) \, f^+(\bm v^{(1)}_{\rm II},\bm v^{(2)}_{\rm I})\, \la_2(\bm v^{(1)}_{\rm II})}{\bm v^{(1)}_{\rm II} - \bm v^{(2)}_{\rm II}} \nn\\[-0.7em] 
& \hspace{5.87cm} \times \frac{f^-(\bm v^{(2)}_{\rm II},\bm v^{(2)}_{\rm I}) \, f^+(\bm v^{(2)}_{\rm II},\bm v^{(3)}_{\rm I})\, \la_3(\bm v^{(2)}_{\rm II})}{\bm v^{(2)}_{\rm II} - \bm v^{(3)}_{\rm II}} 
}
where ${\bm v}^{(3)}_{\rm II} = v^{(3)}_j$ for any $1\le j\le m_3$, and 
\end{exam}





\nolinenumbers

\end{document}